\documentclass[12pt]{article}

\usepackage[utf8]{inputenc}
\usepackage[T1]{fontenc}
\usepackage[margin=1in]{geometry}
\usepackage{amsmath,amssymb,amsthm,mathtools,bm}
\usepackage{booktabs,comment}
\usepackage{enumitem}
\usepackage{graphicx}
\usepackage{subcaption}
\usepackage{setspace}
\usepackage[authoryear]{natbib}
\usepackage[colorlinks=true,linkcolor=blue,citecolor=blue,urlcolor=blue]{hyperref}

\newtheorem{assumption}{Assumption}
\newtheorem{theorem}{Theorem}

\newtheorem{lemma}{Lemma}

\newcommand{\calD}{\mathcal{D}}

\newcommand{\ind}{\mathbf{1}}

\newcommand{\abs}[1]{\left\lvert #1\right\rvert}

\title{Robust Inference for Dyadic Data with Dependent Ordered Nodes}
\author{%
{\large \textbf{Ulrich Hounyo}}\thanks{Department of Economics, University at Albany - State University of New York, Albany, NY 12222, United States. E-mail: khounyo@albany.edu.}
\and
{\large \textbf{Jiahao Lin}}\thanks{School of Digital Economy and Management, Fuyao University of Science and Technology, Fuzhou, 350109, China; 
and Digital Governance Laboratory, Fuyao University of Science and Technology. 
E-mail: jhlin@fyust.edu.cn.}
\and
{\large \textbf{Xiaojun Song}}\thanks{Department of Business Statistics and Econometrics, Peking University, Beijing, China. E-mail: sxj@gsm.pku.edu.cn.}
}
\date{\today}

\begin{document}
\maketitle

\begin{abstract}
{Dyadic regression models are commonly analyzed under the conventional dyadic dependence framework, where two observations may be dependent only if the corresponding dyads share a node. This paper studies inference when nodes are ordered and nearby nodes are exposed to common latent shocks, so that dyads with no shared endpoint may still be dependent. Although each additional covariance term may be weak, the number of nearby-node dyad pairs grows with the sample size, making their aggregate contribution asymptotically non-negligible. We develop an inferential framework for dyadic arrays with ordered-node dependence and propose two variance estimators: a dependent-node dyadic cluster-robust variance estimator that retains covariance terms between dyads with nearby endpoints, and a row-column moving-block jackknife method that deletes adjacent blocks of nodes together with all dyads touching those nodes. We establish the asymptotic validity of both procedures under weak dependence along the ordered node index. Monte Carlo evidence shows improvements in size control, with the jackknife procedure displaying comparatively stable finite-sample performance. An application to international trade gravity regressions shows that accounting for ordered-node dependence substantially weakens the statistical evidence for free trade agreement effects.}
\end{abstract}

\noindent\textbf{Keywords:} dyadic data, locally dependent, cluster-robust variance estimation, jackknife.\\
\textbf{JEL Classification:} C12, C15, C21, C31.

\newpage

\section{Introduction}

Dyadic data arise when an observation is attached to a pair of units. Examples
include trade between two countries, conflict between two states, financial
exposure between two banks, collaboration between two firms, and links in a
social network. A central feature of such data is that observations sharing a
node are generally dependent. For example, trade flows involving the same
country may be correlated because of country-specific shocks, and links
involving the same individual may be correlated because of individual
heterogeneity. This observation motivates the conventional dyadic
cluster-robust variance estimator, which keeps covariance terms between dyads
that share at least one endpoint.

This conventional dyadic asymptotic paradigm implicitly imposes a sparse
dependency graph: two dyads may be dependent when they share a node, but dyads
with no common endpoint are treated as asymptotically independent. This
restriction is natural in dissociated or exchangeable dyadic arrays, but it can
be too restrictive when nodes are ordered and nearby nodes are themselves
dependent. Suppose, for example, that bilateral trade flows are analyzed using a dyadic regression.
Standard dyadic inference allows dependence between Saudi Arabia-Japan and Saudi Arabia-South Korea because the two dyads share Saudi Arabia, but it treats Saudi Arabia-Japan and Kuwait-South Korea as asymptotically independent because they share no endpoint. This restriction can be implausible in many applications. The two dyads may both be affected by common oil-market shocks, global energy demand, shipping disruptions, or changes in the macroeconomic conditions of high-income importing economies. More generally, dyads that do not share a country may still be dependent when their endpoint countries are close along an economically meaningful dimension. 

The ordering of the nodes is treated as given throughout the paper. {The ordering need not correspond to a physical ordering such as time or geography. It only needs to represent a meaningful one-dimensional proximity structure along which node-level dependence decays.} This is
appropriate in applications where the ordering is determined by an exogenous
and observable characteristic. For example, in gravity applications, countries can be ordered by GDP per capita, market size, or trade exposure, so that countries at similar levels of development or global-market integration are allowed to have more strongly dependent dyadic shocks. More generally, nodes may be ordered by a substantive dimension that governs dependence: firms by technological proximity, banks by balance-sheet characteristics, and individuals by cohort, location, or network position. When the ordering is estimated from the same data used
for inference, additional first-stage uncertainty may arise. Extending the
theory to estimated orderings is an important topic for future work.

This paper studies dyadic regression inference under ordered-node dependence.
We model node-level shocks as a weakly dependent process indexed by the ordered
node labels. Consequently, two dyads may be dependent not only when they share a
node, but also when one endpoint of the first dyad is close to one endpoint of
the second dyad. The resulting dependence graph is substantially denser than the
standard dyadic dependency graph. The key asymptotic phenomenon is that
conventional dyadic clustering omits an entire class of covariance terms.
Although individual omitted covariance terms may be weak, their aggregate
contribution is asymptotically non-negligible because the number of
nearby-node dyad pairs diverges with the sample size. Hence, the asymptotic
variance is no longer representable by conventional dyadic clustering.

We show that the conventional dyadic asymptotic framework fundamentally breaks down under ordered-node dependence and develop a new inferential framework for
this broader class of dyadic arrays. After a first-order projection, the
leading component of the dyadic score behaves like a weakly dependent sequence
indexed by nodes. Valid inference must therefore account simultaneously for
shared-node dependence and local dependence along the ordered node index.

We propose two variance estimators. The first is a dependent-node dyadic
cluster-robust variance estimator, abbreviated as the DN-Dyadic CRVE. It
retains covariance terms between dyads whose endpoint nodes are close in the
ordered-node metric. The second is a row-column moving-block jackknife procedure,
abbreviated as the JK-DN-Dyadic CRVE. It deletes adjacent blocks of nodes and
removes all dyads touching the deleted block. This deletion rule provides a natural dyadic analog of a moving-block jackknife because each dyadic observation is
attached to two endpoint nodes.

The paper contributes to the literature on dyadic and network inference.
Important contributions to dyadic, multiway clustered, and exchangeable-array
inference include, e.g., \citet{cameron2011robust}, \citet{thompson2011simple},
\citet{aronow2015cluster}, \citet{tabord2019inference},
\citet{menzel2021bootstrap}, \citet{davezies2021empirical}, and
\citet{davezies2025analytic}. Related work on network formation and sparse
network asymptotics includes \citet{fafchamps2007formation} and
\citet{graham2024sparse}. These papers provide tools for important dyadic and
network settings, but the conventional dyadic clustering logic is based on
exact node overlap. Our setting differs because the node labels carry an
ordering, and nearby nodes can generate additional dependence between dyads that
do not share an endpoint.

The paper is also closely related to \citet{jochmans2026two}, who studies
non-exchangeable dyadic data with dependence that decays over an ordered index
distance. The distinction is useful to make explicit. \citet{jochmans2026two}
constructs an estimator using an estimated first-order node projection. By
contrast, our DN-Dyadic CRVE is written directly in terms of dyadic regression
scores and dyad-pair covariance terms, which makes explicit which covariance
terms are added relative to conventional dyadic clustering. We also develop a
row-column moving-block jackknife procedure, motivated by the two-endpoint structure of
dyadic observations, and show that it provides significantly improved finite-sample performance
in the simulations. In addition, our theory covers a degenerate Gaussian case
in which the first-order node projection does not contribute. {The ordered-node framework gives rise to two distinct asymptotic regimes. When the first-order node projection is nondegenerate, the estimator converges at the \(\sqrt{n}\) rate, where \(n\) denotes the number of nodes. This rate reflects the effective node-level dependence induced by the ordered-node structure. In contrast, when the first-order projection is degenerate, the node-level component vanishes, and the convergence rate increases to \(n\), with the leading stochastic variation driven by the dyad-level component.
}

The paper is connected more broadly to recent work on clustered inference with
serial or local dependence. \citet{chiang2023standard}, 
\citet{chen2023fixed},  and \citet{hounyo2024wild} study two-way clustered regressions with serially
correlated time effects. Although their setting is not dyadic, the motivation
is related: exact cluster membership may not fully capture dependence when one
dimension is ordered. Our method also builds on the literature on jackknife
cluster-robust inference, including \citet{hansen2022jackknife},
\citet{mackinnon2023leverage},  \citet{mackinnon2024jackknife}, and \citet{hounyo2025jackknife}. The distinctive feature here is the row–column deletion rule, which removes a block of nodes together with all dyads attached to those nodes. Regression estimators based on dyadic data with dependent ordered nodes naturally lend themselves to this novel jackknife procedure.

We establish the asymptotic validity of the DN-Dyadic and JK-DN-Dyadic CRVEs
under standard moment and weak-dependence conditions. In the nondegenerate case,
both estimators consistently estimate the long-run variance generated by the
ordered node-level projection. In the degenerate Gaussian case, they adapt to
the dyad-level source of variation. Monte Carlo evidence illustrates that
conventional dyadic clustering can over-reject when ordered-node dependence is
present, while the proposed methods, especially the jackknife version, deliver
more reliable size control. {An empirical application to international trade gravity regressions further shows that accounting for ordered-node dependence can substantially weaken the statistical evidence for free trade agreement effects on bilateral manufacturing trade flows.}

The remainder of the paper is organized as follows. Section \ref{sec:model}
introduces the dyadic regression model and ordered-node dependence. Section
\ref{sec:estimators} defines the DN-Dyadic CRVE and the JK-DN-Dyadic CRVE.
Section \ref{sec:theory} presents the asymptotic validity results. Section
\ref{sec:simulation} reports the simulation evidence. Section \ref{sec:empirical_trade} illustrates the practical relevance of the
proposed approach through an empirical application. Section
\ref{sec:conclusion} concludes. Proofs are collected in the Appendix.

\section{Model and Dependence Structure}
\label{sec:model}

\subsection{Dyadic regression}

Let $i,j\in\{1,\ldots,n\}$ index nodes. We observe undirected dyadic data, with one observation for each unordered pair
\[
\mathcal{D}_{n}
=
\{(i,j):1\leq i<j\leq n\},
\qquad
M_n=\abs{\mathcal{D}_{n}}=\frac{n(n-1)}{2}.
\]
For each dyad $(i,j)\in\mathcal{D}_{n}$, consider the linear regression
model
\begin{equation}
    y_{ij}=x_{ij}'\beta+u_{ij},
    \label{eq:model}
\end{equation}
where $x_{ij}\in\mathbb{R}^{K}$ includes a constant, $\beta\in\mathbb{R}^{K}$
is the parameter of interest, and $u_{ij}$ is the regression disturbance.
The dimension $K$ is fixed. Stacking observations over
$(i,j)\in\mathcal{D}_{n}$ gives
\[
    y=X\beta+u,
\]
where $X$ is the $M_n\times K$ matrix of regressors. The OLS estimator is
\begin{equation}
    \widehat{\beta}=(X'X)^{-1}X'y.
    \label{eq:ols}
\end{equation}
Let
\(
    \widehat{u}_{ij}=y_{ij}-x_{ij}'\widehat{\beta},
    \ 
    \widehat{s}_{ij}=x_{ij}\widehat{u}_{ij},
\)
and, for the population score, write
\(
    s_{ij}=x_{ij}u_{ij}.
\)
Throughout the paper, the score $s_{ij}$ is a $K$-dimensional vector. We
focus on inference for a fixed scalar contrast $a'\beta$, where
$a\in\mathbb{R}^{K}$ is nonzero and does not depend on $n$. Given a
variance estimator $\widehat{V}$ for $\widehat{\beta}$, the corresponding
$t$ statistic is
\(
    \widehat{t}
    =
    \frac{a'(\widehat{\beta}-\beta_{0})}
    {\sqrt{a'\widehat{V}a}}.
\)

The OLS estimator satisfies the usual score expansion
\begin{equation}
    \widehat{\beta}-\beta
    =
    (X'X)^{-1}\sum_{(i,j)\in\mathcal{D}_{n}}x_{ij}u_{ij}
    =
    (X'X)^{-1}\sum_{(i,j)\in\mathcal{D}_{n}}s_{ij}.
    \label{eq:ols_score_expansion}
\end{equation}
Thus, the dependence structure relevant for inference is the dependence
structure of the dyadic score array $\{s_{ij}:(i,j)\in\mathcal{D}_{n}\}$.

\subsection{Ordered-node dependence}

The conventional dyadic dependence assumption allows two dyadic scores
$s_{ij}$ and $s_{pq}$ to be dependent only when the two dyads share at
least one endpoint, that is, when
\[
    \{i,j\}\cap\{p,q\}\neq \varnothing .
\]
This assumption is natural when the dyadic observations are dissociated
after conditioning on independent node-specific latent variables. In many
applications, however, nodes have a meaningful order. For example, the node
index may represent time, geography along a line, birth cohort, firm rank,
or another ordering along which nearby nodes are more similar than distant
nodes.\footnote{Throughout the paper, the ordering is treated as given. This covers
settings in which the ordering is determined by an exogenous observable
characteristic, such as GDP per capita, trade exposure, time, cohort, or a pre-specified ranking.
If the ordering is estimated from the same data used for inference, additional
first-stage uncertainty may affect the limiting distribution. We leave a formal
treatment of estimated orderings to future work.} In such settings, two dyads may be dependent even when they do not
share a node.

To accommodate this feature, we allow the latent node variables to be
weakly dependent over the ordered node index. We describe the dependence structure using a latent-variable representation
in the spirit of the Aldous-Hoover-Kallenberg (AHK, \citet{aldous1981representations};
 \citet{hoover1979relations};  \citet{kallenberg1989representation}) representation, but adapted
to ordered weakly dependent node variables. 

\begin{assumption}[Ordered-node dyadic representation]
\label{ass:representation}
For each $(i,j)\in\mathcal{D}_{n}$,
\begin{equation}
    (y_{ij},x_{ij},u_{ij})
    =
    h(Z_i,Z_j,Q_{ij}),
    \label{eq:kernel_rep}
\end{equation}
where $\{Z_i:i\geq 1\}$ is a strictly stationary weakly dependent sequence,
$\{Q_{ij}:1\leq i<j\}$ are i.i.d. dyad-level shocks, and
$\{Q_{ij}:1\leq i<j\}$ is independent of $\{Z_i:i\geq 1\}$.
The function $h$ is symmetric in its first two arguments in the sense
needed for undirected dyadic observations.
\end{assumption}

Assumption~\ref{ass:representation} is an ordered-node version of the
usual latent-variable representation for dyadic data. The difference is
that the node-level variables $\{Z_i\}$ are not required to be independent.
If $\{Z_i\}$ is independent across $i$, then dyads with no common endpoint
are independent conditional on their node variables, and the model reduces
to the usual dissociated dyadic setting. If instead $\{Z_i\}$ is locally dependent over the ordered node index, then two dyads can be
dependent even when they do not share a node. For example, the scores
$s_{ij}$ and $s_{pq}$ may be correlated through dependence between $Z_i$
and $Z_p$, between $Z_i$ and $Z_q$, between $Z_j$ and $Z_p$, or between
$Z_j$ and $Z_q$.

The relevant notion of distance between two dyads is therefore the minimum
distance between their endpoint nodes. For dyads
$d=(i,j)$ and $d'=(p,q)$, define
\begin{equation}
    \Delta(d,d')
    =
    \Delta\bigl((i,j),(p,q)\bigr)
    =
    \min\{
    |i-p|,\ |i-q|,\ |j-p|,\ |j-q|
    \}.
    \label{eq:endpoint_distance}
\end{equation}
 Conventional dyadic clustering keeps
only dyad pairs with $\Delta(d,d')=0$. Ordered-node dependence also generates
covariance terms for dyad pairs with $0<\Delta(d,d')\leq L$. For any fixed
local neighborhood, the number of such dyad pairs grows with $n$. Thus, even
when each individual covariance is small, the aggregate contribution of these
nearby-endpoint covariance terms can remain first order. This is why the
standard dyadic variance formula is not generally valid under ordered-node
dependence.

The ordered-node representation implies a useful decomposition of the
dyadic score. Define
\(
    \mu = E[s_{ij}],
\)
where stationarity makes the expectation independent of $(i,j)$. Let \(F\) denote the common marginal distribution of the node variable. The first-order node projection is defined as \begin{equation} \gamma_i = \int E[s_{ij}\mid Z_i,Z_j=z]\,dF(z)-\mu, \label{eq:gamma_def} \end{equation} where \(j\ne i\) denotes a generic node index and the integral is taken with respect to the marginal law \(F\), rather than the conditional law of \(Z_j\) given \(Z_i\).
Define the second-order node interaction
\begin{equation}
   \xi_{ij}=\xi(Z_i,Z_j)
    =
    E[s_{ij}\mid Z_i,Z_j]
    -
    \gamma_i
    -
    \gamma_j
    -
    \mu,
    \label{eq:xi_def}
\end{equation}
and the dyad-level residual component
\begin{equation}
    \zeta_{ij}
    =
    s_{ij}-E[s_{ij}\mid Z_i,Z_j].
    \label{eq:zeta_def}
\end{equation}
Then, for each $(i,j)\in\mathcal D_n$,
\begin{equation}
    s_{ij}
    =
    \mu+\gamma_i+\gamma_j+\xi_{ij}+\zeta_{ij}.
    \label{eq:score_decomp}
\end{equation}
This decomposition is the node-dependent dyadic analogue of a Hoeffding projection, but its
interpretation differs from the conventional dyadic or two-way clustered case.
The first-order node component \(\gamma_i+\gamma_j\) captures the contribution
of node-level heterogeneity to the score. Because the ordered nodes may be
dependent, \(\gamma_i\) and \(\gamma_j\) are not independent in general.
Moreover, \(\xi_{ij}\) is the second-order component associated with the pair
of node variables \((Z_i,Z_j)\). Under ordered-node dependence, this component
may remain correlated with the first-order node component, unlike in the
standard independent-node Hoeffding decomposition. Finally, \(\zeta_{ij}\)
denotes the residual dyad-specific component after conditioning on
\((Z_i,Z_j)\). The components satisfy the following properties:
\[
    E[\gamma_i]=0,\qquad
    \int\xi(Z_i,Z_j=z)dF(z)=0,\qquad
     \int\xi(Z_i=z,Z_j)dF(z)=0,\qquad
    E[\zeta_{ij}\mid Z_i,Z_j]=0.
\]

\paragraph{Example 1:} Consider a simple example with \(E[Z_i]=0\), \(E[Z_i^2]=1\), \(E[Z_i^3]\neq 0\), and ordered-node dependence satisfying \(E[Z_j\mid Z_i]=\rho^{|i-j|} Z_i\) with \(\rho\neq 0\). Let \(s_{ij}=Z_i+Z_j+Z_iZ_j\). Then \(\mu=0\), and the marginal-projection definition gives \(\gamma_i=\int (Z_i+z+Z_i z)\,dF(z)=Z_i\) and \(\gamma_j=Z_j\). The second-order component is \(\xi_{ij}=Z_iZ_j\). It is degenerate with respect to marginal integration because, for fixed \(Z_i\), \(\int \xi_{ij}\,dF(Z_j)=Z_i\int z\,dF(z)=0\). However, under the true dependent joint law, \(E[\gamma_i\xi_{ij}]=E[Z_i^2Z_j] =E[Z_i^2E(Z_j\mid Z_i)]=\rho^{|i-j|} E[Z_i^3]\neq 0\). Thus, although \(\xi_{ij}\) is marginally degenerate, it need not be orthogonal to the first-order node component under ordered-node dependence.

Therefore, the first-order node projection is the leading component of the
average score whenever it is nondegenerate. Summing
\eqref{eq:score_decomp} over all dyads gives
\begin{align}
    \frac{1}{M_n}\sum_{(i,j)\in\mathcal{D}_{n}}s_{ij}
    &=
    \mu
    +
    \frac{1}{M_n}\sum_{(i,j)\in\mathcal{D}_{n}}(\gamma_i+\gamma_j)
    +
    \frac{1}{M_n}\sum_{(i,j)\in\mathcal{D}_{n}}(\xi_{ij}+\zeta_{ij})
    \nonumber \\
    &=
    \mu
    +
    \frac{2}{n}\sum_{i=1}^{n}\gamma_i
    +
    \frac{1}{M_n}\sum_{(i,j)\in\mathcal{D}_{n}}(\xi_{ij}+\zeta_{ij}).
    \label{eq:average_score_decomp}
\end{align}
The second equality follows because each node appears in exactly $n-1$
dyads and
\[
    \frac{1}{M_n}\sum_{(i,j)\in\mathcal{D}_{n}}(\gamma_i+\gamma_j)
    =
    \frac{n-1}{M_n}\sum_{i=1}^{n}\gamma_i
    =
    \frac{2}{n}\sum_{i=1}^{n}\gamma_i.
\]

Equation~\eqref{eq:average_score_decomp} is central. It shows that the
average dyadic score behaves, to first order, like an average of the
ordered node-level process $\{\gamma_i\}$. Therefore, if $\{\gamma_i\}$ is
locally dependent over $i$, the asymptotic variance of the OLS estimator
depends on the long-run covariance of the node projection. The remaining
terms $\xi_{ij}$ and $\zeta_{ij}$ are of smaller order under the
nondegenerate first-order projection condition imposed below.

\section{Variance Estimators}
\label{sec:estimators}

This section defines the two variance estimators studied in the paper.
Both estimators are designed for dyadic data with ordered-node dependence.
The first estimator is a sandwich-form variance estimator that keeps covariance
terms between dyads whose endpoint nodes are close. We call it the
dependent-node dyadic CRVE, abbreviated as DN-Dyadic CRVE. The second
estimator is a row-column moving-block jackknife analog. We call it the
dependent-node dyadic jackknife CRVE, abbreviated as JK-DN-Dyadic CRVE.

\subsection{Dependent-node dyadic CRVE}
\label{subsec:dn_dyadic_crve}

The usual dyadic CRVE is based on the sparse dyadic dependency graph in which
two dyads are neighbors only when they share a node. Equivalently, it assigns a 
nonzero weight to the pair of dyads $(i,j)$ and $(p,q)$ only when
\(
    \{i,j\}\cap\{p,q\}\neq\varnothing .
\)
Under ordered-node dependence, this graph is misspecified. Dyads with no common
endpoint may still have correlated scores when one endpoint of the first dyad is
close to one endpoint of the second dyad. Therefore, the variance estimator must
enlarge the dyadic neighborhood from exact endpoint overlap to nearby endpoint
overlap. This is not only a finite-sample correction: the omitted covariance
terms accumulate asymptotically because the number of nearby-endpoint dyad
pairs diverges with the number of nodes.

Under the dependent-node dyadic framework,  $(i,j)$ and $(p,q)$ can be dependent when $i$ is
close to $p$ or $q$, or when $j$ is close to $p$ or $q$, even if the two dyads have no
endpoint in common. For dyads $d=(i,j)$ and $d'=(p,q)$, recall that  $\Delta(d,d')$ in \eqref{eq:endpoint_distance} extends the usual dyadic-neighborhood relation to the
ordered-node setting.

 Let 
\begin{equation}
    k_L(h)=\left(1-\frac{|h|}{L}\right)_{+}
    =
    \begin{cases}
    1-|h|/L, & |h|<L,\\
    0, & |h|\geq L,
    \end{cases}
    \label{eq:bartlett}
\end{equation}
be the Bartlett kernel, where $L$ denotes the bandwidth or block length. The DN-Dyadic CRVE meat is
\begin{equation}
    \widehat{\Sigma}_{\mathrm{DN}}
    =
    \sum_{(i,j)\in\mathcal{D}_n}
    \sum_{(p,q)\in\mathcal{D}_n}
    k_L\!\left(\Delta\bigl((i,j),(p,q)\bigr)\right)
    \widehat{s}_{ij}\widehat{s}_{pq}'.
    \label{eq:dn_dyadic_meat}
\end{equation}
The corresponding variance estimator for $\widehat{\beta}$ is
\begin{equation}
    \widehat{V}_{\mathrm{DN}}
    =
    (X'X)^{-1}\widehat{\Sigma}_{\mathrm{DN}}(X'X)^{-1}.
    \label{eq:dn_dyadic_crve}
\end{equation}

Although \eqref{eq:dn_dyadic_meat} does not require the subtraction term
appearing in conventional dyadic CRVE, this does not by itself guarantee positive semidefiniteness. The reason is that $\Delta\bigl((i,j),(p,q)\bigr)$ is not a usual linear distance on one index. It is a minimum over four endpoint distances. Such a minimum distance over endpoints can destroy positive semidefiniteness.\footnote{{In the simulations and empirical application, non-positive semidefiniteness occurs rarely and does not materially affect inference. Standard eigenvalue-adjustment techniques may nevertheless be applied in practice if desired.}}

A useful way to interpret the DN-Dyadic estimator is through the projection decomposition of the dyadic score. In the nondegenerate case, the leading term is the first-order node projection. {So conceptually, the DN-Dyadic estimator extends HAC variance estimation to dyadic arrays by replacing temporal distance with an endpoint-distance metric between dyads.}   In the degenerate case, the first-order node projection is absent, and the leading variation comes from the residual dyad-level component. {The same estimator therefore accommodates two asymptotic regimes:}  it estimates a long-run variance over ordered nodes in the nondegenerate regime, while reducing to a variance estimator for residual dyad shocks in the degenerate regime.

A related but not identical construction is obtained by forming node-level
scores and applying a standard HAC estimator to the ordered sequence
 \begin{align} \sum_{r=1}^{n}\sum_{s=1}^{n} k_L(r-s)\widehat{G}_{r}\widehat{G}_{s}',\qquad\widehat{G}_{r} = \sum_{(i,j)\in\mathcal{D}_{n}:r\in\{i,j\}} \widehat{s}_{ij}, \qquad r=1,\ldots,n. \label{eq:HAC node}\end{align}
It can be expanded as
\[
  \sum_{(i,j)\in\calD_n}\sum_{(p,q)\in\calD_n}
  \{k_L(|i-p|)+k_L(|i-q|)+k_L(|j-p|)+k_L(|j-q|)\}
  \widehat s_{ij}\widehat s_{pq}'.
\]
This expression is not algebraically identical to \eqref{eq:dn_dyadic_meat}. The estimator in \eqref{eq:HAC node} assigns a separate kernel weight to each close endpoint pairing, whereas \eqref{eq:dn_dyadic_meat} assigns a single kernel weight according to the closest endpoint distance. Hence, the two weighting schemes differ for dyad pairs with more than one close endpoint pairing. Another related distinction is that the node-level HAC representation in \eqref{eq:HAC node} counts the same dyad through both of its endpoints. In particular, for the self-pair \((i,j)=(p,q)\), the two zero-distance endpoint pairings \((i,p)\) and \((j,q)\) both contribute, producing a double-counting term. Therefore, the dyad-pair representation in \eqref{eq:dn_dyadic_meat} is the preferred definition.\footnote{Under the bandwidth conditions imposed below, and after properly accounting for the corresponding double-counting term, this difference is asymptotically negligible under the normalization used for the dyadic meat in the nondegenerate node-dependence case.
}

The bandwidth $L$ controls how far the estimator looks along the ordered
node index. A larger $L$ includes more covariance terms and is appropriate
when dependence between nearby nodes is stronger or more persistent. A
smaller $L$ reduces variability when dependence decays quickly. In the implementation, $L$ is selected from the node-score process
$\{\widehat{G}_{r}\}_{r=1}^{n}$. The detailed process is available in Appendix \ref{app:implementation}. 
\subsection{Row-column moving-block jackknife}
\label{subsec:jk}

We next define the jackknife analog of the DN-Dyadic CRVE. The key idea is
to delete a moving block of nodes and remove all dyads touching that block.
This is a row-column deletion: deleting node block $B_\ell$ removes both the
rows and the columns associated with those nodes in the dyadic array.

For $\ell=1,\ldots,n-L+1$, define the overlapping
node block\footnote{The JK-DN-Dyadic CRVE uses the same bandwidth \(L\) as the DN-Dyadic CRVE, which ensures consistency across the two implementations. One could instead recompute the bandwidth separately for each jackknife-deleted sample, after removing all dyads that touch the deleted block of nodes. We find that this alternative implementation produces no significant change in the simulation results.}
\begin{equation}
    B_\ell=\{\ell,\ell+1,\ldots,\ell+L-1\}.
    \label{eq:block}
\end{equation}
The set of dyads touching $B_\ell$ is
\begin{equation}
    \mathcal{A}_\ell
    =
    \{(i,j)\in\mathcal{D}_n:
    i\in B_\ell \text{ or } j\in B_\ell\}.
    \label{eq:touch_block}
\end{equation}
The delete-block dyadic sample is
\(
    \mathcal{D}_{n,-\ell}
    =
    \mathcal{D}_{n}\setminus\mathcal{A}_\ell.
\)
The corresponding delete-block estimator is
\begin{equation}
    \widetilde{\beta}_{(-\ell)}
    =
    \left(
    \sum_{(i,j)\in\mathcal{D}_{n,-\ell}}x_{ij}x_{ij}'
    \right)^{+}
    \sum_{(i,j)\in\mathcal{D}_{n,-\ell}}x_{ij}y_{ij},
    \label{eq:beta_minus_block}
\end{equation}
where $A^{+}$ denotes the Moore-Penrose inverse. In regular cases,
$A^{+}$ equals the usual inverse; it is used here only to make the definition well-defined when a delete-block design matrix is nearly singular in finite samples.

The uncorrected row-column moving-block jackknife variance estimator is
\begin{equation}
    \widehat{V}^{\mathrm{JK}}_0
    =
    \frac{1}{L}
    \sum_{\ell=1}^{n-L+1}
    \bigl(\widetilde{\beta}_{(-\ell)}-\widehat{\beta}\bigr)
    \bigl(\widetilde{\beta}_{(-\ell)}-\widehat{\beta}\bigr)'.
    \label{eq:jk_block_uncorrected}
\end{equation}
The normalization $1/L$ is the moving-block jackknife normalization. When
$L=1$, the estimator deletes one node at a time and removes all dyads
involving that node. When $L>1$, it deletes a local block of ordered nodes
and removes all dyads attached to that block.

Because each dyadic observation is attached to two endpoint nodes, the row-column jackknife contains a double-counting component. We therefore use the corrected JK-DN-Dyadic CRVE
\begin{equation}
    \widehat{V}^{\mathrm{JK}}_{\mathrm{DN}}
    =
    \widehat{V}^{\mathrm{JK}}_0
    -
    (X'X)^{-1}
    \left(
    \sum_{(i,j)\in\mathcal{D}_{n}}
    \widehat{s}_{ij}\widehat{s}_{ij}'
    \right)
    (X'X)^{-1}.
    \label{eq:jk_block}
\end{equation}
The correction subtracts the White  component computed from the
full-sample residual scores. We do not recompute this double-counting component
inside each jackknife deletion. This keeps the correction simple and stable
and improves finite-sample behavior.

\section{Asymptotic validity}
\label{sec:theory}

This section states the asymptotic validity of the DN-Dyadic CRVE and
the JK-DN-Dyadic CRVE. Throughout, for a matrix $A$, we write $A>0$ to denote that the matrix $A$ is positive definite. Let
$    Q_n
    =
    \frac{1}{M_n}
    \sum_{(i,j)\in\mathcal D_n}x_{ij}x_{ij}',
    \ 
    Q=\lim_{n\to\infty}Q_n .$
The OLS expansion is
\begin{equation}
    \widehat\beta-\beta
    =
    Q_n^{-1}
    \frac{1}{M_n}
    \sum_{(i,j)\in\mathcal D_n}s_{ij}.
    \label{eq:ols_expansion_raw}
\end{equation}
We use the projection notation from Section~\ref{sec:model}.  In the
nondegenerate case, the leading term is the first-order node projection $\{\gamma_i\}$.  In the degenerate case considered below, the first-order node projection disappears, and the leading term is the
dyad-level residual component.

\begin{assumption}[Node projection and moments]
\label{ass:projection}
For some $\delta>0$ and $\lambda>1$,
\(
    E(x_{ij}u_{ij})=0,\ 
    E(x_{ij}x_{ij}')>0,\ 
    E\|x_{ij}\|^{8(\lambda+\delta)}<\infty,
    \ 
    E|u_{ij}|^{8(\lambda+\delta)}<\infty .
\)
\end{assumption}

\begin{assumption}[Weak dependence]
\label{ass:mixing}
The sequence $\{Z_i\}$ is strictly stationary with
mixing coefficients $\beta(h)$ satisfying, for $\lambda$ defined in Assumption \ref{ass:projection},
\(
    \beta(h)=O(h^{-\mathfrak d})\) for some \(
    \mathfrak d>\frac{2\lambda}{\lambda-1}.
\)
\end{assumption}

Assumptions \ref{ass:projection} impose standard moment conditions; see, e.g., \citet{chiang2023standard} and \citet{chen2023fixed}. Assumption \ref{ass:mixing} imposes a \(\beta\)-mixing condition in order to invoke
the degenerate U-statistic result of \citet{yoshihara1976limiting}. This condition
can be weakened to \(\alpha\)-mixing at the cost of imposing additional smoothness
on $\xi_{ij}$, such as a Lipschitz-type continuity condition; see, for example,
\citet{jochmans2026two}. Define the long-run variance of the first-order node projection by
\begin{equation}
    \Omega_\gamma
    =E(\gamma_1\gamma_{1}')+
    \sum_{h=1}^{\infty}
    E(\gamma_1\gamma_{1+h}'+\gamma_{1+h}\gamma_1').
    \label{eq:longrun}
\end{equation}
For the degenerate case, define
\(
    v_h
    =
    E\!\left[
    E\left(\zeta_{1,1+h}\zeta_{1,1+h}'\mid Z_1,Z_{1+h}\right)
    \right],
\)
and
\begin{equation}
    \Omega_\zeta
    =
    \lim_{n\to\infty}
    \frac{4}{(n-1)^2}
    \sum_{h=1}^{n-1}(n-h)v_h .
    \label{eq:omega_zeta}
\end{equation}

\begin{assumption}[Variance]
\label{ass:variance}
One of the following two cases holds.
(i) 
$\Omega_\gamma>0$; or
(ii)  $\operatorname{Var}(\gamma_{i})=0$, 
$\operatorname{Var}(\xi_{ij})=0$,  and $\Omega_\zeta>0$.
\end{assumption}

Assumption~\ref{ass:variance} distinguishes two cases.  In the
nondegenerate case, the first-order node projection $\gamma_i$ contributes
to the leading sampling variation.  In the degenerate case imposed in Assumption \ref{ass:variance}(ii), the
first-order projection $\gamma_i$ and the
non-Gaussian component $\xi_{ij}$ are negligible, but the remaining dyad-level component based on $\zeta_{ij}$
has a nonzero limiting variance.  The assumption therefore ensures that the
limiting distribution is Gaussian under the relevant normalization.\footnote{When the second-order component \(\xi_{ij}\) is not negligible, the limiting
distribution is generally non-Gaussian. For two-way clustering, max-type
statistics can deliver conservative inference because the two clustering
dimensions are distinct, so one can condition on one dimension and use the
other for one-way normalization; see \citet{mackinnon2024jackknife} and
\citet{davezies2025analytic}. This logic does not directly extend to dyadic
data, where both indices refer to the same node population and rows and
columns cannot be separated into two independent clustering directions.}  

\begin{theorem}[Limit distribution]
\label{thm:clt}
Suppose Assumptions~\ref{ass:representation}-\ref{ass:variance} hold.
If Assumption~\ref{ass:variance}(i) holds, then
\begin{equation}
    \sqrt n(\widehat\beta-\beta)
    \Rightarrow
    N(0,V_\gamma),
    \qquad
    V_\gamma
    =
    4Q^{-1}\Omega_\gamma Q^{-1}.
    \label{eq:clt_beta_gamma}
\end{equation}
If Assumption~\ref{ass:variance}(ii) holds, then
\begin{equation}
    n(\widehat\beta-\beta)
    \Rightarrow
    N(0,V_\zeta),
    \qquad
    V_\zeta
    =
    Q^{-1}\Omega_\zeta Q^{-1}.
    \label{eq:clt_beta_zeta}
\end{equation}
\end{theorem}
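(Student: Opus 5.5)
Starting from the expansion \eqref{eq:ols_expansion_raw}, the argument has three steps. First, show $Q_n\to_p Q>0$. Then handle the score average through the decomposition \eqref{eq:average_score_decomp}, using $\mu=E(x_{ij}u_{ij})=0$ from Assumption~\ref{ass:projection}. Finally, conclude by Slutsky's lemma and the continuous mapping theorem.

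For $Q_n$, write $x_{ij}x_{ij}'$ as a dyadic array generated by $h(Z_i,Z_j,Q_{ij})$. Apply the same projection decomposition to it and bound the variance of $M_n^{-1}\sum x_{ij}x_{ij}'$. The variance is
\[
M_n^{-2}\sum_{d,d'}\operatorname{Cov}(\cdot,\cdot).
\]
Covariances between dyads whose endpoints are separated by $h$ are controlled by Davydov-type covariance inequalities for $\beta$-mixing sequences. These use the $8(\lambda+\delta)$ moments and the rate $\beta(h)=O(h^{-\mathfrak d})$. Summability of $\beta(h)^{1-1/\lambda}$ gives a variance of order $O(1/n)$, hence $Q_n\to_p Q$.

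\emph{Case (i).} Decompose the score average into three pieces.

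(a) The node piece $\tfrac{2}{n}\sum_i\gamma_i$. The sequence $\gamma_i=g(Z_i)$ is a measurable function of $Z_i$, so it is stationary and $\beta$-mixing with the same coefficients, and it has $2(\lambda+\delta)$ moments. A CLT for mixing sequences (Ibragimov, or Doukhan's $\alpha$-mixing CLT under $\sum\alpha(h)^{\delta'/(2+\delta')}<\infty$, which is implied by $\mathfrak d>2\lambda/(\lambda-1)$) gives $n^{-1/2}\sum_i\gamma_i\Rightarrow N(0,\Omega_\gamma)$. Hence $\sqrt n\cdot\tfrac2n\sum_i\gamma_i\Rightarrow N(0,4\Omega_\gamma)$.

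(b) The dyad residual piece $M_n^{-1}\sum\zeta_{ij}$. Conditional on $\{Z_i\}$, the $\zeta_{ij}$ are independent with mean zero. Its variance is therefore $M_n^{-2}\sum E\|\zeta_{ij}\|^2=O(n^{-2})$, so it is $o_p(n^{-1/2})$.

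(c) The interaction piece $M_n^{-1}\sum\xi_{ij}$. This is a U-statistic in the dependent sequence $\{Z_i\}$ whose kernel is degenerate with respect to the product of marginals. Yoshihara's (1976) result for degenerate U-statistics of $\beta$-mixing sequences gives $E\|\sum_{i<j}\xi_{ij}\|^2=O(n^2)$ under the stated moment/mixing trade-off. Hence $M_n^{-1}\sum\xi_{ij}=O_p(n^{-1})=o_p(n^{-1/2})$.

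The main obstacle is (c). Under dependence, $\xi_{ij}$ is not orthogonal to $\gamma$, and $E\xi_{ij}\neq0$ for nearby $i,j$ (see Example 1). I would handle this by coupling. Split the pairs into those with $|i-j|\le m_n$, of which there are $O(nm_n)$, each with bounded moments; this contributes $O(m_n/n)$ after normalisation. For the remaining pairs, use Berbee's coupling to replace $(Z_i,Z_j)$ with independent copies at cost $\beta(m_n)^{1-1/\lambda}$, then invoke marginal degeneracy. Choosing $m_n$ appropriately gives $o_p(n^{-1/2})$.

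\emph{Case (ii).} Here $\gamma_i=0$ and $\xi_{ij}=0$ almost surely, so $\sum s_{ij}=\sum\zeta_{ij}$. Conditional on $\mathcal Z=\sigma(Z_1,Z_2,\dots)$, the summands $n M_n^{-1}\zeta_{ij}$ are independent, mean zero, and have conditional variance
\[
n^2M_n^{-2}\sum_{i<j}E(\zeta_{ij}\zeta_{ij}'\mid Z_i,Z_j).
\]
Grouping by $h=j-i$, this equals $\tfrac{4}{(n-1)^2}\sum_h\sum_{i}\phi_h(Z_i,Z_{i+h})$ with $E\phi_h=v_h$. A mixing law of large numbers (again via a variance bound of order $o(1)$ from summable covariances) shows that it converges in probability to $\Omega_\zeta$. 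The conditional Lyapunov condition follows from the $4(\lambda+\delta)$ moments of $\zeta_{ij}$, since
\[
n^{4}M_n^{-4}\sum_{i<j}E\|\zeta_{ij}\|^4=O(n^{-2}).
\]
The conditional Lindeberg CLT then yields a conditional characteristic function converging in probability to that of $N(0,\Omega_\zeta)$. Dominated convergence gives the unconditional limit, and combining with $Q_n\to_pQ$ gives \eqref{eq:clt_beta_zeta}.
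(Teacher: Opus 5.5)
Your plan follows the paper's proof in its essentials:
\begin{itemize}
\item the same projection decomposition;
\item a mixing CLT for $\frac{2}{n}\sum_i\gamma_i$;
\item conditional orthogonality of the $\zeta_{ij}$;
\item Yoshihara's degenerate U-statistic bound for $\xi_{ij}$;
\item a conditional Lindeberg--Lyapunov CLT in case (ii).
\end{itemize}
Two variations are harmless: proving $Q_n\to_p Q$ by a direct $L^2$ bound instead of via Lemma~\ref{lem:uniform_deleted_block_lln} at $\ell=0$, and checking Lyapunov with fourth moments instead of Lindeberg with $2+\delta$ moments. The step that would not go through as you describe it is your coupling resolution of the ``main obstacle'' in (c).

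Coupling each far pair $(Z_i,Z_j)$ separately with an independent copy controls only the mean $E\xi_{ij}$. The coupled terms $\xi(Z_i,Z_j^{*})$ come from different couplings for different pairs, so the joint law of $(Z_i,Z_j^{*},Z_p,Z_q^{*})$ is uncontrolled, and so is the second moment of their sum. Even coupling $Z_i$ with its whole far future row by row, then applying the triangle inequality over $i$, only gives $O_p(n^{3/2})$ for the sum. After dividing by $M_n$ that is exactly the borderline $O_p(n^{-1/2})$, so cancellation across rows is essential.

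What is missing is a bound on $\sum_{(i,j)}\sum_{(p,q)}E[\xi_{ij}\xi_{pq}']$. Order the distinct indices and decouple (Yoshihara's Lemma 1) at the gap isolating an extreme index that appears in only one factor. Marginal degeneracy $\int\xi(z,\cdot)\,dF(z)=0$ then kills the decoupled term, and counting index tuples against the summability from Assumption~\ref{ass:mixing} gives $o(n^3)$. This is the content of Yoshihara's Lemma 2, so simply invoke it, as the paper does. Note that the paper extracts only $O(n^{3-\eta_\xi})$ from it, not the $O(n^2)$ you quote, but $o(n^3)$ is all you need.

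The nonzero mean you worry about is not the real difficulty. Assumption~\ref{ass:projection} sets $Es_{ij}=0$ for every dyad. This forces $\iint E[s_{ij}\mid Z_i=z_1,Z_j=z_2]\,dF(z_1)\,dF(z_2)=\lim_h Es_{1,1+h}=0$, hence $E\xi_{ij}=0$. Even without that, $\sum_{i<j}E\xi_{ij}=\sum_h(n-h)E\xi_{1,1+h}=O(n)$ by the mixing rate, so no truncation at $m_n$ is needed.

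The same gap-by-gap decoupling, rather than a past/future Davydov inequality, is also what justifies your covariance bound in the $Q_n$ step. This matters for dyad pairs with interleaved endpoints such as $i<p<j<q$.
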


The first result is the ordered-node analog of the standard
nondegenerate dyadic limit theory.  The factor four in
\eqref{eq:clt_beta_gamma} comes from the fact that each node contributes
to approximately $n-1$ dyads.  The second result covers the degenerate
case in which the first-order node projection is absent.  In that case,
the rate becomes $n$ because the leading variation is generated by the
dyad-level residual component.

\begin{assumption}[Bandwidth]
\label{ass:bandwidth}
As $n\to\infty$, the bandwidth $L=L_n$ satisfies
  $  L\to\infty$ and $
    {L^2}/{n}=o(1)$.
\end{assumption}

\begin{theorem}
\label{thm:hac_jk}
Suppose Assumptions~\ref{ass:representation}-\ref{ass:bandwidth} hold.
If Assumption~\ref{ass:variance}(i) holds, then
\begin{equation}
    n\widehat V_{\mathrm{DN}}
    \to^P
    V_\gamma,
    \qquad
    n\widehat V^{\mathrm{JK}}_{\mathrm{DN}}
    \to^P
    V_\gamma .
    \label{eq:var_consistency_gamma}
\end{equation}
If Assumption~\ref{ass:variance}(ii) holds, then
\begin{equation}
    n^2\widehat V_{\mathrm{DN}}
    \to^P
    V_\zeta,
    \qquad
    n^2\widehat V^{\mathrm{JK}}_{\mathrm{DN}}
    \to^P
    V_\zeta .
    \label{eq:var_consistency_zeta}
\end{equation}
Consequently, for every fixed nonzero vector $a$,
\begin{equation}
    \frac{a'(\widehat\beta-\beta)}
    {\sqrt{a'\widehat V_{\mathrm{DN}}a}}
    \Rightarrow
    N(0,1),
    \qquad
    \frac{a'(\widehat\beta-\beta)}
    {\sqrt{a'\widehat V^{\mathrm{JK}}_{\mathrm{DN}}a}}
    \Rightarrow
    N(0,1).
    \label{eq:tstat_validity}
\end{equation}
\end{theorem}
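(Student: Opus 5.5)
The plan is to reduce both claims to statements about the population meat, using Theorem~\ref{thm:clt} and $Q_n\to^P Q$. Since $X'X=M_nQ_n$ and $M_n\sim n^2/2$, we have
\[
n\widehat V_{\mathrm{DN}}=Q_n^{-1}\bigl(nM_n^{-2}\widehat\Sigma_{\mathrm{DN}}\bigr)Q_n^{-1},
\]
and the analogous identity holds with $n^2$ in case (ii). It therefore suffices to show $nM_n^{-2}\widehat\Sigma_{\mathrm{DN}}\to^P 4\Omega_\gamma$ in case (i) and $n^2M_n^{-2}\widehat\Sigma_{\mathrm{DN}}\to^P\Omega_\zeta$ in case (ii). The $t$-statistic conclusions then follow from Theorem~\ref{thm:clt}, Slutsky, and $a'V a>0$, which holds because $Q>0$ and $\Omega>0$.

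\textbf{Step 1: replace residual scores by true scores.} Write $\widehat s_{ij}=s_{ij}-x_{ij}x_{ij}'(\widehat\beta-\beta)$. The cross and quadratic terms in $\widehat\Sigma_{\mathrm{DN}}$ involve sums of the form $\sum\sum k_L(\Delta)\,s_{ij}(\widehat\beta-\beta)'x_{pq}x_{pq}'$. For fixed $(i,j)$, the number of $(p,q)$ with $\Delta<L$ is $O(Ln)$. Combining this with $\widehat\beta-\beta=O_P(n^{-1/2})$ (respectively $O_P(n^{-1})$) and the moment bounds in Assumption~\ref{ass:projection}, these terms are $O_P(L/\sqrt n)$ relative to the leading term, which is $o_P(1)$ by Assumption~\ref{ass:bandwidth}. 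In the degenerate case the leading term is of smaller order, so the bound has to be redone; there I would use $\gamma\equiv0$, $\xi\equiv0$, which makes the rescaled score sums of $x x'$-weighted objects concentrate.

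\textbf{Step 2, case (i).} Substitute the decomposition \eqref{eq:score_decomp} with $\mu=0$. The $\gamma$–$\gamma$ part is
\[
\sum_{r,t}\gamma_r\gamma_t'\,\#\{(d,d'):r\in d,\ t\in d'\}\,\text{weights}.
\]
For $|r-t|<L$, the dyad pairs whose minimal endpoint distance is attained at $(r,t)$ number about $(n-1)^2$, and those pairs have weight $k_L(r-t)$. The remaining pairs are those where some other endpoint pairing is closer. There are only $O(nL)$ of them per $(r,t)$, so after the normalization $nM_n^{-2}$ they contribute $O_P(L^2/n)=o_P(1)$. The leading term is therefore $4n^{-1}\sum_{r,t}k_L(r-t)\gamma_r\gamma_t'$. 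This is the Bartlett HAC estimator for the stationary mixing sequence $\{\gamma_r\}$, which converges to $4\Omega_\gamma$ under $L\to\infty$, $L^2/n\to0$ and the moment/mixing rates via standard arguments (e.g.\ Andrews/de Jong). The cross terms involving $\xi$ and $\zeta$ remain: $\gamma$–$\xi$, $\xi$–$\xi$, and the $\zeta$ terms. For the $\zeta$ terms, conditional independence of the $Q_{ij}$ gives second moments of order $L/n$. For the $\xi$ terms, use marginal degeneracy together with the Yoshihara (1976) $\beta$-mixing U-statistic covariance inequality to bound their variance by $o(1)$. This is why $\mathfrak d>2\lambda/(\lambda-1)$ is needed.

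\textbf{Step 3, case (ii).} Only $\zeta$ survives. The diagonal-type terms with $d=d'$ give $n^2M_n^{-2}\sum_d\zeta_d\zeta_d'\to^P\Omega_\zeta$, by a conditional LLN given $\{Z_i\}$ followed by a mixing LLN for $E(\zeta\zeta'\mid Z)$; this is exactly the definition \eqref{eq:omega_zeta}. The off-diagonal terms have conditional mean zero and are conditionally uncorrelated across distinct pairs. Their variance is about $n^4M_n^{-4}\cdot n^2\cdot Ln=O(L/n)\to0$.

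\textbf{Step 4: jackknife.} Expand $\widetilde\beta_{(-\ell)}-\widehat\beta$ to first order:
\[
\widetilde\beta_{(-\ell)}-\widehat\beta\approx -(X'X)^{-1}\sum_{d\in\mathcal A_\ell}\widehat s_d,
\]
using $\sum\widehat s=0$. The leverage correction $(X'X-X_{\mathcal A_\ell}'X_{\mathcal A_\ell})^{-1}$ differs from $(X'X)^{-1}$ by a relative $O_P(L/n)$ term uniformly in $\ell$, which I would control by a maximal inequality. Then $L^{-1}\sum_\ell S_\ell S_\ell'$ with $S_\ell=\sum_{d\in\mathcal A_\ell}\widehat s_d$ expands into dyad pairs weighted by the number of blocks touching both dyads divided by $L$. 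For dyads with one close endpoint pair this weight is the Bartlett weight $(1-|r-t|/L)_+$. Shared-dyad self-pairs are counted via both endpoints, and this produces the extra White term that \eqref{eq:jk_block} subtracts. The claim then reduces to the analysis in Steps 2–3, with the weight discrepancies handled by the same $O(L^2/n)$ counting.

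\textbf{Main obstacle.} I expect the hardest part to be Step 2's control of the $\gamma$–$\xi$ and $\xi$–$\xi$ terms. In this setting $\xi$ is degenerate only under the product measure, not under the dependent joint law (Example 1). This requires the coupling/Yoshihara bounds, in which mixing rates trade off against the $8(\lambda+\delta)$ moments. I would attempt it by replacing $(Z_r,Z_t)$ with independent copies when $|r-t|$ is large and bounding the error by $\beta(|r-t|)^{1-1/\lambda}$.
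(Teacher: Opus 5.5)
Your outline follows the paper's proof essentially step by step. You reduce residual scores to population scores via the $O(nL)$-neighbour count, compare the $\gamma$--$\gamma$ part of the DN meat with the Bartlett HAC of $\{\gamma_r\}$, and use conditional independence for $\zeta$ and Yoshihara-type bounds for $\xi$. You split diagonal from off-diagonal terms in case (ii), and reduce the jackknife to the DN meat plus a double-counted White term. The one genuine variation is your jackknife linearization. You use the exact identity $\widetilde\beta_{(-\ell)}-\widehat\beta=-(X_{-\ell}'X_{-\ell})^{-1}\sum_{d\in\mathcal A_\ell}\widehat s_d$ and then replace $(X_{-\ell}'X_{-\ell})^{-1}$ by $(X'X)^{-1}$ uniformly in $\ell$. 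This avoids the leave-block residuals and Lemma~\ref{lem:delete_block_beta_rate} that the paper needs. Your dyad-pair comparison of jackknife weights with $k_L\{\Delta(d,d')\}$ also turns the jackknife-to-DN reduction into crude counting on $O(n^2L^2)$ discrepant dyad pairs plus the diagonal.

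There is a gap in Step 2, however: you only account for node pairs with $|r-t|<L$. In $\sum_{r,t}W_{rt}\gamma_r\gamma_t'$, with $W_{rt}=\sum_{d\ni r}\sum_{d'\ni t}k_L\{\Delta(d,d')\}$, distant node pairs also carry weight. A dyad pair $d=(r,a)$, $d'=(t,b)$ with $|a-b|<L$ is retained, and its summand contains $\gamma_r\gamma_t'$ however far apart $r$ and $t$ are. So $W_{rt}\asymp nL$ for all $n^2$ node pairs, giving $\Theta(n^3L)$ such products. After the $n/M_n^2$ normalization, counting bounds their contribution only by $O_P(L)$, which diverges; your $O_P(L^2/n)$ comes from counting just the $O(nL)$ close node pairs. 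The paper's own passage from the uniform bound $W_{rs}/(n-1)^2=k_L(|r-s|)+O(L/n)$ to an $O_p(L/n)$ remainder needs the same extra step. That step is cancellation, not counting. Write $W_{rt}=(n-1)^2k_L(|r-t|)+D_{rt}$. When $r,t$ are at least $L$ apart and at least $L$ from the boundary, $D_{rt}=3nL+O(L^2)$, from $b$ near $r$, $a$ near $t$, and $a$ near $b$. On the remaining $O(nL)$ node pairs, $D_{rt}=O(nL)$. Hence
\[
\frac{1}{n^3}\sum_{r,t}D_{rt}\gamma_r\gamma_t'
=\frac{3L}{n^2}\Bigl(\sum_{r}\gamma_r\Bigr)\Bigl(\sum_{t}\gamma_t\Bigr)'
+O_P\Bigl(\frac{L^2}{n}\Bigr)
=O_P\Bigl(\frac{L}{n}+\frac{L^2}{n}\Bigr),
\]
using $\sum_r\gamma_r=O_P(\sqrt n)$ from Theorem~\ref{thm:clt}. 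This is the DN counterpart of the $L\sum_{r}\gamma_r$ term that the paper exhibits explicitly in its jackknife computation. Once this is added, your Step 4 inherits it automatically, because the jackknife weights coincide with the DN weights on every interior dyad pair with a single close endpoint pairing.
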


Theorem~\ref{thm:hac_jk} shows that both proposed variance estimators
adapt to the relevant source of first-order variation.  In the
nondegenerate case, both estimators consistently estimate the variance
of the $\sqrt n$ limit.  In the degenerate case, both estimators
consistently estimate the variance of the $n$ limit.  Therefore, the
studentized statistics in \eqref{eq:tstat_validity} are asymptotically
standard normal in both cases.

\section{Simulation evidence}
\label{sec:simulation}

This section studies the finite-sample performance of the proposed
dependent-node dyadic inference methods. The simulation uses the linear
dyadic regression model
\begin{equation}
    y_{ij}=x_{ij}'\beta+u_{ij},
    \qquad 1\le i<j\le n,
    \label{eq:sim_model}
\end{equation}
where $\beta=(1,\ldots,1)'\in\mathbb{R}^{K}$. The null hypothesis concerns
the last component of $\beta$, and all tests are conducted at the nominal
$5\%$ significance level.

The data-generating process is designed to generate two forms of dependence.
First, two dyads that share a node are dependent through common latent node components. Second, because the latent ordered node components are dependent, two dyads that do not share a node may also be dependent when their endpoint nodes are close. Specifically, for each node $i$, let
$A_i^x\in\mathbb{R}^{K}$ and $A_i^u\in\mathbb{R}$ denote latent node shocks
generated by the stationary AR(1) processes
\begin{align}
    A_i^x &= \rho A_{i-1}^x+\sqrt{1-\rho^2}\,\eta_i^x,
    \qquad \eta_i^x\sim N(0,I_K), \label{eq:sim_Ax}\\
    A_i^u &= \rho A_{i-1}^u+\sqrt{1-\rho^2}\,\eta_i^u,
    \qquad \eta_i^u\sim N(0,1), \label{eq:sim_Au}
\end{align}
with innovations independent across $i$ and independent of all dyad-specific
shocks. The parameter $\rho\in[0,1)$ controls the strength of ordered-node
dependence. When $\rho=0$, the latent node shocks are independent over the
node index. When $\rho$ is large, nearby nodes are strongly dependent.

For each dyad $(i,j)$, the regressors and disturbance are generated as
\begin{align}
    x_{ij} &= \omega(A_i^x+A_j^x)+e_{ij}^x, \label{eq:sim_x}\\
    v_{ij} &= \omega(A_i^u+A_j^u)+e_{ij}^u, \label{eq:sim_v}\\
    u_{ij} &= \{1+\gamma |x_{ij,K}|\}v_{ij}, \label{eq:sim_u}\\
    y_{ij} &= x_{ij}'\beta+u_{ij}, \label{eq:sim_y}
\end{align}
where $e_{ij}^x\sim N(0,I_K)$ and $e_{ij}^u\sim N(0,1)$ are independent
dyad-specific shocks. The first component of $x_{ij}$ is then set equal to
one so that the regression includes an intercept. The parameter $\omega$
controls the strength of dyadic dependence generated by the latent node
components. When $\omega=0$, the common node components do not enter the DGP, and the dyadic dependence is weak. As $\omega$ increases, shared-node and ordered-node dependence become stronger. The parameter $\gamma$ controls the degree of conditional heteroskedasticity through the last regressor $x_{ij,K}$.\footnote{We also study different forms of heteroskedasticity through all regressors $\{x_{ij,k}\}_k$, and the result demonstrates a similar pattern.} The baseline design sets
\[
    n=50,\qquad K=10,\qquad \omega=1,\qquad \gamma=0.5,
\]
and uses $5{,}000$ Monte Carlo replications.

\begin{figure}[t]
    \centering
    \includegraphics[width=0.55\linewidth]{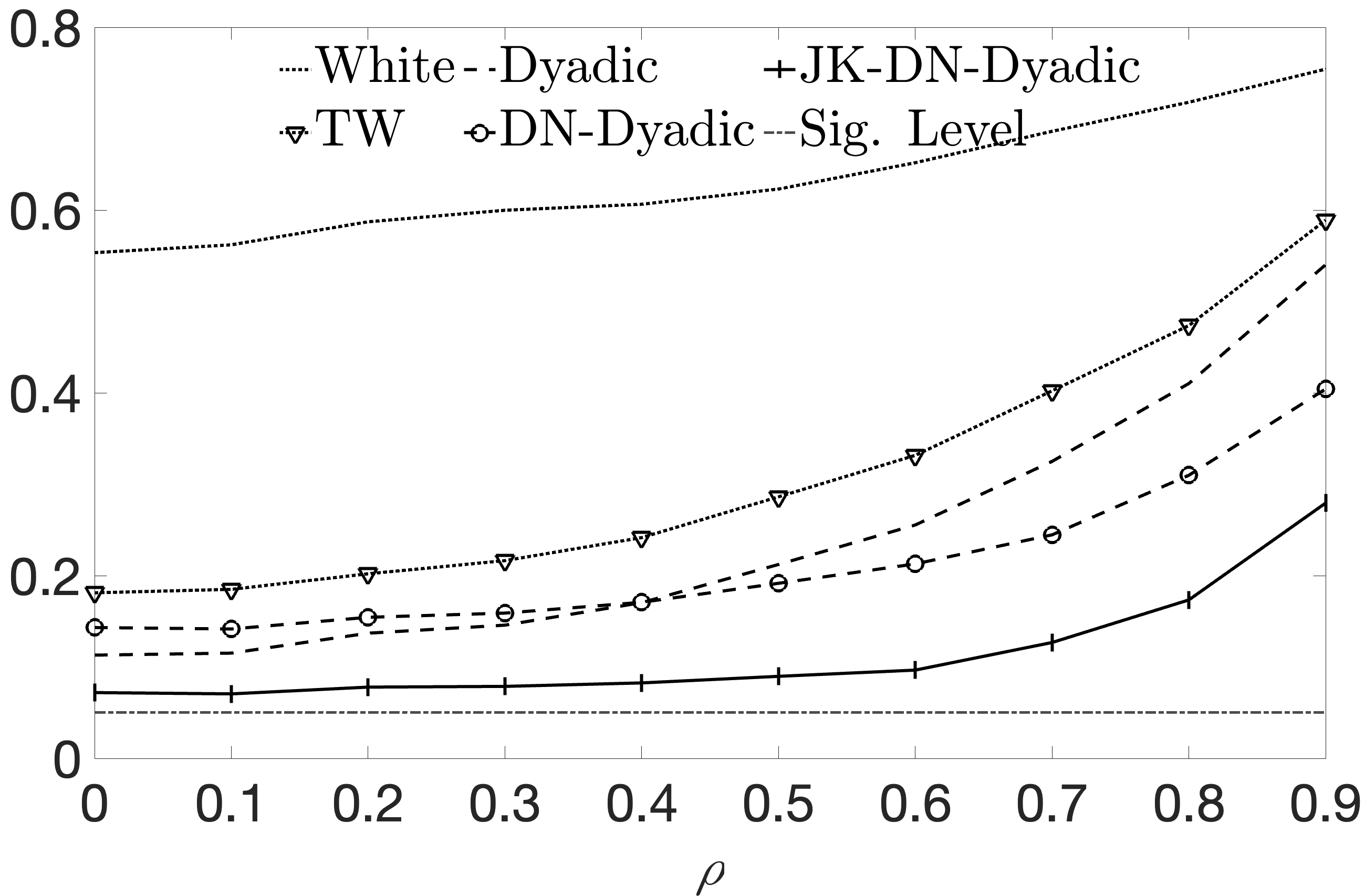}
    \caption{Rejection frequencies for dyadic inference methods, varying ordered-node dependence $\rho$.}
    \label{fig:main}
\end{figure}

We compare five inference procedures:
\begin{enumerate}
    \item \textbf{White}: the heteroskedasticity-robust estimator, which
    treats all dyads as independent;
    \item \textbf{TW}: the conventional two-way cluster-robust estimator
    based on the two dyadic indices;
    \item \textbf{Dyadic}: the conventional dyadic CRVE, which accounts for
    arbitrary dependence between dyads that share a node, but does not
    account for ordered-node dependence between distinct nodes;
    \item \textbf{DN-Dyadic}: the dependent-node dyadic CRVE, which accounts
    for shared-node dependence and ordered-node dependence;
    \item \textbf{JK-DN-Dyadic}: the proposed row-column moving-block
    jackknife, which deletes adjacent blocks of ordered nodes and removes
    all dyads touching the deleted block.
\end{enumerate}
When the same bandwidth choice is used, the HAC and bootstrap procedures proposed by \citet{jochmans2026two} perform similarly to, and slightly better than, DN-Dyadic, but remain less accurate than JK-DN-Dyadic in the presence of node dependence. The difference arises because those procedures do not implement the double-counting correction. As a result, the estimated variance tends to be slightly larger, leading to somewhat more conservative tests. The trade-off is that these procedures become overly conservative when node dependence is absent.

This distinction is well known in the comparison between CRVEs without double-counting correction and CRVEs with double-counting correction in conventional two-way clustering; see \citet{cameron2011robust} and \citet{davezies2021empirical}. See also \citet{mackinnon2021wild} for theoretical results covering both approaches, and \citet{chiang2023standard} and \citet{chen2023fixed} for analogous methods in two-way clustering settings with a time dimension. For clarity of exposition, we report the additional simulation results in Appendix \ref{app:implementation}, including the naive iid homoskedastic variance estimator, one-way clustering CRVE, the \citet{jochmans2026two} method, and the jackknife procedure without double-counting correction.

Figure \ref{fig:main} varies the ordered-node dependence parameter $\rho$,
holding the other parameters at their baseline values. When $\rho$ is small,
ordered-node dependence is weak, and the conventional dyadic CRVE performs
reasonably well. The DN-Dyadic estimator is slightly more conservative in
this region, reflecting the finite-sample cost of allowing for additional
local dependence. As $\rho$ increases, however, all methods exhibit worse performance, and White, TW, and Dyadic exhibit more size distortion. This pattern is consistent with
their dependence restrictions: White ignores dependence, TW captures only
part of the dyadic dependence, and the conventional dyadic CRVE captures
shared-node dependence but not ordered-node dependence. The DN-Dyadic estimator improves size control over Dyadic, while
the JK-DN-Dyadic estimator is relatively robust to the varying level of ordered-node dependence compared to all other methods.

\begin{figure}[t!]
\centering
\begin{subfigure}{0.48\textwidth}
    \caption{Varying dyadic dependence strength $\omega$}
    \label{fig:rho70_omega}
    \centering
    \includegraphics[width=\textwidth]{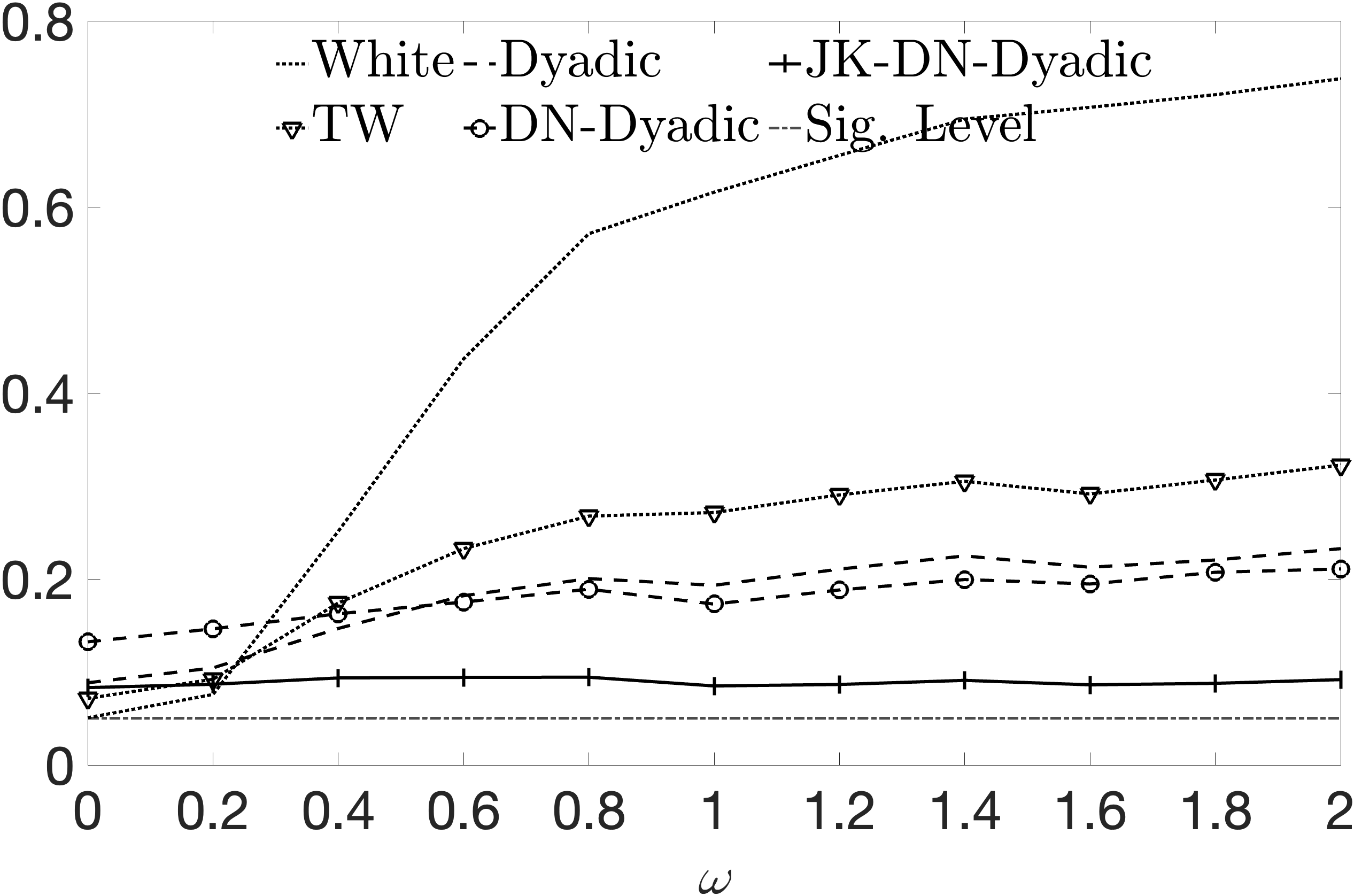}
\end{subfigure}
\begin{subfigure}{0.48\textwidth}
    \caption{Varying the number of nodes $n$}
    \label{fig:rho70_n}
    \centering
    \includegraphics[width=\textwidth]{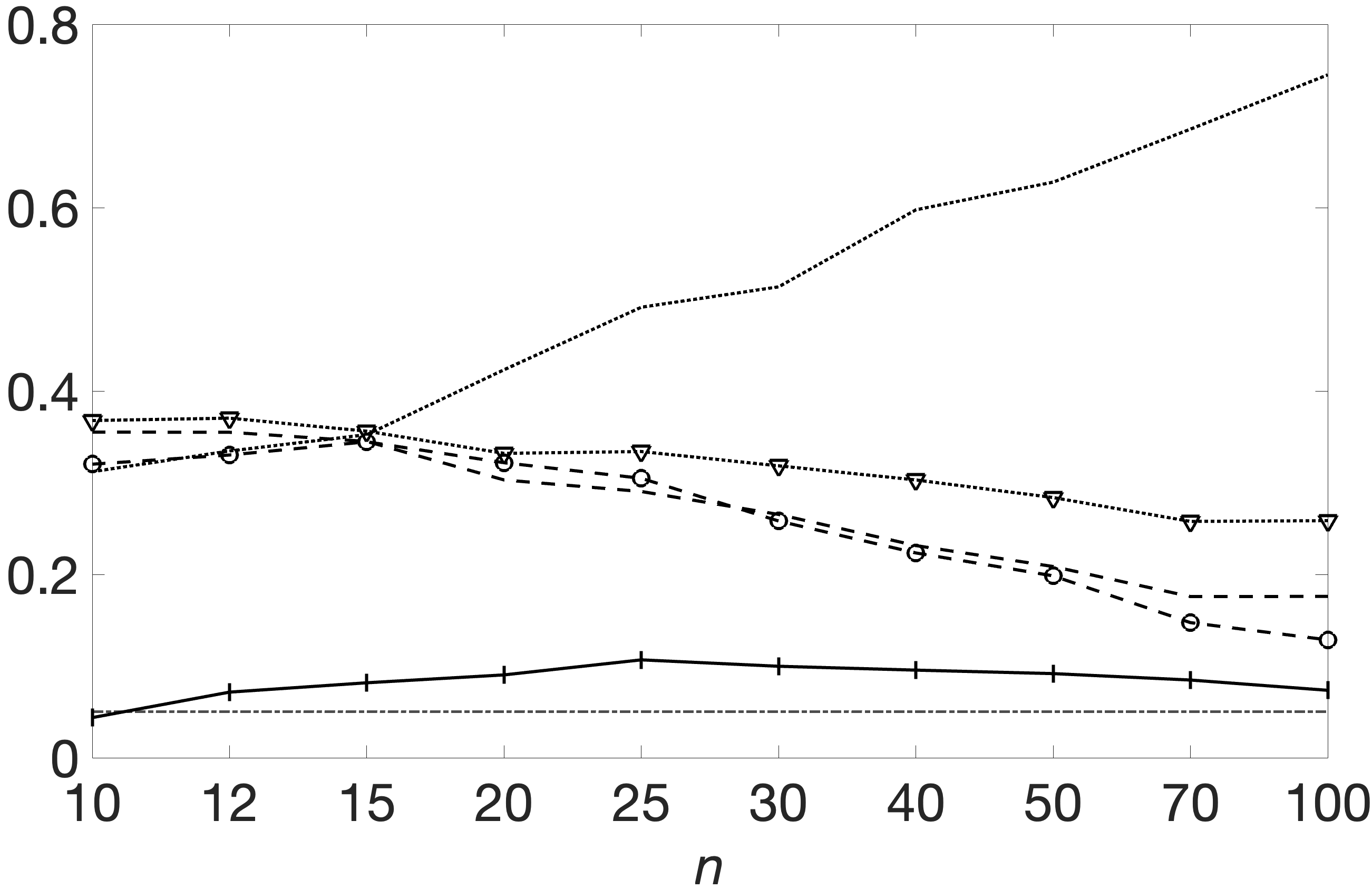}
\end{subfigure}

\begin{subfigure}{0.48\textwidth}
    \caption{Varying the number of regressors $K$}
    \label{fig:rho70_K}
    \centering
    \includegraphics[width=\textwidth]{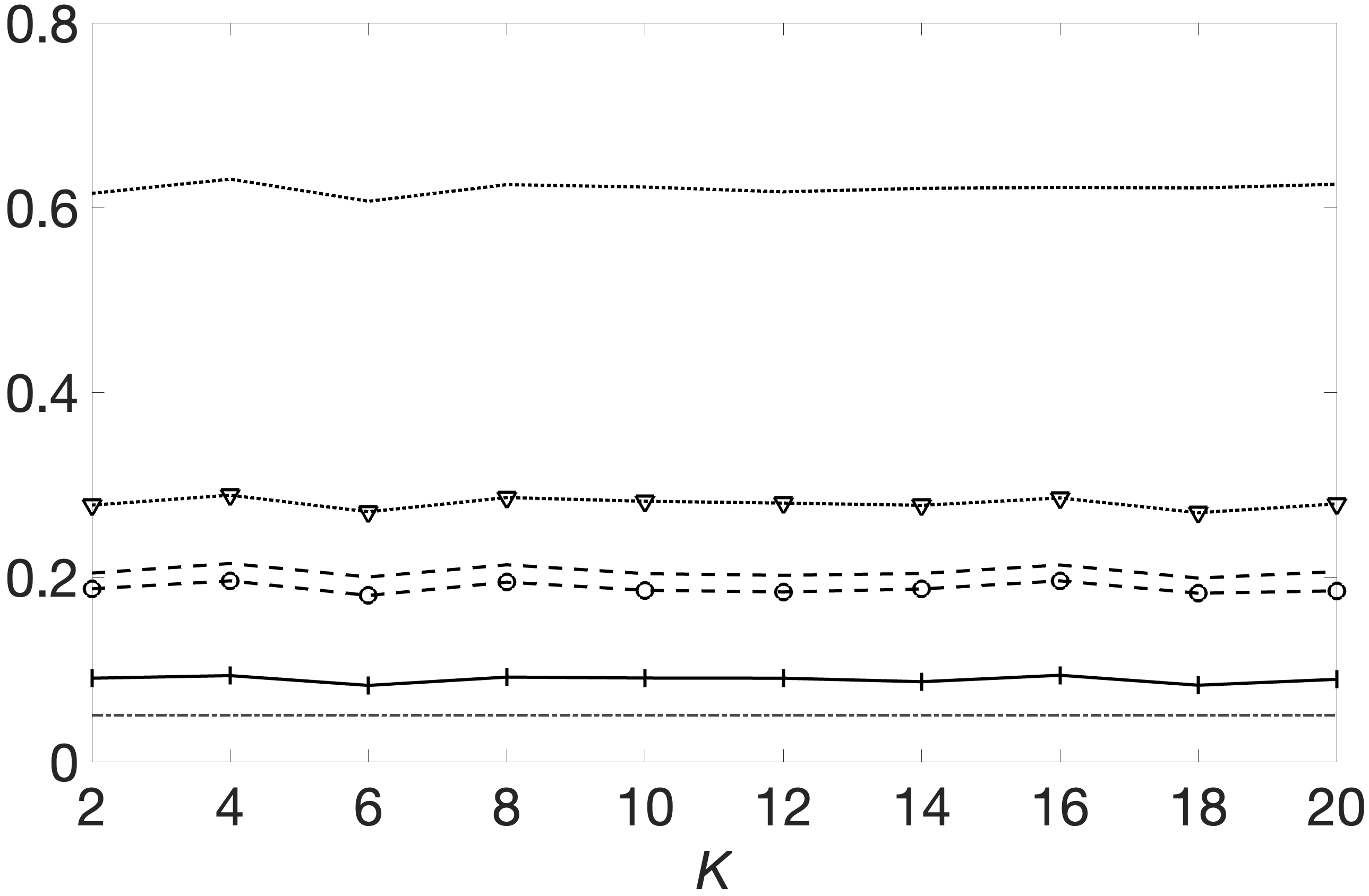}
\end{subfigure}
\begin{subfigure}{0.48\textwidth}
    \caption{Varying heteroskedasticity $\gamma$}
    \label{fig:rho70_gamma}
    \centering
    \includegraphics[width=\textwidth]{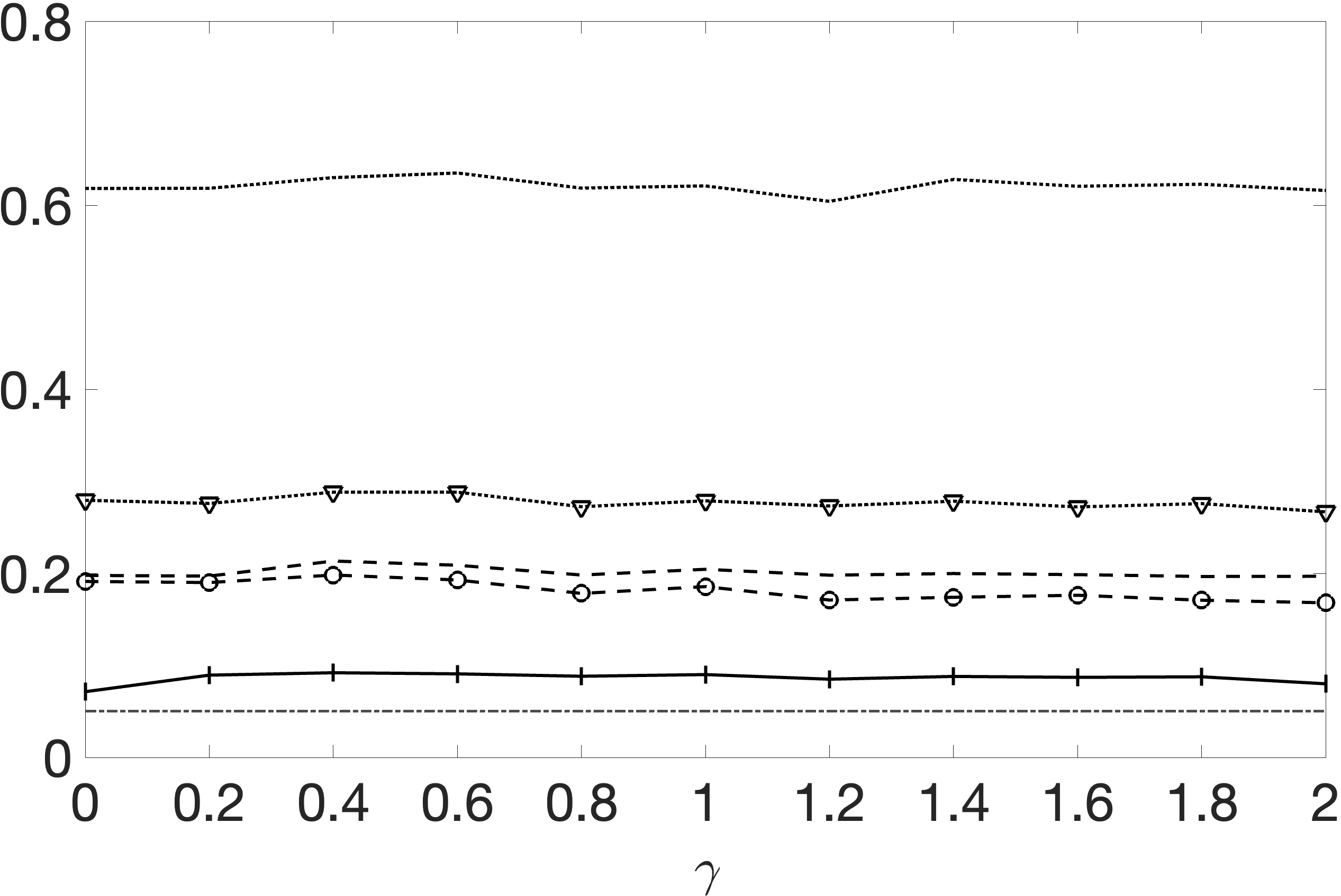}
\end{subfigure}
\caption{Rejection frequencies for dyadic inference methods under moderate ordered-node dependence, $\rho=0.50$. The nominal significance level is $5\%$.}
\label{fig:main_rho50}
\end{figure}

Figure \ref{fig:main_rho50} fixes the ordered-node dependence parameter at the moderate level
$\rho=0.50$ and varies one design parameter at a time. Panel (a) varies $\omega$, which controls the strength of the latent node
component and hence the strength of dyadic dependence. When $\omega$ is close
to zero, the common node component is weak and the dyads are nearly
independent apart from the idiosyncratic shocks. In this case, the White method
 is close to the nominal level, while DN-Dyadic can be conservative because it allows for additional local dependence. As $\omega$ increases, shared-node dependence becomes
stronger, and the methods that do not fully account for the dyadic dependence
begin to over-reject. White exhibits the largest size distortion because it ignores the dependence structure altogether. The two-way and conventional dyadic estimators improve upon White, reflecting their ability to account for part or all of the shared-node dependence. DN-Dyadic further improves slightly upon the conventional dyadic estimator.
 The proposed jackknife estimator remains closest
to the nominal level, indicating that the row-column block deletion provides
additional finite-sample robustness. 

Panel (b) varies the number of nodes $n$. White remains substantially
oversized as $n$ increases, whereas the other methods improve, reflecting
that they at least partially account for the dependence structure. The
performance of the Dyadic and DN-Dyadic estimators improves with $n$, but
they remain somewhat oversized in finite samples. Interestingly, the JK-DN-Dyadic estimator is already close to the nominal level when $n=10$. Although its rejection frequency increases slightly as the sample size becomes larger, it remains much more stable than the competing procedures. This suggests that the row-column block deletion delivers useful robustness even in very small samples.

Panels (c) and (d) vary the number of regressors $K$ and the
heteroskedasticity parameter $\gamma$, respectively. The rejection
frequencies are relatively stable across these variations, suggesting that
the main source of size distortion in this design is the dependence structure
rather than the number of regressors or the degree of heteroskedasticity.
Across all panels, the qualitative ranking of the methods is unchanged: White performs worst, the two-way and conventional Dyadic estimators improve upon White, DN-Dyadic performs slightly better than Dyadic under moderate ordered-node dependence, and JK-DN-Dyadic delivers the most reliable size control.

\begin{figure}[t]
\centering
\begin{subfigure}{0.48\textwidth}
    \caption{Varying dyadic dependence strength $\omega$}
    \label{fig:rho30_omega}
    \centering
    \includegraphics[width=\textwidth]{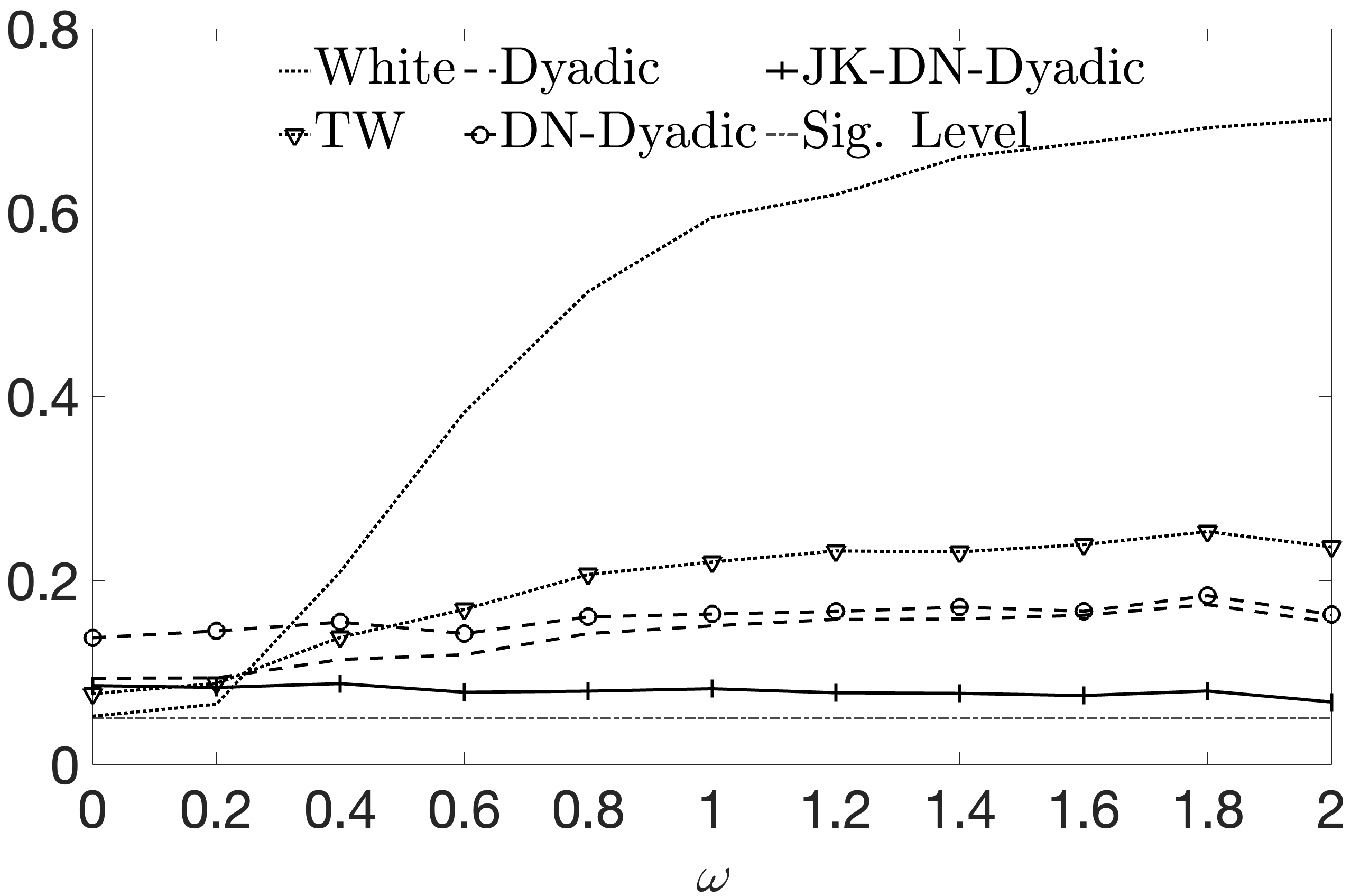}
\end{subfigure}
\begin{subfigure}{0.48\textwidth}
    \caption{Varying the number of nodes $n$}
    \label{fig:rho30_n}
    \centering
    \includegraphics[width=\textwidth]{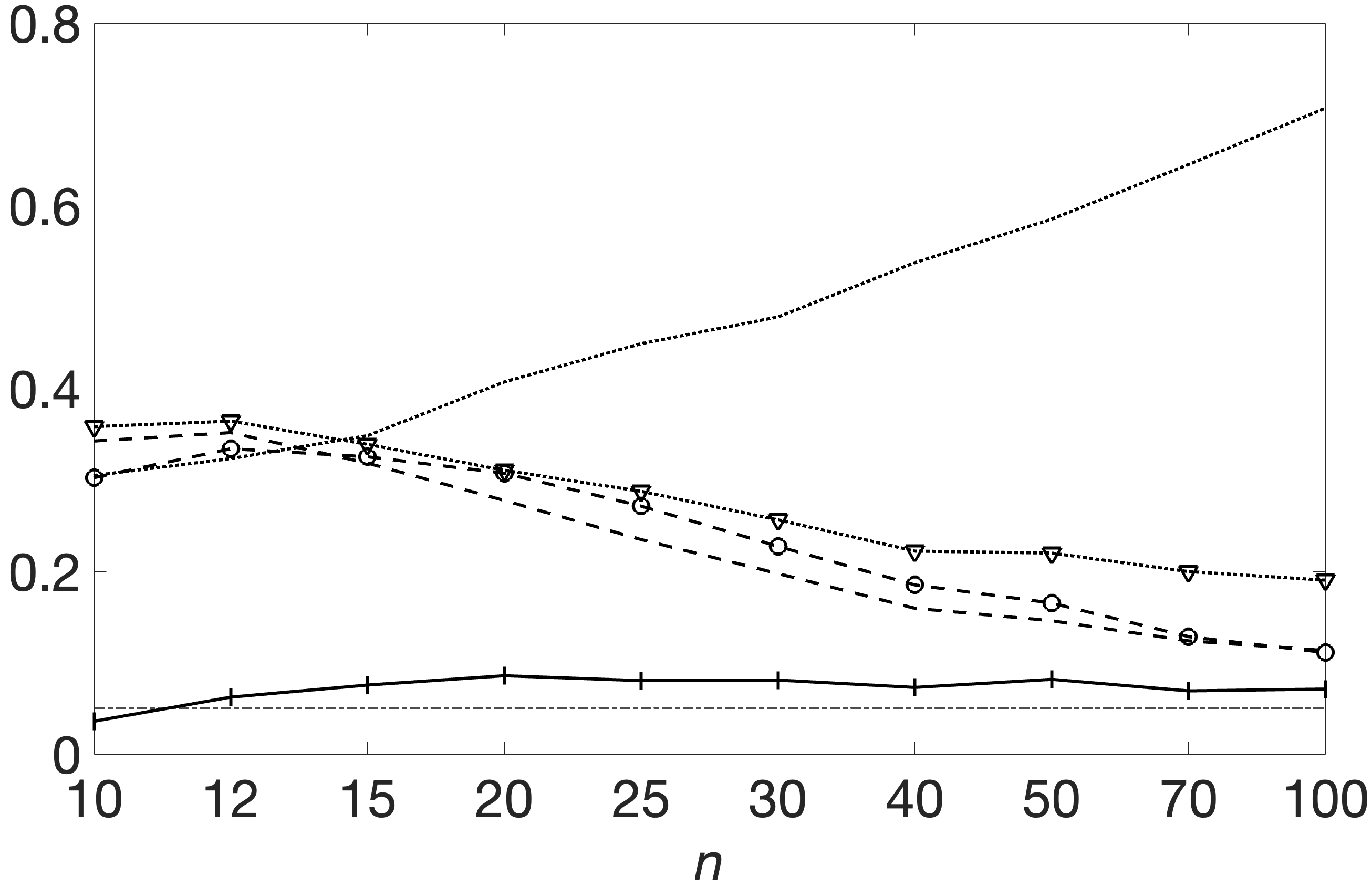}
\end{subfigure}

\begin{subfigure}{0.48\textwidth}
    \caption{Varying the number of regressors $K$}
    \label{fig:rho30_K}
    \centering
    \includegraphics[width=\textwidth]{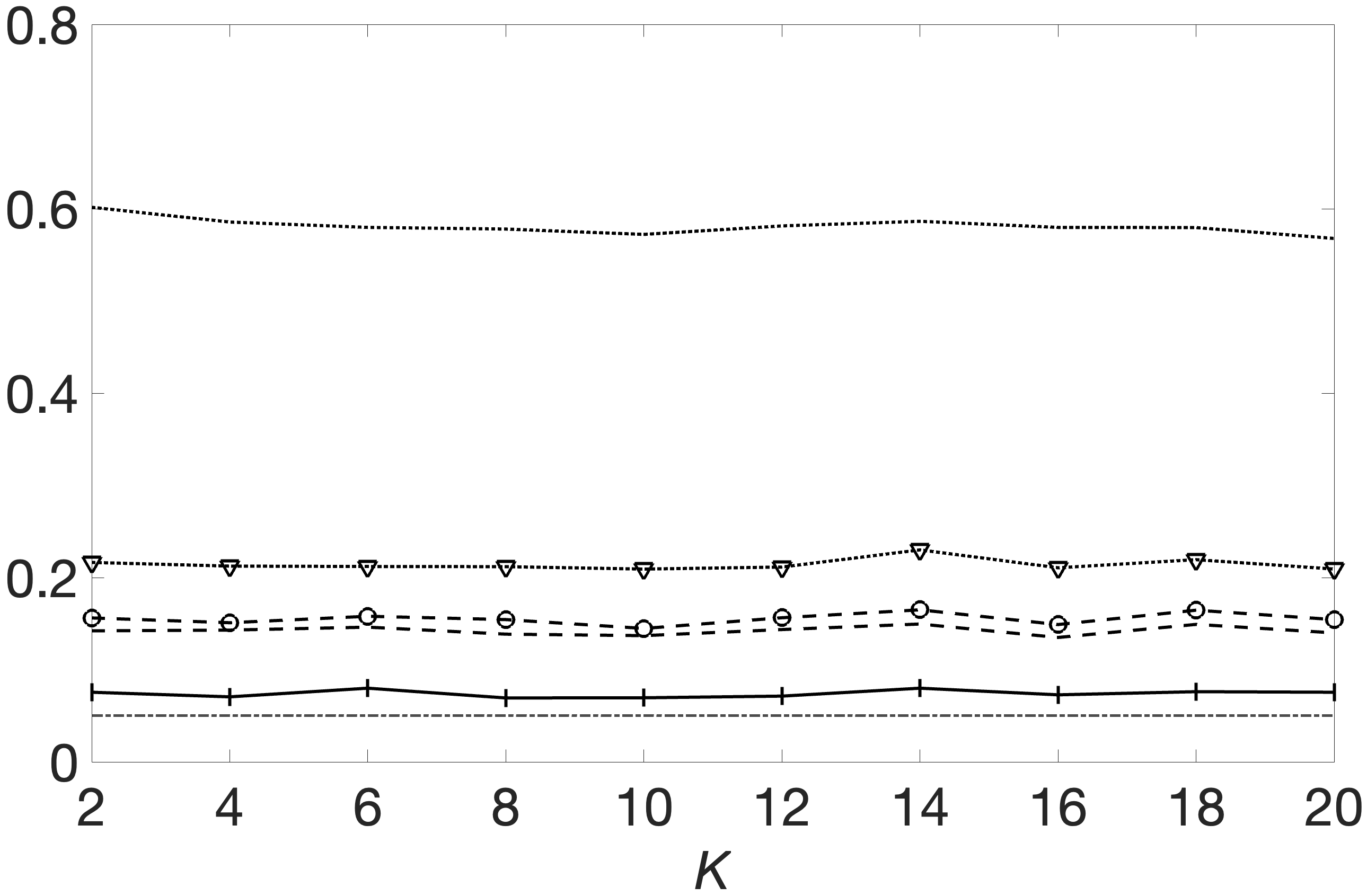}
\end{subfigure}
\begin{subfigure}{0.48\textwidth}
    \caption{Varying heteroskedasticity $\gamma$}
    \label{fig:rho30_gamma}
    \centering
    \includegraphics[width=\textwidth]{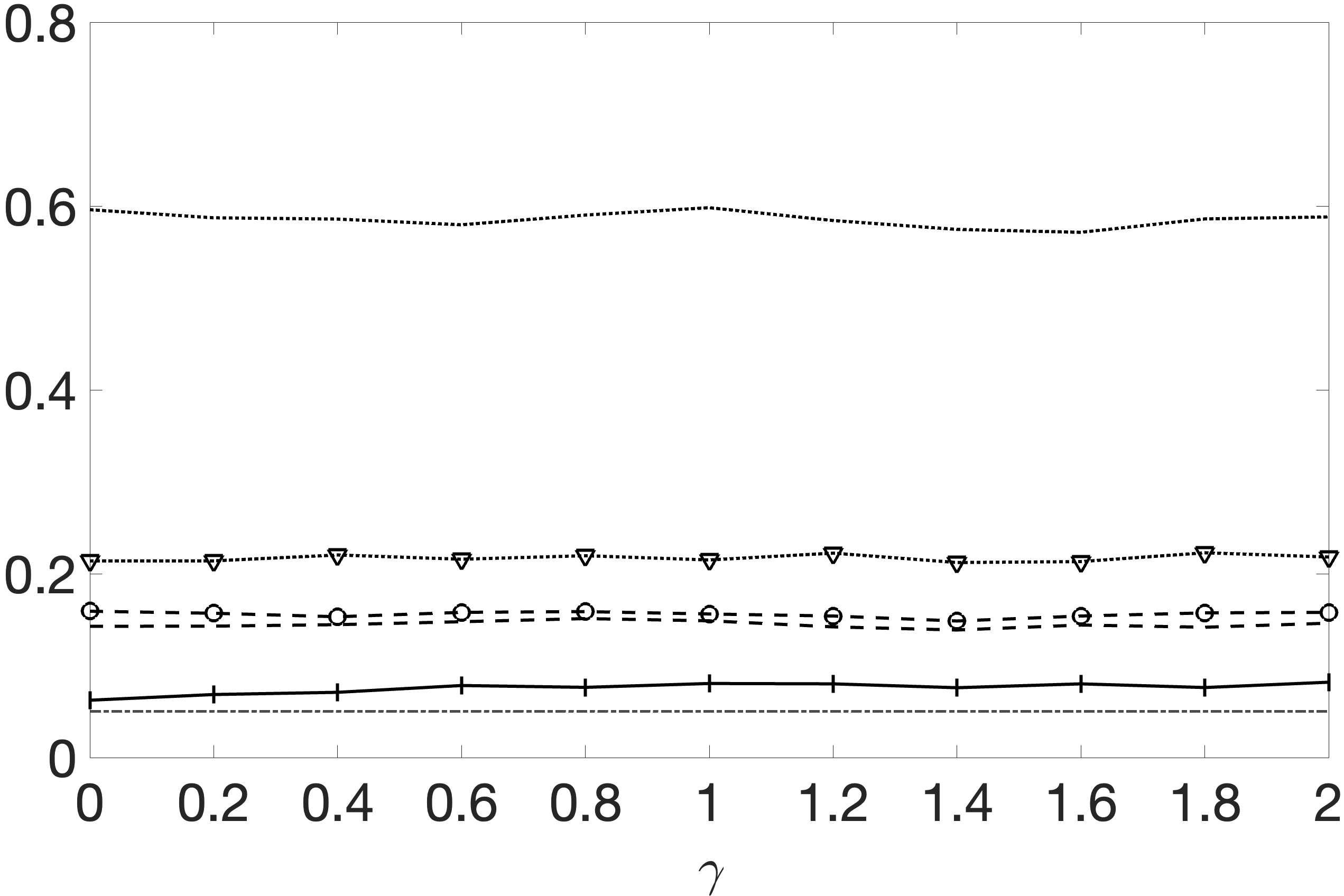}
\end{subfigure}
\caption{Rejection frequencies for dyadic inference methods under weak ordered-node dependence, $\rho=0.30$. The nominal significance level is $5\%$.}
\label{fig:main_rho30}
\end{figure}

\begin{figure}[h!]
\centering
\begin{subfigure}{0.48\textwidth}
    \caption{Varying dyadic dependence strength $\omega$}
    \label{fig:rho70_omega}
    \centering
    \includegraphics[width=\textwidth]{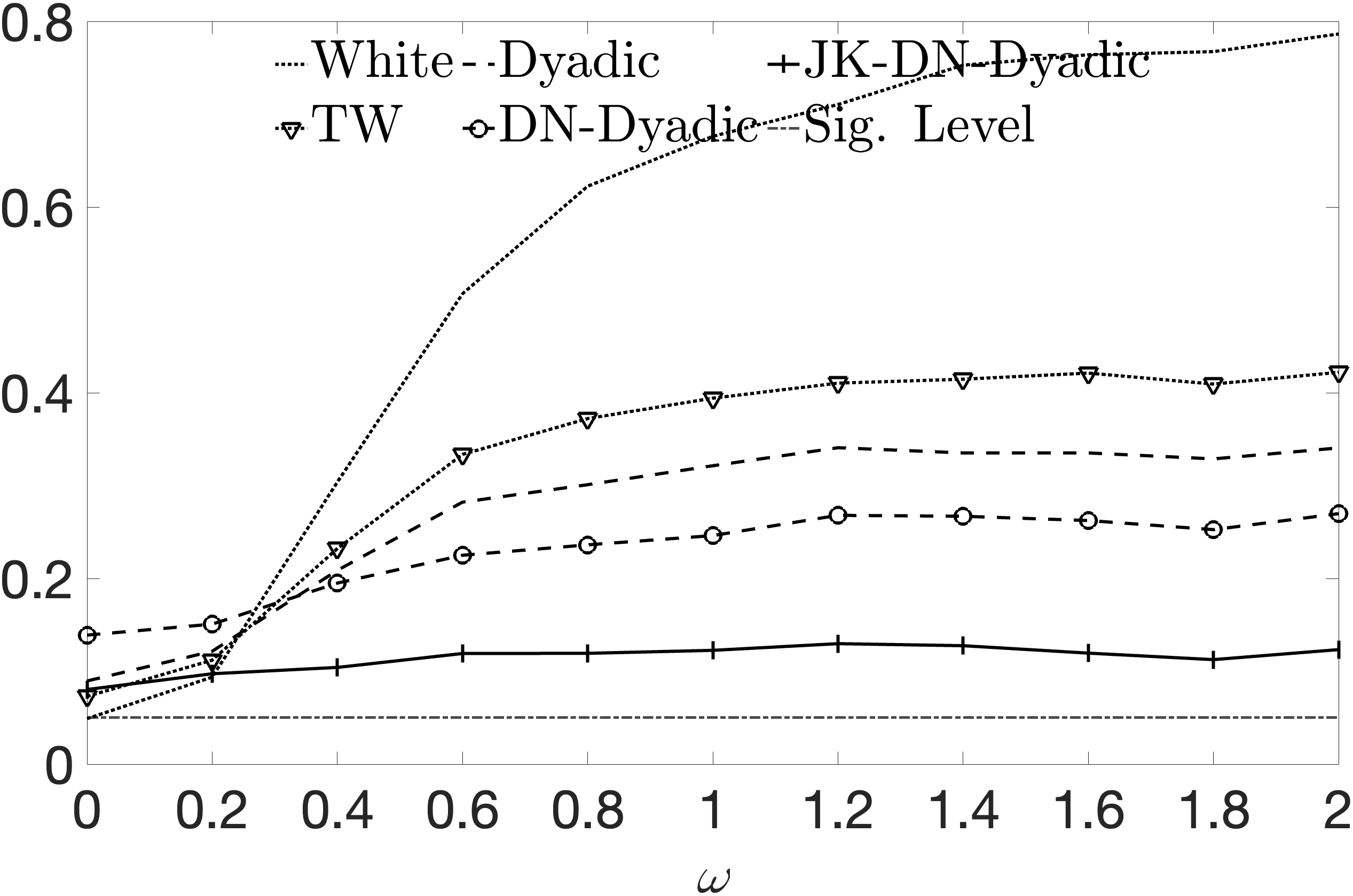}
\end{subfigure}
\begin{subfigure}{0.48\textwidth}
    \caption{Varying the number of nodes $n$}
    \label{fig:rho70_n}
    \centering
    \includegraphics[width=\textwidth]{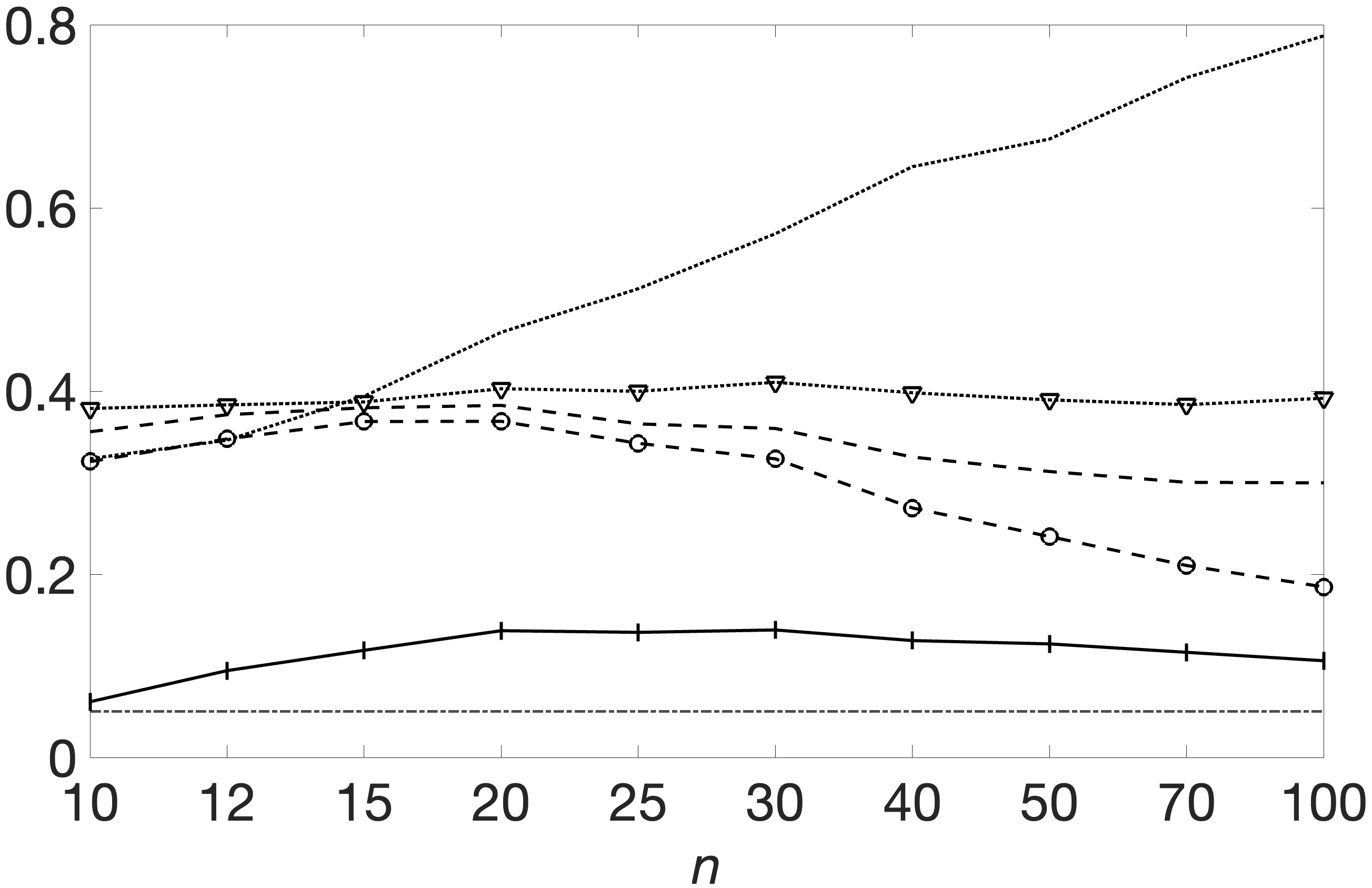}
\end{subfigure}

\begin{subfigure}{0.48\textwidth}
    \caption{Varying the number of regressors $K$}
    \label{fig:rho70_K}
    \centering
    \includegraphics[width=\textwidth]{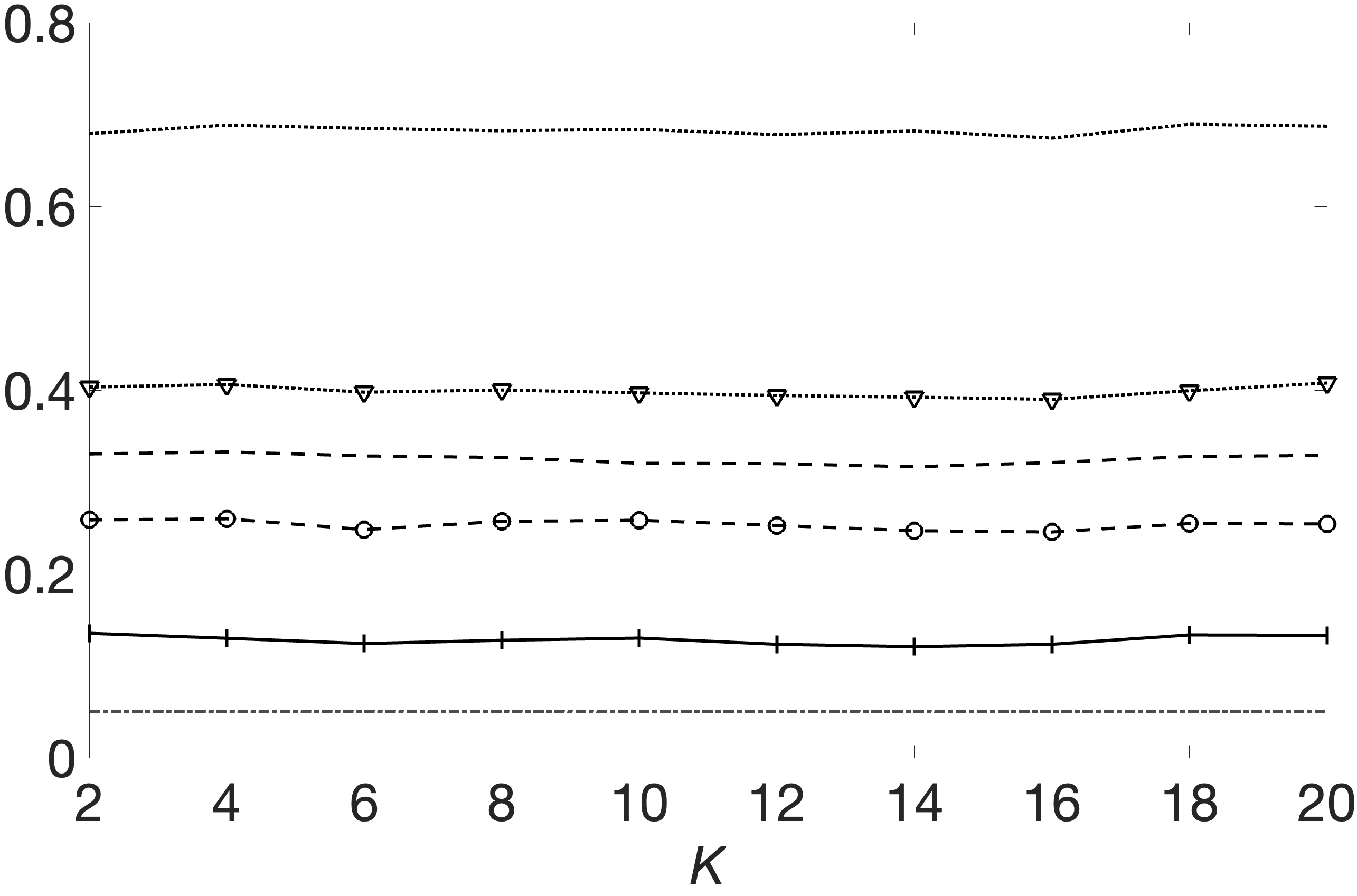}
\end{subfigure}
\begin{subfigure}{0.48\textwidth}
    \caption{Varying heteroskedasticity $\gamma$}
    \label{fig:rho70_gamma}
    \centering
    \includegraphics[width=\textwidth]{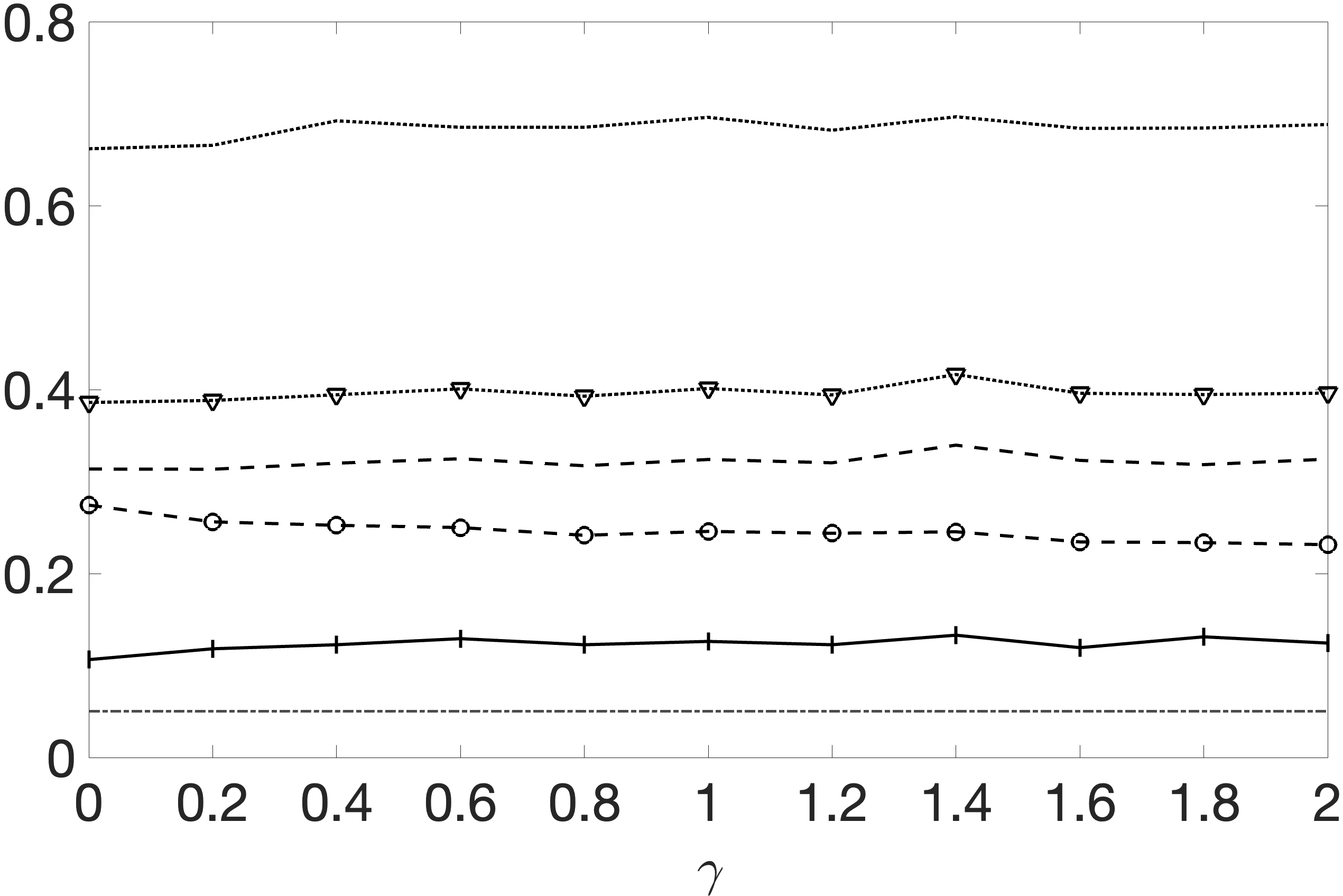}
\end{subfigure}
\caption{Rejection frequencies for dyadic inference methods under strong ordered-node dependence, $\rho=0.70$. The nominal significance level is $5\%$.}
\label{fig:main_rho70}
\end{figure}

Figures \ref{fig:main_rho30} and \ref{fig:main_rho70} repeat the same experiments under weak and strong ordered-node dependence, with \(\rho=0.30\) and \(\rho=0.70\), respectively. The qualitative patterns are similar to that in Figure \ref{fig:main_rho50}, but the inference problem becomes more difficult at \(\rho=0.70\). White, TW, and Dyadic exhibit more severe over-rejection. The DN-Dyadic estimator substantially improves upon Dyadic, especially as \(n\) increases, because the ordered-node dependence is stronger and accumulates over a larger number of nodes. The JK-DN-Dyadic estimator delivers the most robust size control among the methods considered, although it can still over-reject when the dependence is strong. Furthermore, the results in Figure \ref{fig:main_bandwidth} demonstrate that the selected bandwidth is relatively robust and adapts well to different settings.

Overall, the simulation evidence supports the main message of the paper.
When the node index is ordered, and nearby nodes are dependent, conventional two-way clustering or 
dyadic inference can be unreliable because dyads with no common node may
still be correlated. Accounting for ordered-node dependence improves size
control, and the row-column moving-block jackknife provides the most robust
finite-sample performance. We therefore recommend the JK-DN-Dyadic estimator as the default procedure for applications in which ordered-node dependence may be empirically meaningful.

\section{Empirical Illustration: Impact of Free Trade Agreements on Trade}
\label{sec:empirical_trade}

We use the proposed inference procedures to revisit a central question in international trade: do free trade agreements (FTAs) significantly increase bilateral trade? This question is both empirically important and policy relevant. FTAs are among the most widely used policy instruments for reducing trade barriers, strengthening economic integration, and reshaping global trade patterns. At the same time, their empirical effects remain actively debated, because countries do not enter FTAs randomly and because bilateral trade flows are subject to rich cross-country dependence; see, for example, \citet{baier2007free}, \citet{magee2008new}, and \citet{egger2008interdependent}. Gravity regressions provide the standard empirical framework for studying this question because they relate bilateral trade flows to trade costs, country-pair characteristics, and trade-policy variables.\footnote{The gravity specification follows the extensive empirical trade literature initiated by \citet{tinbergen1962shaping} and further developed by \citet{anderson1979theoretical}, \citet{anderson2003gravity}, and \citet{silva2006log}. Similar dyadic regression frameworks are widely used to study the determinants of bilateral trade flows and international economic integration.} We estimate the gravity model using the CEPII Gravity Database of \citet{conte2022cepii}. The sample consists of ($n=156$) countries observed from 1996 to 2000. To obtain a cross-sectional dyadic dataset, we average the variables over this period for each country pair.

 We order countries by their average GDP per capita. {The ordering is constructed from predetermined average GDP-per-capita measures rather than estimated from the regression residuals.}  Countries at similar levels of development may be exposed to similar global demand shocks, financial conditions, supply-chain disruptions, institutional constraints, and trade-policy environments. These common forces may induce dependence not only between dyads sharing a country, but also between dyads whose endpoint countries are close in the economic ordering. For example, trade flows among high-income economies may respond similarly to global financial conditions or supply-chain disturbances, even when the corresponding country pairs do not overlap.  {In the application, the data-driven bandwidth selector, implemented as described in Appendix \ref{app:implementation}, chooses \(L=7\), suggesting that the relevant dependence extends beyond exact country overlap.}

The dependent variable is \(y_{ij}=\log(1+\text{Manufacturing Trade}_{ij})\), where \(\text{Manufacturing Trade}_{ij}\) denotes the undirected BACI manufacturing trade flow between countries \(i\) and \(j\). We estimate the following gravity specification:
\[
        y_{ij}
        =
        \alpha_i+\alpha_j
        +
        \beta_1\text{FTA}_{ij}
        +
        \beta_2\text{Language}_{ij}
        +
        \beta_3\log(\text{Distance}_{ij})
        +
        \beta_4\text{Border}_{ij}
        +
        \beta_5\text{Sibling}_{ij}
        +
        u_{ij},
        \  i<j.
\]
Here, \(\alpha_i\) and \(\alpha_j\) are country fixed effects. The bilateral controls include a common official language indicator, log distance, a common-border indicator, and an indicator for whether the country pair ever shared the same colonizer.\footnote{Our objective is not to identify a causal effect of FTAs, but rather to illustrate how alternative dyadic
inference procedures affect statistical conclusions in a standard gravity framework.} Our primary parameter of interest is \(\beta_1\),  which measures the association between FTA coverage and bilateral manufacturing trade after controlling for country fixed effects and standard gravity covariates.\footnote{Country fixed effects are included to absorb country-level heterogeneity. Although the theoretical results are stated without explicitly modeling fixed effects, the empirical exercise applies the proposed inference procedure to the corresponding fixed-effect transformed estimating equation.
}

\begin{table}[!t]
\centering
\caption{Inference for the FTA coefficient in the manufacturing-trade gravity regression}
\label{tab:empirical_fta}
\begin{tabular}{lccccc}
\hline\hline
Estimate & White & TW & Dyadic & DN-Dyadic & JK-DN-Dyadic \\
\hline
0.1680          & 0.0125 & 0.0582 & 0.0767 & 0.1010 & 0.1198 \\
\hline\hline
\end{tabular}
\begin{flushleft}
\footnotesize
Notes: The table reports the estimated FTA coefficient and the corresponding \(p\)-values under different inference procedures. The dependent variable is \(\log(1+\texttt{manuf\_tradeflow\_baci})\). The regression includes country fixed effects, common language, log distance, common border, sibling-pair status, and the FTA indicator. The node ordering is based on countries' average GDP per capita. The selected bandwidth is \(L=7\), and the sample contains \(n=156\) countries.
\end{flushleft}
\end{table}

Table \ref{tab:empirical_fta} reports the estimated FTA coefficient and the corresponding \(p\)-values. The point estimate is positive, equal to 0.1680, which is consistent with the view that FTAs are associated with higher bilateral manufacturing trade. However, the statistical conclusion depends substantially on how cross-dyad dependence is handled. Under White standard errors, the \(p\)-value is 0.0125, suggesting a statistically significant FTA effect. Once dyadic dependence is taken into account, the evidence becomes weaker: the two-way \(p\)-value increases to 0.0582, and the conventional dyadic \(p\)-value increases to 0.0767. {The increase in estimated uncertainty becomes even more pronounced once ordered-node dependence across economically similar countries is incorporated.}  The \(p\)-value rises to 0.1010 under DN-Dyadic and to 0.1198 under JK-DN-Dyadic.
{The progressive increase in \(p\)-values across inference procedures indicates that accounting for richer dependence structures leads to substantially larger estimated standard errors.}

These results highlight the empirical relevance of node dependence in gravity applications. If shared-node or ordered-node dependence across economically similar countries is ignored, the evidence in favor of a statistically significant FTA effect appears stronger. {Overall, once both shared-node and ordered-node dependencies are accounted for, the statistical evidence in favor of a significant FTA effect becomes substantially weaker. Under the proposed JK-DN-Dyadic procedure, we claim that the estimated effect of FTAs on bilateral manufacturing trade flows is not statistically significant at the 10\% level.
}

\section{Conclusion}
\label{sec:conclusion}

This paper studies inference for dyadic regressions when the nodes are ordered
and the latent node shocks are weakly dependent along the node index. In this
setting, conventional dyadic clustering can be insufficient because two dyads
may remain correlated even when they do not share a node, provided that their
endpoint nodes are sufficiently close. The key observation is that the leading
component of the dyadic score behaves like a weakly dependent sequence indexed
by nodes. {Consequently, when such ordered-node dependence is present, valid inference must account not only for shared-node dependence, but also for local dependence along the ordered node index.}

{We propose two variance estimators. The first is a dependent-node dyadic CRVE
that retains covariance terms between dyads with nearby endpoints. The second
is a row-column moving-block jackknife that deletes adjacent blocks of nodes
together with all dyads touching the deleted block. This deletion scheme
preserves both shared-node dependence and ordered-node dependence. Under
standard moment and weak-dependence conditions, we show that both estimators
consistently estimate the asymptotic variance and deliver valid studentized
inference.}

{The Monte Carlo evidence supports the theory and suggests that the proposed row-column moving-block jackknife provides a reliable default procedure for dyadic applications with dependent ordered nodes. The
empirical illustration based on international trade gravity regressions further
shows that accounting jointly for shared-node dependence and ordered-node
dependence can substantially weaken the statistical evidence in favor of free
trade agreement effects on bilateral manufacturing trade flows.}

\clearpage
\appendix

\section{Proofs of Main Theorems}
\subsection{Proof of Theorem~\ref{thm:clt}}
\begin{proof}
We prove the two cases separately. Throughout the proof, write
\(
    s_{ij}=x_{ij}u_{ij}.
\)
Since
\(
    Q_n
    =
    \frac{1}{M_n}
    \sum_{(i,j)\in\mathcal{D}_{n}}x_{ij}x_{ij}',
\)
we have
\begin{equation}
    \widehat{\beta}-\beta
    =
    Q_n^{-1}
    \frac{1}{M_n}
    \sum_{(i,j)\in\mathcal{D}_{n}}s_{ij}.
    \label{eq:proof_ols_expansion}
\end{equation}
Applying Lemma~\ref{lem:uniform_deleted_block_lln} 
 with $\ell=0$ and continuous mapping theorem yields that   $ Q_n^{-1}\to^P Q^{-1}$, where $Q=E(x_{ij}x_{ij})$ is nonsingular.

From the projection decomposition,
\(
    s_{ij}
    =
    \gamma_i+\gamma_j+\xi_{ij}+\zeta_{ij},
\)
we can write
\begin{align}
    \frac{1}{M_n}
    \sum_{(i,j)\in\mathcal{D}_{n}}s_{ij}
    &=
    \frac{1}{M_n}
    \sum_{(i,j)\in\mathcal{D}_{n}}(\gamma_i+\gamma_j)
    +
    \frac{1}{M_n}
    \sum_{(i,j)\in\mathcal{D}_{n}}(\xi_{ij}+\zeta_{ij}) .
    \label{eq:proof_score_decomp_1}
\end{align}
Because each node $i$ appears in exactly $n-1$ dyads,
\(
    \sum_{(i,j)\in\mathcal{D}_{n}}(\gamma_i+\gamma_j)
    =
    (n-1)\sum_{i=1}^{n}\gamma_i.
\)
Since $M_n=n(n-1)/2$, it follows that \(
    \frac{1}{M_n}
    \sum_{(i,j)\in\mathcal{D}_{n}}(\gamma_i+\gamma_j)
    =
    \frac{2}{n}\sum_{i=1}^{n}\gamma_i.\)
Thus, we have 
\begin{equation}
    \frac{1}{M_n}
    \sum_{(i,j)\in\mathcal{D}_{n}}s_{ij}
    =
    \frac{2}{n}\sum_{i=1}^{n}\gamma_i
    +
    \frac{1}{M_n}
    \sum_{(i,j)\in\mathcal{D}_{n}}(\xi_{ij}+\zeta_{ij}).
    \label{eq:proof_score_decomp_2}
\end{equation}

We first prove case (i). Suppose $\Omega_{\gamma}>0$. By the mixing and
moment assumptions, the application of Theorem 14.15 of \citet{hansen2022econometrics} yields that
\begin{equation}
    \frac{1}{\sqrt n}
    \sum_{i=1}^{n}\gamma_i
    \Rightarrow
    N(0,\Omega_{\gamma}).
    \label{eq:proof_gamma_clt}
\end{equation}

We next show that the degenerate remainder is negligible at the
$n^{-1/2}$ rate. Recall that 
\(
    R_{ij}=\xi_{ij}+\zeta_{ij}.
\)
It is enough to show that
\begin{equation}
    E\left\|
    \sum_{(i,j)\in\mathcal D_n}R_{ij}
    \right\|^2
    =
    o(n^3).
    \label{eq:degenerate_second_moment_bound}
\end{equation}
Indeed, by Markov's inequality,
\(
    P\left(
    \left\|
    \frac{\sqrt n}{M_n}
    \sum_{(i,j)\in\mathcal D_n}R_{ij}
    \right\|>\varepsilon
    \right)
    \le
    \frac{n}{\varepsilon^2M_n^2}
    E\left\|
    \sum_{(i,j)\in\mathcal D_n}R_{ij}
    \right\|^2.
\)
Since $M_n=n(n-1)/2\asymp n^2$, \eqref{eq:degenerate_second_moment_bound}
implies
\begin{equation}
    \frac{n}{M_n^2}
    E\left\|
    \sum_{(i,j)\in\mathcal D_n}R_{ij}
    \right\|^2
    =
    o\left(\frac{n^4}{M_n^2}\right)
    =
    o(1).\label{eq: reminder ngl}
\end{equation}
It remains to justify \eqref{eq:degenerate_second_moment_bound}. By the
definition of the projection components,
\(
    E[\zeta_{ij}\mid Z_i,Z_j]=0.
\)
Hence, the dyad-level residual $\zeta_{ij}$ is conditionally mean zero given
the node variables. Since the dyad-specific shocks are independent across
dyads conditional on $\{Z_i\}$, for two distinct dyads $(i,j)\neq(p,q)$,
\(
    E[\zeta_{ij}\zeta_{pq}'\mid \{Z_r\}_{r\ge1}]=0.
\)
Thus
\begin{equation}
    E\left\|
    \sum_{(i,j)\in\mathcal D_n}\zeta_{ij}
    \right\|^2
    =
    \sum_{(i,j)\in\mathcal D_n}E\|\zeta_{ij}\|^2
    =o(n^3),\label{eq: zeta negligible}
\end{equation}
where the last equality follows from the moment condition.

For the second-order projection \(\xi_{ij}\), we seek to apply Lemma 2 of
\citet{yoshihara1976limiting}. Let
\(\delta_Y=2(\lambda+\delta)-2\) and
\(\delta_Y'=2\lambda-2\). Then
\(0<\delta_Y'<\delta_Y\), and
\[
        \frac{2+\delta_Y'}{\delta_Y'}
        =
        \frac{2\lambda}{2(\lambda-1)}
        =
        \frac{\lambda}{\lambda-1}
        <
        \frac{2\lambda}{\lambda-1}.
\]
Hence Assumption~\ref{ass:mixing} implies
\(\beta(h)=O(h^{-(2+\delta_Y')/\delta_Y'})\), which is the mixing-rate condition
required. It remains to verify the moment conditions.  By Jensen's
inequality and Cauchy-Schwarz,
\[
        E\|s_{ij}\|^{2(\lambda+\delta)}
        =
        E\|x_{ij}u_{ij}\|^{2(\lambda+\delta)}
        \le
        \left(E\|x_{ij}\|^{4(\lambda+\delta)}
        E|u_{ij}|^{4(\lambda+\delta)}\right)^{1/2}
        <\infty .
\]
Since \(\xi_{ij}\) is a finite linear combination of conditional expectations
of \(s_{ij}\), Jensen's inequality implies
\(
        \sup_{i<j}E\|\xi_{ij}\|^{2(\lambda+\delta)}<\infty .
\)
The above results verify conditions (2.3) and (2.4) of \citet{yoshihara1976limiting}. Therefore, by Lemma 2 of \citet{yoshihara1976limiting}, applied coordinatewise
to the vector-valued kernel \(\xi_{ij}\),
\[
        E\left\|
        \frac{1}{n(n-1)}
        \sum_{1\le i<j\le n}\xi_{ij}
        \right\|^2
        =
        O\left(n^{-1-\eta_\xi}\right),
        \qquad
        \eta_\xi
        =
        \frac{\delta}{(\lambda-1)(\lambda+\delta)}
        >0 .
\]
Equivalently, since \(M_n=n(n-1)/2\asymp n^2\),
\begin{equation}
    \sum_{(i,j)\in\mathcal D_n}
    \sum_{(p,q)\in\mathcal D_n}
    \left\|
    E[\xi_{ij}\xi_{pq}']
    \right\|
    =
    O\left(n^{3-\eta_\xi}\right)
    =
    o(n^3).
    \label{eq:xi_covariance_summability}
\end{equation}
It follows that
\(
    E\left\|
    \sum_{(i,j)\in\mathcal D_n}\xi_{ij}
    \right\|^2
    =
    O\left(n^{3-\eta_\xi}\right)
    =
    o(n^3),
\) as desired.

Using \eqref{eq:proof_score_decomp_2}, \eqref{eq:proof_gamma_clt}, \eqref{eq: reminder ngl},  and 
Slutsky's theorem, we obtain that 
\[
    \sqrt n(\widehat{\beta}-\beta)
    \Rightarrow
    N(0,4Q^{-1}\Omega_{\gamma}Q^{-1}).
\]
This proves \eqref{eq:clt_beta_gamma}.

\textcolor{black}{We now prove case (ii). Under Assumption~\ref{ass:variance}(ii),
\(
\operatorname{Var}(\gamma_i)=0.
\)
Since \(E(\gamma_i)=0\), it follows that
\(
\gamma_i=0
\)
almost surely. Therefore, the first-order projection term vanishes:
\[
\frac{n}{M_n}
\sum_{(i,j)\in\mathcal D_n}
(\gamma_i+\gamma_j)
=
2\sum_{i=1}^n\gamma_i
=
0
\quad \text{a.s.}
\]
Moreover, Assumption~\ref{ass:variance}(ii) also gives
\(
\operatorname{Var}(\xi_{ij})=0,
\)
so that
\(
\xi_{ij}=0
\)
almost surely. Hence, the only non-negligible component of the score average is
\(\zeta_{ij}\).}
 We next justify the CLT for the dyad-level residual component. To keep the
notation simple, we state the argument for a scalar contrast
$a'\zeta_{ij}$. The vector result follows by the Cramer-Wold device. Write
\[
    \zeta_{ij}^{a}=a'\zeta_{ij},
    \qquad
    S_{\zeta,n}^{a}
    =
    \frac{n}{M_n}\sum_{(i,j)\in\mathcal D_n}\zeta_{ij}^{a}.
\]
By construction,
\(
    E[\zeta_{ij}\mid Z_i,Z_j]=0.
\)
Moreover, conditional on the node variables $\mathcal Z_n=\{Z_i\}$,
the dyad-level shocks $\{Q_{ij}\}$ are independent across dyads. Therefore
$\{\zeta_{ij}^{a}:(i,j)\in\mathcal D_n\}$ are conditionally independent
given $\mathcal Z_n$, with conditional mean zero. Hence
\(
    E[S_{\zeta,n}^{a}\mid \mathcal Z_n]=0
\)
and
\begin{align}
    \operatorname{Var}(S_{\zeta,n}^{a}\mid \mathcal Z_n)
    &=
    \left(\frac{n}{M_n}\right)^2
    \sum_{(i,j)\in\mathcal D_n}
    E\!\left[(\zeta_{ij}^{a})^2\mid Z_i,Z_j\right].
    \label{eq:cond_var_zeta_1}
\end{align}
Since $M_n=n(n-1)/2$, we have $n/M_n=2/(n-1)$. Thus
\begin{equation}
    \operatorname{Var}(S_{\zeta,n}^{a}\mid \mathcal Z_n)
    =
    \frac{4}{(n-1)^2}
    \sum_{1\leq i<j\leq n}
    \sigma_{ij,a}^2,
    \qquad
    \sigma_{ij,a}^2
    =
    E\!\left[(\zeta_{ij}^{a})^2\mid Z_i,Z_j\right].
    \label{eq:cond_var_zeta_2}
\end{equation}
Grouping the dyads by their distance $h=j-i$ gives
\[
    \sum_{1\leq i<j\leq n}\sigma_{ij,a}^2
    =
    \sum_{h=1}^{n-1}\sum_{i=1}^{n-h}\sigma_{i,i+h,a}^2.
\]
By stationarity, for each lag $h$,
\[
    E(\sigma_{1,1+h,a}^{2})
    =
    v_{h,a},
    \qquad
    v_{h,a}
    =
    E\!\left[
    E\!\left((a'\zeta_{1,1+h})^2\mid Z_1,Z_{1+h}\right)
    \right].
\]
We now show that the sample average of the conditional variances can be
replaced by its expectation.  Let
\(
    W_{ij,a}
    =
    \sigma_{ij,a}^{2}-E(\sigma_{ij,a}^{2}),
    \ 
    R_n
    =
    \frac{4}{(n-1)^2}
    \sum_{1\le i<j\le n}W_{ij,a}.
\)
Then $E(R_n)=0$. Moreover,
\[
    Var(R_n)
    =
    \frac{16}{(n-1)^4}
    \sum_{1\le i<j\le n}
    \sum_{1\le p<q\le n}
    Cov(W_{ij,a},W_{pq,a}).
\]
Since $W_{ij,a}$ is measurable with respect to $(Z_i,Z_j)$, the mixing
and moment assumptions imply that
\[
    \left|Cov(W_{ij,a},W_{pq,a})\right|
    \le
    C\alpha\!\left(\Delta((i,j),(p,q))\right)^{1-\frac{1}{\lambda+\delta}},
\]
where
\(
    \Delta((i,j),(p,q))
    =
    \min\{|i-p|,|i-q|,|j-p|,|j-q|\}.
\)
For each dyad $(i,j)$, the number of dyads $(p,q)$ with
$\Delta((i,j),(p,q))=r$ is bounded by $Cn$. Hence
\[
    Var(R_n)
    \le
    \frac{C}{(n-1)^4}
    \sum_{1\le i<j\le n}
    \sum_{r=0}^{n} n\beta(r)^{1-\frac{1}{\lambda+\delta}}
    \le
    \frac{C}{n}
    \sum_{r=0}^{n}\beta(r)^{1-\frac{1}{\lambda+\delta}}
    =
    o(1),
\]
where the last step follows from $\sum_{r=0}^{n}\beta(r)^{1-\frac{1}{\lambda+\delta}}<\infty$ under Assumption~\ref{ass:mixing}. Therefore,
$R_n=o_p(1)$, and hence
\begin{align}
    \operatorname{Var}(S_{\zeta,n}^{a}\mid \mathcal Z_n)
    &=
    \frac{4}{(n-1)^2}
    \sum_{h=1}^{n-1}\sum_{i=1}^{n-h}\sigma_{i,i+h,a}^2 =
    \frac{4}{(n-1)^2}
    \sum_{h=1}^{n-1}(n-h)v_{h,a}
    +o_p(1)
    \to
    \Omega_{\zeta,a},
    \label{eq:cond_var_zeta_limit}
\end{align}
where
$\Omega_{\zeta,a}
    =
    \lim_{n\to\infty}
    \frac{4}{(n-1)^2}
    \sum_{h=1}^{n-1}(n-h)v_{h,a}.$

It remains to verify the conditional Lindeberg condition. Let
$c_n=\varepsilon M_n/n$. The conditional Lindeberg term is
\[
    L_n
    =
    \left(\frac{n}{M_n}\right)^2
    \sum_{(i,j)\in\mathcal D_n}
    E\left[
    (\zeta_{ij}^{a})^2
    \mathbf 1\{|\zeta_{ij}^{a}|>c_n\}
    \mid Z_i,Z_j
    \right].
\]
We show that $L_n=o_p(1)$. By the tower property,
\(
    E(L_n)
    =
    \left(\frac{n}{M_n}\right)^2
    \sum_{(i,j)\in\mathcal D_n}
    E\left[
    (\zeta_{ij}^{a})^2
    \mathbf 1\{|\zeta_{ij}^{a}|>c_n\}
    \right].
\)
For any $x$ and any $c_n>0$,
\(
    x^2\mathbf 1\{|x|>c_n\}
    \le
    |x|^{2+\delta}c_n^{-\delta}.
\)
Therefore, we have
\[
    E(L_n)
    \le
    \left(\frac{n}{M_n}\right)^2
    \sum_{(i,j)\in\mathcal D_n}
    c_n^{-\delta}E|\zeta_{ij}^{a}|^{2+\delta}.
\]
By the stationarity conditions, the distribution is identical (but not independent) for every dyad $(i,j)$. Hence, the moment condition, the triangle inequality, and conditional Jensen’s inequality together imply that
$\sup_{i<j}E|\zeta_{ij}^{a}|^{2+\delta}<\infty$. It follows that
\[
    E(L_n)
    \le
    C
    \left(\frac{n}{M_n}\right)^2
    M_n
    \left(\frac{M_n}{n}\right)^{-\delta}
    =
    C\frac{n^2}{M_n}
    \left(\frac{n}{M_n}\right)^{\delta}
    =
    O(n^{-\delta})
    =
    o(1),
\]
because $M_n\asymp n^2$. Markov's inequality then gives
$L_n=o_p(1)$. Thus the conditional Lindeberg condition holds.

By the conditional Lindeberg-Feller CLT,
\[
    S_{\zeta,n}^{a}
    =
    \frac{n}{M_n}
    \sum_{(i,j)\in\mathcal D_n}a'\zeta_{ij}
    \Rightarrow
    N(0,\Omega_{\zeta,a}).
\]
Since this holds for every fixed $a$, the Cramer-Wold device and Slutsky's theorem give that
\[
    n(\widehat{\beta}-\beta)
    \Rightarrow
    N(0,Q^{-1}\Omega_{\zeta}Q^{-1}).
\]
This proves \eqref{eq:clt_beta_zeta}.
\end{proof}

\subsection{Proof of Theorem \ref{thm:hac_jk} for the DN-Dyadic CRVE}
\begin{proof}
    
We prove the consistency of the DN-Dyadic CRVE. The proof is given
separately for the nondegenerate first-order projection case and the
degenerate case.
By Lemma \ref{lem:uniform_deleted_block_lln},
\(Q_n\to^P Q\), where \(Q\) is positive definite.  Hence, by Theorem \ref{thm:clt} and Slutsky's Lemma,
it is enough to prove consistency of the DN meat under the
appropriate normalization:  $\frac{n}{M_n^2}\widehat{\Sigma}_{\mathrm{DN}}\to^P4\Omega_\gamma$ in the nondegenerate case and $\frac{n^2}{M_n^2}\widehat{\Sigma}_{\mathrm{DN}}\to^P\Omega_\zeta$. Recall the DN meat
\[
        \widehat\Sigma_{\mathrm{DN}}
        =
        \sum_{(i,j)\in\mathcal D_n}
        \sum_{(p,q)\in\mathcal D_n}
        k_L\{ \Delta((i,j),(p,q))\}
        \widehat s_{ij}\widehat s_{pq}' ,
\]
where \(k_L(h)=(1-|h|/L)_+\). We first replace the residual scores by
population scores:
\[
        \Sigma_{\mathrm{DN}}
        =
        \sum_{(i,j)\in\mathcal D_n}
        \sum_{(p,q)\in\mathcal D_n}
        k_L\{ \Delta((i,j),(p,q))\}
        s_{ij}s_{pq}'
\]

\paragraph{Reduce to $\Sigma_{\mathrm{DN}}$.}
 We want to show that replacing the population scores \(s_{ij}\) by the
residual scores \(\widehat s_{ij}\) does not affect the DN meat at the
relevant order. Write
\(
        \widehat s_{ij}
        =
        s_{ij}-x_{ij}x_{ij}'(\widehat\beta-\beta),\)
 \(    
        H_{ij}=x_{ij}x_{ij}'.
\)
Let
\(
        \delta_n=\widehat\beta-\beta.
\)
Then
\[
        \widehat s_{ij}\widehat s_{pq}'-s_{ij}s_{pq}'
        =
        -H_{ij}\delta_n s_{pq}'
        -
        s_{ij}\delta_n'H_{pq}'
        +
        H_{ij}\delta_n\delta_n'H_{pq}'.
\]
Therefore
\(
        \widehat\Sigma_{\mathrm{DN}}-\Sigma_{\mathrm{DN}}
        =
        -A_{1n}\delta_n'
        -
        \delta_n A_{1n}'
        +
        A_{2n},
\)
where, up to transposition of fixed-dimensional matrices,
\[
        A_{1n}
        =
        \sum_{(i,j)\in\mathcal D_n}
        \sum_{(p,q)\in\mathcal D_n}
        k_L\{\Delta((i,j),(p,q))\}
        H_{ij}s_{pq}',
\]
and
\[
        A_{2n}
        =
        \sum_{(i,j)\in\mathcal D_n}
        \sum_{(p,q)\in\mathcal D_n}
        k_L\{\Delta((i,j),(p,q))\}
        H_{ij}\delta_n\delta_n'H_{pq}.
\]
{It is enough to show that these terms are asymptotically negligible
relative to the order of the infeasible DN meat, which is of order
\(O_p(n^3)\) in the nondegenerate case and \(O_p(n^2)\) in the
degenerate case.}

We use the following simple counting fact. For each fixed dyad
\((p,q)\), the number of dyads \((i,j)\) satisfying
\(\Delta((i,j),(p,q))<L\) is at most \(C n L\). Indeed, one endpoint of
\((i,j)\) must lie within distance \(L\) of either \(p\) or \(q\), which
gives at most \(C L\) possible choices for that endpoint and at most \(n\)
choices for the other endpoint. Since there are \(M_n\asymp n^2\) dyads,
\[
        \sum_{(i,j)\in\mathcal D_n}
        \sum_{(p,q)\in\mathcal D_n}
        1\{\Delta((i,j),(p,q))<L\}
        =
        O(n^3L).
\]
Because \(0\le k_L(\cdot)\le 1\), the same bound applies to all weighted
sums below.

By the moment assumptions and the preceding counting bound,
\[
        \|A_{1n}\|
        =
        O_p(n^{3}L)
        \quad\text{in the nondegenerate case.}
\]
To see this, it suffices to bound the second moment:
\[
\begin{aligned}
        E\|A_{1n}\|^2
        &\le
        C
        \sum_{(i,j),(p,q)}
        \sum_{(i',j'),(p',q')}
        k_L\{\Delta((i,j),(p,q))\}
        k_L\{\Delta((i',j'),(p',q'))\}      \times
        \left\|
        E\left[
        H_{ij}s_{pq}'s_{p'q'}H_{i'j'}
        \right]
        \right\|.
\end{aligned}
\]
The effective number of non-negligible
terms is of order \(n^6L^2\), giving
\[
        E\|A_{1n}\|^2\le C n^6L^2,
        \qquad
        \|A_{1n}\|=O_p(n^{3}L).
\]
Similarly,
\[
        \|A_{2n}\|
        \le
        \|\delta_n\|^2
        \sum_{(i,j)}
        \sum_{(p,q)}
        k_L\{\Delta((i,j),(p,q))\}
        \|H_{ij}\|\|H_{pq}\|
        =
        O_p(\|\delta_n\|^2 n^3L).
\]

In the nondegenerate case, Theorem 1 gives
\(
        \delta_n=O_p(n^{-1/2}).
\)
Therefore
\(
        \|A_{1n}\delta_n'\|
        =
        O_p(n^{3}L)O_p(n^{-1/2})
        =
        O_p(n^{5/2}L),
\)
and
\(
        \|A_{2n}\|
        =
        O_p(n^{-1})O_p(n^3L)
        =
        O_p(n^2L).
\)
Since \(L=o(n^{1/2})\), one can deduce that 
\(
        \widehat\Sigma_{\mathrm{DN}}-\Sigma_{\mathrm{DN}}=o_p(n^3)
\)
in the nondegenerate case.

{In the degenerate case, Theorem~\ref{thm:clt} gives
\(
        \delta_n=\widehat\beta-\beta=O_p(n^{-1}).
\)
Since \(\gamma_i=0\) and \(\xi_{ij}=0\) almost surely under
Assumption~\ref{ass:variance}(ii), we have
\(
        s_{ij}=\zeta_{ij}.
\)
Moreover,
\(
        E(\zeta_{ij}\mid Z_i,Z_j)=0.
\)
Conditional on \(\mathcal Z_n\), the dyad shocks
\(\{Q_{ij}\}_{(i,j)\in\mathcal D_n}\) are independent. Hence, for two
distinct dyads \((i,j)\neq(p,q)\),
\[
        E(\zeta_{ij}\zeta_{pq}'\mid \mathcal Z_n)=0 .
\]
Therefore, when computing the conditional second moment of \(A_{1n}\),
all cross-product terms vanish, and only the squared terms remain.
}
\textcolor{black}{The DN kernel retains at most \(O(n^3L)\) local dyad pairs. Hence,
because the cross-products vanish conditionally, we have \(O(n^3L)\)
effective terms contribute to the conditional variance. Hence,
\[
        E\!\left(\|A_{1n}\|^2\mid \mathcal Z_n\right)
        =
        O_p(n^3L).
\]
By Markov's inequality,
\(
        \|A_{1n}\|
        =
        O_p(n^{3/2}L^{1/2}).
\)} Since \(\delta_n=O_p(n^{-1})\),
\[
        \|A_{1n}\delta_n'\|
        \le
        \|A_{1n}\|\,\|\delta_n\|
        =
        O_p(n^{3/2}L^{1/2})O_p(n^{-1})
        =
        O_p(n^{1/2}L^{1/2})
        =
        o_p(n^2).
\]
Next,
\[
        \|A_{2n}\|
        \le
        \|\delta_n\|^2
        \sum_{(i,j)\in\mathcal D_n}
        \sum_{(p,q)\in\mathcal D_n}
        k_L\{\Delta((i,j),(p,q))\}
        \|H_{ij}\|\|H_{pq}\|.
\]
The number of retained local dyad pairs is \(O(n^3L)\), and the moment
conditions imply that the average size of \(\|H_{ij}\|\|H_{pq}\|\) is
bounded in probability. Hence
\[
        \|A_{2n}\|
        =
        O_p(n^{-2})O_p(n^3L)
        =
        O_p(nL)
        =
        o_p(n^2),
\]
because \(L^2/n\to0\). Therefore,
\[
        \widehat\Sigma_{\mathrm{DN}}-\Sigma_{\mathrm{DN}}
        =
        o_p(n^2)
\]
in the degenerate case.

Thus, after the normalizations used below, the feasible DN meat based on
\(\widehat s_{ij}\) and the infeasible DN meat based on \(s_{ij}\) are
asymptotically equivalent. It is therefore enough to prove consistency
for \(\Sigma_{\mathrm{DN}}\). We use the projection decomposition
\begin{align*}        \Sigma_{\mathrm{DN}}
        &=
        \sum_{(i,j)\in\mathcal D_n}
        \sum_{(p,q)\in\mathcal D_n}
        k_L\{ \Delta((i,j),(p,q))\}
        s_{ij}s_{pq}'\\
      &=  \sum_{(i,j)\in\mathcal D_n}
        \sum_{(p,q)\in\mathcal D_n}
        k_L\{ \Delta((i,j),(p,q))\}(\gamma_i+\gamma_j+R_{ij})(\gamma_p+\gamma_q+R_{pq})
        \end{align*}

\paragraph{Case 1: nondegenerate first-order projection.}

Recall that
\(
        \Omega_\gamma
        =
        \sum_{h=-\infty}^{\infty}E(\gamma_1\gamma_{1+h}')
\)
is positive definite. In this case, the leading component of the dyadic
score sum is
\(
        (n-1)\sum_{i=1}^n\gamma_i .
\)
By the mixing and moment assumptions, the usual HAC estimator for the
ordered node process \(\{\gamma_i\}\) is consistent:
\[
        \widehat\Omega_{\gamma}
        =
        \frac1n\sum_{r=1}^n\sum_{s=1}^n
        k_L(|r-s|)\gamma_r\gamma_s'
        \to^P
        \Omega_\gamma .
\]
The endpoint-distance DN meat is the dyadic analogue of this HAC
estimator. 
Define
\(
        W_{rs}
        =
        \sum_{d\in\mathcal D_n:r\in d}
        \sum_{d'\in\mathcal D_n:s\in d'}
        k_L\{\Delta(d,d')\}.
\)
Then
\[
\begin{aligned}
\sum_{(i,j)\in\mathcal D_n}
 \sum_{(p,q)\in\mathcal D_n}
 k_L\{\Delta((i,j),(p,q))\}
 (\gamma_i+\gamma_j)(\gamma_p+\gamma_q)'   =
        \sum_{r=1}^n\sum_{s=1}^n
        W_{rs}\gamma_r\gamma_s' .
\end{aligned}
\]
For fixed \(r\) and \(s\), if \(d\) contains \(r\) and \(d'\) contains
\(s\), then \(\Delta(d,d')\le |r-s|\). The pairs for which
\(k_L\{\Delta(d,d')\}\neq k_L(|r-s|)\) are those in which one of the other endpoints
lies within distance \(L\) of the opposite dyad. There are at most
\(C nL\) such pairs, whereas the total number of dyad pairs containing
\(r\) and \(s\) is \((n-1)^2\), so for most pairs, the kernel weight equals $k_L(|r-s|)$. Hence, uniformly in \(r,s\),
\(
        \frac{W_{rs}}{(n-1)^2}
        =
        k_L(|r-s|)
        +
        O\left(\frac{L}{n}\right).
\)
Therefore,
\[
\begin{aligned}
&\frac{1}{n(n-1)^2}
\sum_{(i,j)\in\mathcal D_n}
\sum_{(p,q)\in\mathcal D_n}
k_L\{\Delta((i,j),(p,q))\}
(\gamma_i+\gamma_j)(\gamma_p+\gamma_q)'       \\
&\qquad =
        \frac1n
        \sum_{r=1}^n\sum_{s=1}^n
        k_L(|r-s|)\gamma_r\gamma_s'
        +
        O_p\left(\frac{L}{n}\right).
\end{aligned}
\]
Since \(L^2/n\to0\), the remainder is \(o_p(1)\).
Thus,
\[
        \frac{n}{M_n^2}
        \sum_{(i,j)\in\mathcal D_n}
        \sum_{(p,q)\in\mathcal D_n}
        k_L\{\Delta((i,j),(p,q))\}
        (\gamma_i+\gamma_j)(\gamma_p+\gamma_q)'
        \to^P
        4\Omega_\gamma .
\]
The degenerate remainder is negligible under the same normalization. To
see this, by \eqref{eq: zeta negligible} and \eqref{eq:xi_covariance_summability}, we have
\(
        \sum_{(i,j)\in\mathcal D_n}
        \sum_{(p,q)\in\mathcal D_n}
        \left\|
        E(R_{ij}R_{pq}')
        \right\|
        =
        o(n^3).
\)
It follows that
\[
        \frac{n}{M_n^2}
        \sum_{(i,j)\in\mathcal D_n}
        \sum_{(p,q)\in\mathcal D_n}
        k_L\{\Delta((i,j),(p,q))\}
        R_{ij}R_{pq}'
        =o_p({n^4}/{n^4}) =
        o_p(1).
\]
The corresponding cross terms between \(\gamma_i+\gamma_j\) and
\(R_{pq}\) are also \(o_p(1)\) by Cauchy-Schwarz. Hence
\[
        n\widehat V_{\mathrm{DN}}
        =
        Q_n^{-1}
        \left(
        \frac{n}{M_n^2}\widehat\Sigma_{\mathrm{DN}}
        \right)
        Q_n^{-1}
        \to^P
        4Q^{-1}\Omega_\gamma Q^{-1}.
\]
This is the asymptotic variance of \(\sqrt n(\widehat\beta-\beta)\).
Together with Theorem 1 and Slutsky's theorem, for every fixed nonzero
vector \(a\),
\(
        \frac{a'(\widehat\beta-\beta)}
        {\sqrt{a'\widehat V_{\mathrm{DN}}a}}
        \Rightarrow
        N(0,1).
\)

\paragraph{Case 2: degenerate first-order projection.}
\textcolor{black}{ We now consider the case in which the first-order projection is
degenerate. Under Assumption~\ref{ass:variance}(ii),
\[
        \operatorname{Var}(\gamma_i)=0,
        \qquad
        \operatorname{Var}(\xi_{ij})=0,
        \qquad
        \Omega_\zeta>0.
\]
Since \(E(\gamma_i)=0\), the condition
\(
        \operatorname{Var}(\gamma_i)=0
\)
implies
\(
        \gamma_i=0
\)
almost surely. Likewise,
\(
        \xi_{ij}=0
\)
almost surely. Therefore,
\[
        s_{ij}
        =
        \zeta_{ij}
\]
almost surely, and the score is asymptotically driven entirely by the
dyad-level residual component \(\zeta_{ij}\).
}
{Since
\[
        E(\zeta_{ij}\mid Z_i,Z_j)=0,
\]
and the dyad shocks \(\{Q_{ij}\}\) are independent across dyads
conditional on the node variables, the variables
\(
        \{\zeta_{ij}:(i,j)\in\mathcal D_n\}
\)
are conditionally independent given \(\{Z_i\}_{i=1}^n\).}
Therefore,
\[
        \operatorname{Var}
        \left(
        \sum_{(i,j)\in\mathcal D_n}\zeta_{ij}
        \,\middle|\,
        Z_1,\ldots,Z_n
        \right)
        =
        \sum_{(i,j)\in\mathcal D_n}
        E(\zeta_{ij}\zeta_{ij}'\mid Z_i,Z_j).
\]
Moreover, {because distinct dyad-level residuals are conditionally independent
given \(\mathcal Z_n\), all mixed fourth-order terms involving
disjoint dyad pairs vanish. Hence, only terms with matching dyad
indices contribute to the conditional second moment, and the total
number of such contributions is proportional to the number of retained
local dyad pairs, namely \(O(n^3L)\).} Hence, its conditional second moment is bounded by a
constant times the number of retained local dyad pairs:
\[
E\left[
\left\|
        \sum_{\substack{(i,j),(p,q)\in\mathcal D_n\\ (i,j)\ne(p,q)}}
        k_L\{\Delta((i,j),(p,q))\}
        \zeta_{ij}\zeta_{pq}'
\right\|^2
\middle| \mathcal Z_n
\right]
=
O_p(n^3L).
\]

Therefore,
\[
        \sum_{\substack{(i,j),(p,q)\in\mathcal D_n\\ (i,j)\ne(p,q)}}
        k_L\{\Delta((i,j),(p,q))\}
        \zeta_{ij}\zeta_{pq}'
        =
        O_p((n^3L)^{1/2}).
\]
 Thus, after multiplying by \(4/(n-1)^2\), one can deduce that
\[
        \frac{4}{(n-1)^2}
        \sum_{\substack{(i,j),(p,q)\in\mathcal D_n\\ (i,j)\ne(p,q)}}
        k_L\{\Delta((i,j),(p,q))\}
        \zeta_{ij}\zeta_{pq}'
        =
        o_p(1).
\]
Hence, only the squared part contributes to the limit:
\[
        \frac{4}{(n-1)^2}\Sigma_{\mathrm{DN}}
        =
        \frac{4}{(n-1)^2}
        \sum_{(i,j)\in\mathcal D_n}\zeta_{ij}\zeta_{ij}'
        +
        o_p(1).
\]
Grouping the dyads by their ordered-node distance \(h=j-i\), we have
\(
        \sum_{(i,j)\in\mathcal D_n}\zeta_{ij}\zeta_{ij}'
        =
        \sum_{h=1}^{n-1}
        \sum_{i=1}^{n-h}
        \zeta_{i,i+h}\zeta_{i,i+h}'.
\)

By the weak-dependence law of large numbers applied to the stationary
sequence, together with an argument similar to that used for
\eqref{eq:cond_var_zeta_limit}, we have

\[
        \frac{4}{(n-1)^2}\Sigma_{\mathrm{DN}}
        \to^P
        \Omega_\zeta,
\]
where
\(
        \Omega_\zeta
        =
        \lim_{n\to\infty}
        \frac{4}{(n-1)^2}
        \sum_{h=1}^{n-1}(n-h)v_h 
\)
 and 
\(
        v_h
        =
        E\left[
        E(\zeta_{1,1+h}\zeta_{1,1+h}'\mid Z_1,Z_{1+h})
        \right].
\)

Since
\(
        \widehat V_{\mathrm{DN}}
        =
        M_n^{-2}Q_n^{-1}\widehat\Sigma_{\mathrm{DN}}Q_n^{-1},
\)
we obtain
\(
        n^2\widehat V_{\mathrm{DN}}
        =
        Q_n^{-1}
        \left(
        \frac{n^2}{M_n^2}\widehat\Sigma_{\mathrm{DN}}
        \right)
        Q_n^{-1}.
\)
It follows that
\(
        n^2\widehat V_{\mathrm{DN}}
        \to^P
        Q^{-1}\Omega_\zeta Q^{-1}.
\)
This is the asymptotic variance of \(n(\widehat\beta-\beta)\) in the
degenerate case. By Theorem 1 and Slutsky's theorem, for every fixed
nonzero vector \(a\),
\(
        \frac{a'(\widehat\beta-\beta)}
        {\sqrt{a'\widehat V_{\mathrm{DN}}a}}
        \Rightarrow
        N(0,1).
\)

Combining the two cases proves the consistency of the DN-Dyadic CRVE.

\end{proof}

\subsection{Proof of Theorem \ref{thm:hac_jk} for the JK-DN-Dyadic CRVE}
\begin{proof}

We prove that the JK-DN-Dyadic CRVE is asymptotically equivalent to the
DN-Dyadic CRVE. Since the DN-Dyadic CRVE has already been shown to be
consistent, it is enough to show that the jackknife estimator has the
same leading probability limit.

Recall the  population block score
\(
        G_{\ell}
        =
        \sum_{(i,j)\in A_{\ell}}s_{ij}.
\)
Let
\[
        \Psi^{\mathrm{JK}}
        =
        \frac1L\sum_{\ell=1}^{n-L+1} G_{\ell} G_{\ell}'
        -
        \sum_{(i,j)\in\mathcal D_n} s_{ij} s_{ij}' .
\]
\paragraph{Reduce to $(X'X)^{-1}\Psi^{\mathrm{JK}}(X'X)^{-1}$.} We first show that
\(
        \widehat V^{\mathrm{JK}}_{\mathrm{DN}}
        =
        (X'X)^{-1}\Psi^{\mathrm{JK}}(X'X)^{-1}
        +
        o_p(n^{-1})
\)
in the nondegenerate case, and
\(
        \widehat V^{\mathrm{JK}}_{\mathrm{DN}}
        =
        (X'X)^{-1}\Psi^{\mathrm{JK}}(X'X)^{-1}
        +
        o_p(n^{-2})
\)
in the degenerate case.

For each \(\ell\), write
\(
        X_{-\ell}'X_{-\ell}
        =
        \sum_{(i,j)\in\mathcal D_n\setminus A_{\ell}}x_{ij}x_{ij}'\) and 
       \( X_{-\ell}'y_{-\ell}
        =
        \sum_{(i,j)\in\mathcal D_n\setminus A_{\ell}}x_{ij}y_{ij}.
\)
Then
\(
        \widetilde\beta_{(-\ell)}
        =
        (X_{-\ell}'X_{-\ell})^{+}X_{-\ell}'y_{-\ell}.
\)
By the uniform deleted-block LLN in Lemma \ref{lem:uniform_deleted_block_lln},
\[
        \max_{1\le \ell\le n-L+1}
        \left\|
        \frac1{M_{n,-\ell}}X_{-\ell}'X_{-\ell}-Q
        \right\|
        =
        o(1), \  a.s.
\]
Hence, we have $P\left(\min_{1\le \ell \le n-L+1} \lambda_{\min} (X_{-\ell}'X_{-\ell})>0\right)=0$.
Consequently, by Lemma B.1 of \citet{hounyo2025jackknife}, we have that $n\widetilde\beta_{(-\ell)}=n\left(
    \sum_{(i,j)\in\mathcal{D}_{n,-\ell}}x_{ij}x_{ij}'
    \right)^{-1}
    \sum_{(i,j)\in\mathcal{D}_{n,-\ell}}x_{ij}y_{ij},\ a.s.$, where we replace the Moore-Penrose inverse by the conventional matrix inverse.  Then, by triangular inequality and equations (15) and (19) of  \citet{mackinnon2023fast}, one can deduce that 
\(
        \max_{\ell}
        \|
        \widetilde\beta_{(-\ell)}-\widehat\beta
        +(X'X)^{-1}\sum_{(i,j)\in A_\ell} x_{ij}\widetilde{u}_{ij}
        \|
        =
        o(n^{-1}),\ a.s., 
\)
where $\widetilde{u}_{ij}={y}_{ij}-x_{ij}'\widetilde{\beta}_{(-\ell)}$.

We next replace the leave-block residuals by the population errors. For
\((i,j)\in A_\ell\),
\[
        \widetilde u_{ij}
        =
        y_{ij}-x_{ij}'\widetilde\beta_{(-\ell)}
        =
        u_{ij}-x_{ij}'(\widetilde\beta_{(-\ell)}-\beta).
\]
Hence
\[
        \sum_{(i,j)\in A_\ell}x_{ij}\widetilde u_{ij}
        =
        G_\ell
        -
        H_\ell(\widetilde\beta_{(-\ell)}-\beta),
        \qquad
        H_\ell=\sum_{(i,j)\in A_\ell}x_{ij}x_{ij}',
\]
\begin{align}
      \widehat\beta-\widetilde\beta_{(-\ell)}=  (X'X)^{-1}
        \sum_{(i,j)\in A_\ell}x_{ij}\widetilde u_{ij}+o_p(1)
        =
        (X'X)^{-1}G_\ell
        -
        (X'X)^{-1}H_\ell(\widetilde\beta_{(-\ell)}-\beta)+o_p(1),\label{eq: beta-betaell}
\end{align}
uniformly in $\ell$. 
Since \(A_\ell\) contains \(O(nL)\) dyads, the moment conditions imply
\(\max_\ell\|H_\ell\|=O_p(nL)\). Also \(\|(X'X)^{-1}\|=O_p(n^{-2})\). Moreover, by
Lemma~\ref{lem:delete_block_beta_rate} which provides the uniform bound for $ \max_{\ell}
        \left\|
        \widetilde\beta_{(-\ell)}-\beta
        \right\|$ in two cases, one can deduce that
\[
        \max_\ell
        \left\|
        (X'X)^{-1}H_\ell(\widetilde\beta_{(-\ell)}-\beta)
        \right\|
        =
        \begin{cases}
        O_p(Ln^{-3/2}), & \text{in the nondegenerate case},\\
        O_p(Ln^{-2}), & \text{in the degenerate case}.
        \end{cases}
\]
The contribution of the second term on the right-hand side of \eqref{eq: beta-betaell} to the
jackknife quadratic average is negligible. Indeed,
\[
\begin{aligned}
        &\left\|
        \frac1L\sum_{\ell=1}^{n-L+1}
        (X'X)^{-1}H_\ell(\widetilde\beta_{(-\ell)}-\beta)
        (\widetilde\beta_{(-\ell)}-\beta)'H_\ell'(X'X)^{-1}
        \right\|
        \le
        \frac{n}{L}
        \max_\ell
        \left\|
        (X'X)^{-1}H_\ell(\widetilde\beta_{(-\ell)}-\beta)
        \right\|^2  \\
        &\qquad\qquad\qquad\qquad=
        \begin{cases}
        O_p(Ln^{-2})=o_p(n^{-1}), & \text{in the nondegenerate case},\\
        O_p(Ln^{-3})=o_p(n^{-2}), & \text{in the degenerate case}.
        \end{cases}
\end{aligned}
\]
Furthermore, as demonstrated in the next part, which does not rely on the results here,
\[
        \left\|
        \frac1L\sum_{\ell=1}^{n-L+1}
        (X'X)^{-1}G_\ell G_\ell'(X'X)^{-1}
        \right\|
        =
        \begin{cases}
        O_p(n^{-1}), & \text{in the nondegenerate case},\\
        O_p(n^{-2}), & \text{in the degenerate case}.
        \end{cases}
\]
Thus, by Cauchy-Schwarz, the two cross-product terms between
\((X'X)^{-1}G_\ell\) and
\((X'X)^{-1}H_\ell(\widetilde\beta_{(-\ell)}-\beta)\) are respectively
\(o_p(n^{-1})\) and \(o_p(n^{-2})\) in the two cases. Consequently,
\[
        \frac1L
        \sum_{\ell=1}^{n-L+1}
        (\widetilde\beta_{(-\ell)}-\widehat\beta)
        (\widetilde\beta_{(-\ell)}-\widehat\beta)'
        =
        (X'X)^{-1}
        \left(
        \frac1L
        \sum_{\ell=1}^{n-L+1}
        G_\ell G_\ell'
        \right)
        (X'X)^{-1}
        +
        r_n,
\]
where \(r_n=o_p(n^{-1})\) in the nondegenerate case and
\(r_n=o_p(n^{-2})\) in the degenerate case.

\paragraph{Compare \(\Psi^{\mathrm{JK}}\) with the DN meat.} It remains to compare \(\Psi^{\mathrm{JK}}\) with the DN meat. Thus, it is enough to study
\[
        \Psi^{\mathrm{JK}}
        =
        \frac1L\sum_{\ell=1}^{n-L+1}G_{\ell}G_{\ell}'
        -
        \sum_{(i,j)\in\mathcal D_n}s_{ij}s_{ij}'.
\]

We first consider the nondegenerate case. 
Recall that 
\(
        s_{ij}=\gamma_i+\gamma_j+R_{ij},\ 
        R_{ij}=\xi_{ij}+\zeta_{ij}.
\)
The contribution of \(R_{ij}\) is negligible under the \(n/M_n^2\)
normalization by the same degenerate-remainder argument used in the proof
of the DN-Dyadic CRVE. Therefore, we only need to analyze the first-order
projection.

Since \(A_{\ell}\) contains all dyads
touching \(B_{\ell}\),
\[
\begin{aligned}
        \sum_{(i,j)\in A_{\ell}}(\gamma_i+\gamma_j)
        =
        (n-1)\sum_{r\in B_{\ell}}\gamma_r
        +
        L\sum_{r\notin B_{\ell}}\gamma_r     =
        (n-1-L)\sum_{r\in B_{\ell}}\gamma_r
        +
        L\sum_{r=1}^{n}\gamma_r .
\end{aligned}
\] 
Since \(L^2/n\to0\), the second term is negligible after the
\(n/M_n^2\) normalization. Hence
\[
        \frac{n}{M_n^2}
        \frac1L\sum_{\ell=1}^{n-L+1}\sum_{(i,j)\in A_{\ell}}(\gamma_i+\gamma_j)\sum_{(i,j)\in A_{\ell}}(\gamma_i+\gamma_j)'
        =
        \frac{n(n-1-L)^2}{M_n^2}
        \frac1L
        \sum_{\ell=1}^{n-L+1}
        \left(
        \sum_{r\in B_{\ell}}\gamma_r
        \right)
        \left(
        \sum_{s\in B_{\ell}}\gamma_s
        \right)'
        +
        o_p(1).
\]
Now use the moving-block identity
\(
        \frac1L
        \sum_{\ell=1}^{n-L+1}
        \left(
        \sum_{r\in B_{\ell}}\gamma_r
        \right)
        \left(
        \sum_{s\in B_{\ell}}\gamma_s
        \right)'
        =
        \sum_{r=1}^{n}\sum_{s=1}^{n}
        \omega_{rs,L}\gamma_r\gamma_s',
\)
where
\(
        \omega_{rs,L}
        =
        \frac1L
        \sum_{\ell=1}^{n-L+1}
        1\{r\in B_{\ell}\}1\{s\in B_{\ell}\}.
\)
For interior indices,
\(
        \omega_{rs,L}
        =
        \left(1-\frac{|r-s|}{L}\right)_{+}
        =
        k_L(|r-s|).
\)
The only difference comes from boundary indices within distance \(L\) of
\(1\) or \(n\), which is negligible. Thus 
\[
        \frac1L
        \sum_{\ell=1}^{n-L+1}
        \left(
        \sum_{r\in B_{\ell}}\gamma_r
        \right)
        \left(
        \sum_{s\in B_{\ell}}\gamma_s
        \right)'
        =
        \sum_{r=1}^{n}\sum_{s=1}^{n}
        k_L(|r-s|)\gamma_r\gamma_s'
        +
        o_p(n).
\]
It follows that 
\[
        \frac{n}{M_n^2}\Psi^{\mathrm{JK}}
        =
        4\left[
        \frac1n
        \sum_{r=1}^{n}\sum_{s=1}^{n}
        k_L(|r-s|)\gamma_r\gamma_s'
        \right]
        +
        o_p(1)
        \to^P
        4\Omega_{\gamma}.
\]
The White  correction is negligible in this case because
\(
        \frac{n}{M_n^2}
        \sum_{(i,j)\in\mathcal D_n}s_{ij}s_{ij}'
        =
        O_p(n^{-1})
        =
        o_p(1).
\)
Thus
\[
        n\widehat V^{\mathrm{JK}}_{\mathrm{DN}}
        =
        Q_n^{-1}
        \left(
        \frac{n}{M_n^2}\widehat\Psi^{\mathrm{JK}}
        \right)
        Q_n^{-1}
        +
        o_p(1)
        \to^P
        4Q^{-1}\Omega_{\gamma}Q^{-1}.
\]
This is the same probability limit as \(n\widehat V_{\mathrm{DN}}\).

We now consider the degenerate case. In this case, the leading score is
\(\zeta_{ij}\). Conditional on \(\{Z_i\}_{i=1}^{n}\), the variables
\(\{\zeta_{ij}\}\) are independent across dyads and have conditional
mean zero. Hence, the cross-product  terms in
\(        \frac1L\sum_{\ell=1}^{n-L+1}G_{\ell}G_{\ell}'
\)
are negligible after the \(n^2/M_n^2\) normalization. The leading part is therefore the contribution of the squared term
\(
        \frac1L
        \sum_{\ell=1}^{n-L+1}
        \sum_{(i,j)\in A_{\ell}}\zeta_{ij}\zeta_{ij}'.
\) Hence, we have \(
        \frac{n^2}{M_n^2} \Psi^{\mathrm{JK}}
        =
        \frac{4}{(n-1)^2}  (\frac1L
        \sum_{\ell=1}^{n-L+1}
        \sum_{(i,j)\in A_{\ell}}\zeta_{ij}\zeta_{ij}'-\sum_{(i,j)\in\mathcal D_n} \zeta_{ij} \zeta_{ij}'
        )+o_p(1).
\)

{
For a fixed dyad \(d=(i,j)\), define
\[
        N_d
        =
        \sum_{\ell=1}^{n-L+1} 1\{d\cap A_{\ell}\neq \emptyset\},
\]
which counts the number of moving blocks whose deleted-node set contains at
least one endpoint of \(d\). If \(d\) is away from the boundary and
\(|i-j|\ge L\), then exactly \(L\) blocks contain node \(i\) and exactly
\(L\) blocks contain node \(j\), with no overlap. Hence,
\(
        N_d=2L.
\)
Therefore,
\[
        \frac{N_d}{L}-1=1.
\]
The identity can fail only for boundary dyads or dyads satisfying
\(|i-j|<L\). The number of such exceptional dyads is \(O(nL)\), whereas
\(M_n\asymp n^2\). By the moment assumptions,
\[
        \frac{4}{(n-1)^2}
        \sum_{\substack{(i,j)\in\mathcal D_n:\\
        (i,j)\ \text{boundary or } |i-j|<L}}
        \|\zeta_{ij}\zeta_{ij}'\|
        =
        O_p\!\left(\frac{L}{n}\right)
        =
        o_p(1).
\]
}

It follows that
\[
\begin{aligned}
        \frac{n^2}{M_n^2}\Psi^{\mathrm{JK}}
        =
        \frac{4}{(n-1)^2}
        \sum_{(i,j)\in\mathcal D_n}\zeta_{ij}\zeta_{ij}'
        +
        o_p(1)       \to^P
        \Omega_{\zeta}.
\end{aligned}
\]
This is the same probability limit as the normalized DN meat in the
degenerate case. Consequently,
\[
        n^2\widehat V^{\mathrm{JK}}_{\mathrm{DN}}
        =
        Q_n^{-1}
        \left(
        \frac{n^2}{M_n^2}\widehat\Psi^{\mathrm{JK}}
        \right)
        Q_n^{-1}
        +
        o_p(1)
        \to^P
        Q^{-1}\Omega_{\zeta}Q^{-1}.
\]
This is the same probability limit as \(n^2\widehat V_{\mathrm{DN}}\).

Combining the two cases, we have shown that
\(
        n(\widehat V^{\mathrm{JK}}_{\mathrm{DN}}-\widehat V_{\mathrm{DN}})\to^P0
\)
in the nondegenerate case, and
\(
        n^2(\widehat V^{\mathrm{JK}}_{\mathrm{DN}}-\widehat V_{\mathrm{DN}})\to^P0
\)
in the degenerate case. Since the DN-Dyadic CRVE is consistent in both
cases, the JK-DN-Dyadic CRVE is also consistent. Therefore, for every
fixed nonzero vector \(a\),
\(
        \frac{a'(\widehat\beta-\beta)}
        {\sqrt{a'\widehat V^{\mathrm{JK}}_{\mathrm{DN}}a}}
        \Rightarrow
        N(0,1).
\)

\end{proof}

\section{Technical Lemmas}
\begin{lemma}[Uniform deleted-block LLN for dyadic sample averages]
\label{lem:uniform_deleted_block_lln}
Under Assumptions of Theorem \ref{thm:hac_jk}. Let $\psi_{ij}=x_{ij}x_{ij}'$, 
\(
    \mu_{\psi}=E[\psi_{ij}],
\) 
\(
    M_{n,-\ell}
    =
    |\mathcal{D}_{n,-\ell}|
    =
    \frac{(n-L)(n-L-1)}{2},
\)
and define
\(
    \bar{\psi}_{n,-\ell}
    =
    \frac{1}{M_{n,-\ell}}
    \sum_{(i,j)\in\mathcal{D}_{n,-\ell}}
    \psi_{ij}.
\)
Then 
\begin{equation}
    \sup_{1\leq \ell\leq n-L+1}
    \left\|
    \bar{\psi}_{n,-\ell}-\mu_{\psi}
    \right\|
    =
    o(1),\  a.s.
    \label{eq:uniform_deleted_block_lln}
\end{equation}
\end{lemma}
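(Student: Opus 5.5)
The plan is to reduce the uniform statement to a full-sample strong law plus a deterministic bound on what a deleted block can remove. Write $S_n=\sum_{(i,j)\in\mathcal D_n}\psi_{ij}$ and $S_{\mathcal A_\ell}=\sum_{(i,j)\in\mathcal A_\ell}\psi_{ij}$. Then
\[
\bar\psi_{n,-\ell}-\mu_\psi
=\frac{M_n}{M_{n,-\ell}}\Bigl(\frac{S_n}{M_n}-\mu_\psi\Bigr)
-\frac{1}{M_{n,-\ell}}\bigl(S_{\mathcal A_\ell}-|\mathcal A_\ell|\mu_\psi\bigr),
\]
using $M_n-M_{n,-\ell}=|\mathcal A_\ell|$. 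Since $L^2/n\to0$, we have $M_n/M_{n,-\ell}\to1$ uniformly in $\ell$ and $|\mathcal A_\ell|\le nL$. It therefore suffices to prove (a) $S_n/M_n\to\mu_\psi$ a.s., and (b) $\max_\ell\|S_{\mathcal A_\ell}\|/n^2\to0$ a.s. The case $\ell=0$, with no deletion, is (a) alone; this is the case used in the proof of Theorem~\ref{thm:clt}.

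\textbf{Step (a).} Apply the same Hoeffding-type decomposition as for the score, now to $\psi_{ij}=h_\psi(Z_i,Z_j,Q_{ij})$:
\[
\psi_{ij}-\mu_\psi=g_i+g_j+\tilde\xi_{ij}+\tilde\zeta_{ij}.
\]
\begin{itemize}
\item The first-order part equals $\frac{2}{n}\sum_i g_i$. This converges to zero a.s. by the ergodic theorem, because a $\beta$-mixing stationary sequence is ergodic.
\item For $\tilde\zeta_{ij}$, which is conditionally independent across dyads given $\mathcal Z_n$ with mean zero, bound the fourth moment: $E\|\sum\tilde\zeta_{ij}\|^4=O(M_n^2)$, using the finite $4(\lambda+\delta)$ moments of $\psi_{ij}$ implied by Assumption~\ref{ass:projection}. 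Then $P(\|\sum\tilde\zeta_{ij}\|/M_n>\varepsilon)=O(n^{-4})$, which is summable, and Borel–Cantelli finishes.
\item For $\tilde\xi_{ij}$, Yoshihara's Lemma 2 (verified exactly as in the proof of Theorem~\ref{thm:clt}) gives second moment $O(n^{-1-\eta})$ for the normalized sum. This yields a.s. convergence along the subsequence $n_k=k^{2}$. The gaps between $n_k$ and $n_{k+1}$ are then controlled by the standard maximal-inequality/blocking argument. An alternative is to handle them through the envelope of (b), since adding nodes $n_k<n\le n_{k+1}$ only adds $O(n_k^{1/2}\cdot n_k)$ dyads.
\end{itemize}

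\textbf{Step (b), the main obstacle.} Here I would use a crude envelope rather than a delicate argument:
\[
\max_\ell\|S_{\mathcal A_\ell}\|\le\sum_{r=\ell}^{\ell+L-1}T_r,
\qquad T_r=\sum_{j\neq r}\|\psi_{rj}\|,
\]
so the left side is at most $L\max_r T_r$. For $T_r$, Markov's inequality with the $p=4(\lambda+\delta)>4$ moment, together with Rosenthal/Minkowski, gives $E T_r^{p}\le C n^{p}$. The union bound then gives
\[
P\bigl(\max_r T_r>\varepsilon n^2/L\bigr)\le n\cdot C n^{p}L^{p}/(\varepsilon n^2)^{p}=O\bigl(n^{1-p/2}L^{p}\bigr).
\]
With $L=o(n^{1/2})$ this is $o(n^{1-p/2+p/2})$, which is too weak, and this is where I expect trouble.

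The fix I would try first is to center. Write $T_r\le\sum_j(\|\psi_{rj}\|-E\|\psi_{rj}\|)+Cn$. The deterministic part contributes $LCn/n^2=L/n\to0$. The centered sum has $p$-th moment $O(n^{p/2})$ after a second Hoeffding split, since its conditional-on-$Z_r$ mean is a node average of mixing variables. This gives the bound
\[
n\cdot L^{p}n^{p/2}/n^{2p}=L^{p}n^{1-3p/2},
\]
which is summable for $p>4$.

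If that moment bookkeeping proves too fragile, fall back to proving (b) in probability. Convergence in probability is all that the subsequent arguments in the proofs of Theorems~\ref{thm:clt} and~\ref{thm:hac_jk} actually use.
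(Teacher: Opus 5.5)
Your route is valid, and it differs from the paper's in a useful way. You apply the full-minus-deleted split to the raw $\psi_{ij}$. The Hoeffding decomposition is then needed only for the full-sample strong law in (a), and the deletion is controlled by a single row-sum envelope $L\max_r T_r$. The paper instead decomposes first:
\begin{itemize}
\item the projection part is handled through the exact identity $\frac{2}{n-L}\sum_{i\in C_\ell}\eta_i$ together with $L\max_{i\le n}\|\eta_i\|=o(Ln^{1/q})$ a.s.;
\item the remainder is handled through $|\mathcal{D}^{c}_{n,\ell}|\max_{(i,j)}\|R_{ij}\|$ with $\max_{(i,j)}\|R_{ij}\|=o(n^{2/q})$ a.s.
\end{itemize}
Both maxima are obtained from a Borel--Cantelli maximal bound. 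The paper simply asserts the full-sample strong law for the degenerate component, whereas your (a) actually justifies it: fourth moments for $\tilde\zeta_{ij}$, and Yoshihara plus Borel--Cantelli for $\tilde\xi_{ij}$. In (a) you do not need the subsequence $n_k=k^2$ or any gap-filling. Chebyshev gives a bound of order $\varepsilon^{-2}n^{-1-\eta}$ for $P(\|\sum\tilde\xi_{ij}\|/M_n>\varepsilon)$, which is already summable over all $n$.

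Step (b), however, needs correcting. The obstacle you flag comes from an arithmetic slip, and the fix you propose rests on a false moment claim.

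\textbf{The slip.} The union bound gives $n\cdot Cn^{p}L^{p}/(\varepsilon n^{2})^{p}=C\varepsilon^{-p}n^{1-p}L^{p}$, not $n^{1-p/2}L^{p}$. Since $L\le n^{1/2}$ for large $n$, this is at most $C\varepsilon^{-p}n^{1-p/2}$. That is summable because $p=4(\lambda+\delta)>4$. So your crude envelope (Minkowski for $E T_r^p\le Cn^p$, then Markov, the union bound over $r\le n$, and Borel--Cantelli) already gives $L\max_{r\le n}T_r/n^2\to 0$ a.s., and (b) is finished.

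\textbf{The false claim.} The centered row sum $\sum_{j\ne r}(\|\psi_{rj}\|-E\|\psi_{rj}\|)$ does not have $p$-th moment $O(n^{p/2})$. All of its summands load on the same $Z_r$. It therefore contains $\sum_{j\ne r}\{E(\|\psi_{rj}\|\mid Z_r)-E\|\psi_{rj}\|\}$, a sum of $n-1$ mean-zero functions of the single variable $Z_r$. That sum is of exact order $n$ unless the first-order projection of $\|\psi_{rj}\|$ is degenerate. For example, in the simulation design $\|x_{rj}\|^2$ loads on $\|A^x_r\|^2$. The centering version survives only because the correct $O(n^{p})$ bound already suffices.

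\textbf{The fallback.} Proving (b) only in probability would not establish the lemma as stated, which is an almost-sure claim.
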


\begin{proof}
It is enough to prove the result after vectorizing $\psi_{ij}$ if
$\psi_{ij}$ is matrix-valued. Therefore, we write the proof using a norm
for fixed-dimensional vectors.

Observe that $\psi_{ij}$ admits the projection decomposition
\begin{equation}
    \psi_{ij}
    =
    \mu_{\psi}
    +
    \eta_i+\eta_j
    +
    R_{ij},
    \qquad
    R_{ij}:=\chi_{ij}+r_{ij},
    \label{eq:psi_projection_decomp}
\end{equation}
where $\eta_i=\int E(\psi_{ij}\vert Z_i,Z_j=z)dF(z)- \mu_{\psi}$, $\chi_{ij}=E(\psi_{ij}\vert Z_i,Z_j)-\eta_i-\eta_j+\mu_{\psi}$, and $r_{ij}=\psi_{ij}- E(\psi_{ij}\vert Z_i,Z_j)$.  By the decomposition \eqref{eq:psi_projection_decomp}, we have
\[
    \bar{\psi}_{n,-\ell}-\mu_{\psi}
    =
    \frac{1}{M_{n,-\ell}}
    \sum_{(i,j)\in\mathcal{D}_{n,-\ell}}
    (\eta_i+\eta_j)
    +
    \frac{1}{M_{n,-\ell}}
    \sum_{(i,j)\in\mathcal{D}_{n,-\ell}}
    R_{ij}.
\]
We study the two terms separately. 

First, consider the projection term. Let $C_{\ell}=\{1,\ldots,n\}\setminus B_{\ell}$. Since each node in $C_{\ell}$ appears
in exactly $n-L-1$ dyads of $\mathcal{D}_{n,-\ell}$,
\(
    \sum_{(i,j)\in\mathcal{D}_{n,-\ell}}(\eta_i+\eta_j)
    =
    (n-L-1)\sum_{i\in C_{\ell}}\eta_i.
\)
Because
\(
    M_{n,-\ell}
    =
    \frac{(n-L)(n-L-1)}{2},
\)
we get
\begin{equation}
    \frac{1}{M_{n,-\ell}}
    \sum_{(i,j)\in\mathcal{D}_{n,-\ell}}(\eta_i+\eta_j)
    =
    \frac{2}{n-L}\sum_{i\in C_{\ell}}\eta_i.
    \label{eq:projection_delete_block}
\end{equation}
Moreover,
\(
    \sum_{i\in C_{\ell}}\eta_i
    =
    \sum_{i=1}^{n}\eta_i
    -
    \sum_{i\in B_{\ell}}\eta_i.
\)
Hence,
\begin{align}
    \sup_{\ell}
    \left\|
    \frac{1}{n-L}\sum_{i\in C_{\ell}}\eta_i
    \right\|
    &\leq
    \frac{n}{n-L}
    \left\|
    \frac{1}{n}\sum_{i=1}^{n}\eta_i
    \right\|
    +
    \frac{1}{n-L}
    \sup_{\ell}
    \left\|
    \sum_{i\in B_{\ell}}\eta_i
    \right\| .
    \label{eq:projection_bound}
\end{align}
The first term on the right-hand side is $o(1)$ a.s. because
$n/(n-L)\to1$ and $n^{-1}\sum_{i=1}^{n}\eta_i=o(1)$ a.s. by SLLN.

For the second term, since each block has length $L$,
\(
    \sup_{\ell}
    \left\|
    \sum_{i\in B_{\ell}}\eta_i
    \right\|
    \leq
    L\max_{1\leq i\leq n}\|\eta_i\|.
\)
 We first show that, if $\max_{i\le n}E\|\eta_i\|^p<\infty$, for any \(q<p\),
\(
        \max_{1\le i\le n}\|\eta_i\|=o(n^{1/q})\  a.s.
\)
Indeed, by Markov's inequality and the union bound, for every \(\varepsilon>0\),
\[
P\left(
        \max_{1\le i\le 2^m}\|\eta_i\|>\varepsilon 2^{m/q}
\right)
\le
\sum_{i=1}^{2^m}
P\left(\|\eta_i\|>\varepsilon 2^{m/q}\right)
\le
C\varepsilon^{-p}2^{m(1-p/q)}.
\]
Since \(q<p\), the last bound is summable in \(m\). Hence, by the
Borel-Cantelli lemma,
\(
        \max_{1\le i\le 2^m}\|\eta_i\|=o(2^{m/q})\  a.s.
\) For any \(2^{m-1}<n\le 2^m\),
\(
\frac{\max_{1\le i\le n}\|\eta_i\|}{n^{1/q}}
\le
2^{1/q}
\frac{\max_{1\le i\le 2^m}\|\eta_i\|}{2^{m/q}},
\)
and therefore
\(
        \max_{1\le i\le n}\|\eta_i\|=o(n^{1/q})\  a.s.
\)
and it follows that
\[
    \frac{1}{n-L}
    \sup_{\ell}
    \left\|
    \sum_{i\in B_{\ell}}\eta_i
    \right\|
    \leq
    \frac{L}{n-L}
    \max_{1\leq i\leq n}\|\eta_i\|
    =
    o\left(L n^{-1+1/q}\right), \  a.s.
\]
Since $p=4(\lambda+\delta)>2$, we may choose $q$ such that
$2<q<p$. 
Because $L^2/n\to0$, equivalently $L=o(n^{1/2})$, it follows that
\[
L n^{-1+1/q}
=
o\!\left(
n^{-1/2+1/q}
\right)
=
o(1).
\]
Combining this with \eqref{eq:projection_bound} gives
\begin{equation}
    \sup_{\ell}
    \left\|
    \frac{1}{n-L}\sum_{i\in C_{\ell}}\eta_i
    \right\|
    =
    o(1),  \  a.s.
    \label{eq:projection_uniform_result}
\end{equation}
By \eqref{eq:projection_delete_block}, the projection contribution to
$\bar{\psi}_{n,-\ell}-\mu_{\psi}$ is $o(1)$ a.s. uniformly in $\ell$.

Second, consider the degenerate remainder term. Write
\(
    \mathcal D_{n,\ell}^{c}
    =
    \mathcal D_n\setminus \mathcal D_{n,-\ell}.
\)
Then
\[
    \frac{1}{M_{n,-\ell}}
    \sum_{(i,j)\in\mathcal D_{n,-\ell}}R_{ij}
    =
    \frac{M_n}{M_{n,-\ell}}
    \frac1{M_n}\sum_{(i,j)\in\mathcal D_n}R_{ij}
    -
    \frac{1}{M_{n,-\ell}}
    \sum_{(i,j)\in\mathcal D_{n,\ell}^{c}}R_{ij}.
\]
Therefore, by the triangle inequality,
\[
\begin{aligned}
    \sup_{\ell}
    \left\|
    \frac{1}{M_{n,-\ell}}
    \sum_{(i,j)\in\mathcal D_{n,-\ell}}R_{ij}
    \right\|
\leq
    \sup_{\ell}\frac{M_n}{M_{n,-\ell}}
    \left\|
    \frac1{M_n}\sum_{(i,j)\in\mathcal D_n}R_{ij}
    \right\|  +
    \sup_{\ell}
    \frac{1}{M_{n,-\ell}}
    \left\|
    \sum_{(i,j)\in\mathcal D_{n,\ell}^{c}}R_{ij}
    \right\|.
\end{aligned}
\]
Since \(L^2/n\to0\), we have \(M_{n,-\ell}\asymp M_n\asymp n^2\) uniformly over \(\ell\). Hence
\(
    \sup_{\ell}\frac{M_n}{M_{n,-\ell}}=O(1).
\)
By the strong law for the degenerate dyadic component,
\(
    \frac1{M_n}\sum_{(i,j)\in\mathcal D_n}R_{ij}
    \to 0,
    \  a.s.
\)
Thus the first term is \(o(1)\) a.s.

For the second term, \(\mathcal D_{n,\ell}^{c}\) contains only dyads involving at least one deleted node. Hence, uniformly over \(\ell\),
\(
    |\mathcal D_{n,\ell}^{c}|
    \leq C nL .
\)
Therefore,
\[
\begin{aligned}
    \sup_{\ell}
    \frac{1}{M_{n,-\ell}}
    \left\|
    \sum_{(i,j)\in\mathcal D_{n,\ell}^{c}}R_{ij}
    \right\|
    &\leq
    C\frac{nL}{n^2}
    \max_{(i,j)\in\mathcal D_n}\|R_{ij}\|  =
    C\frac{L}{n}
    \max_{(i,j)\in\mathcal D_n}\|R_{ij}\|.
\end{aligned}
\]
If \(E\|R_{ij}\|^p<\infty\) for some \(p>2\), then, for any \(q<p\),
\(
    \max_{(i,j)\in\mathcal D_n}\|R_{ij}\|
    =
    o(n^{2/q}),
    \  a.s.
\)
Hence,
\(
    C\frac{L}{n}
    \max_{(i,j)\in\mathcal D_n}\|R_{ij}\|
    =
    o\left(L n^{-1+2/q}\right),
    \  a.s.
\)
Provided that \(p=4(\delta+\lambda)>2\), we choose $q\in(2,p)$ such that $Ln^{-1+2/q}=o(n^{-1/2+2/q})=o(1)$. Therefore, 
\(
    \sup_{\ell}
    \frac{1}{M_{n,-\ell}}
    \left\|
    \sum_{(i,j)\in\mathcal D_{n,\ell}^{c}}R_{ij}
    \right\|
    =
    o(1),
    \  a.s.
\)
Combining the two bounds gives
\begin{align}
    \sup_{\ell}
    \left\|
    \frac{1}{M_{n,-\ell}}
    \sum_{(i,j)\in\mathcal D_{n,-\ell}}R_{ij}
    \right\|
    =
    o(1),
    \  a.s.\label{eq:remainder_uniform_result}
\end{align}

Combining \eqref{eq:projection_uniform_result} and
\eqref{eq:remainder_uniform_result}, we obtain
\(
    \sup_{1\leq \ell\leq n-L+1}
    \left\|
    \bar{\psi}_{n,-\ell}-\mu_{\psi}
    \right\|
    =
    o(1), \ a.s.
\)
This proves the lemma.
\end{proof}

\begin{lemma}[Uniform rate of the delete-block estimators]
\label{lem:delete_block_beta_rate}
Suppose the assumptions of Theorem~\ref{thm:hac_jk} hold. Then
\[
        \max_{1\le \ell\le n-L+1}
        \left\|
        \widetilde\beta_{(-\ell)}-\beta
        \right\|
        =
        \begin{cases}
        O_p(n^{-1/2}), & \text{in the nondegenerate case},\\[3pt]
        O_p(n^{-1}), & \text{in the degenerate case}.
        \end{cases}
\]
\end{lemma}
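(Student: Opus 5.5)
We write the delete-block estimator in score form and control the deleted-sample score uniformly in $\ell$. On the event from Lemma~\ref{lem:uniform_deleted_block_lln}, every $X_{-\ell}'X_{-\ell}$ is invertible for $n$ large. Hence
\[
\widetilde\beta_{(-\ell)}-\beta=\Bigl(\tfrac{1}{M_{n,-\ell}}X_{-\ell}'X_{-\ell}\Bigr)^{-1}\tfrac{1}{M_{n,-\ell}}\sum_{(i,j)\in\mathcal D_{n,-\ell}}s_{ij}.
\]
Lemma~\ref{lem:uniform_deleted_block_lln} also makes $\max_\ell\|(M_{n,-\ell}^{-1}X_{-\ell}'X_{-\ell})^{-1}\|=O(1)$ a.s. Since $M_{n,-\ell}\asymp n^2$ uniformly in $\ell$, it suffices to bound
\[
\max_\ell\Bigl\|\sum_{\mathcal D_{n,-\ell}}s_{ij}\Bigr\|.
\]
We need $O_p(n^{3/2})$ in the nondegenerate case and $O_p(n)$ in the degenerate case. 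We write $\sum_{\mathcal D_{n,-\ell}}s_{ij}=\sum_{\mathcal D_n}s_{ij}-G_\ell$. The full-sample term has exactly these orders by the proof of Theorem~\ref{thm:clt}. So everything reduces to the uniform bound on $\max_\ell\|G_\ell\|$.

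\textbf{Nondegenerate case.} We decompose $s_{ij}=\gamma_i+\gamma_j+R_{ij}$. As in the proof of Theorem~\ref{thm:hac_jk},
\[
\sum_{A_\ell}(\gamma_i+\gamma_j)=(n-1-L)\sum_{r\in B_\ell}\gamma_r+L\sum_{r=1}^n\gamma_r .
\]
The second piece is $O_p(L\sqrt n)=o_p(n)$. For the first piece we bound $\max_\ell\|\sum_{r\in B_\ell}\gamma_r\|$ by a maximal inequality. One option is $\max_\ell\|\sum_{B_\ell}\gamma_r\|\le 2\max_{k\le n}\|\sum_{r\le k}\gamma_r\|$, combined with a mixing maximal (Móricz/Billingsley-type) inequality, which gives $O_p(\sqrt n)$. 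This yields $O_p(n^{3/2})$, which is just enough. A sharper alternative uses a union bound with $p$-th moments of block sums, $E\|\sum_{B_\ell}\gamma_r\|^p\le CL^{p/2}$, giving $O_p(L^{1/2}n^{1/p})=o_p(\sqrt n)$. For the remainder we bound $\max_\ell\|\sum_{A_\ell}R_{ij}\|$ through $\sum_\ell E\|\sum_{A_\ell}R_{ij}\|^2$. By the covariance summability \eqref{eq:xi_covariance_summability} and conditional orthogonality of $\zeta$, each block sum has second moment $O(nL)$ up to the $\xi$ covariances. This gives $\max_\ell\|\cdot\|=O_p((n^2L)^{1/2})=o_p(n^{3/2})$ since $L=o(\sqrt n)$. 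For $\xi$ we apply Yoshihara's Lemma 2 bound restricted to the rows in $B_\ell$. That gives $E\|\sum_{A_\ell}\xi_{ij}\|^2=O(nL\cdot n^{1-\eta})$ or similar, and the union over $n$ blocks stays $o(n^3)$.

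\textbf{Degenerate case.} Here $s_{ij}=\zeta_{ij}$, and conditionally on $\mathcal Z_n$ these are independent with mean zero. Thus $E\|G_\ell\|^2=\sum_{A_\ell}E\|\zeta_{ij}\|^2=O(nL)$. Using the $p=4(\lambda+\delta)$-th moment and Rosenthal's inequality conditionally, $E\|G_\ell\|^p\le C(nL)^{p/2}$. A union bound over $n$ blocks then gives $\max_\ell\|G_\ell\|=O_p((nL)^{1/2}n^{1/p})$. This is $o_p(n)$ because $L=o(n^{1/2})$ and $p>2$.

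\textbf{Main obstacle.} The main obstacle is the uniform control of the block sums of the non-Gaussian interaction $\xi_{ij}$ under $\beta$-mixing in the nondegenerate case. Yoshihara's lemma is stated for the full U-statistic, not for sums over row-blocks. I would try the crude fallback first: the trivial bound $\max_\ell\|\sum_{A_\ell}\xi\|\le\|\sum_{\mathcal D_n}\xi\|+\max_\ell\|\sum_{\mathcal D_{n,-\ell}}\xi\|$ is circular, so instead I would bound each row sum $\sum_j\xi_{ij}$ via its second moment. Row sums of a marginally degenerate kernel have variance $O(n^{2-\epsilon})$ by the same Yoshihara-type covariance inequality. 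Summing $L$ rows and taking a union over $n$ blocks with higher moments then yields $o_p(n^{3/2})$.
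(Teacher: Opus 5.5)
\textbf{The gap: the $\xi$ part of the nondegenerate case.} Your reduction is the paper's: uniform invertibility from Lemma~\ref{lem:uniform_deleted_block_lln}, the split $\sum_{\mathcal D_{n,-\ell}}s_{ij}=\sum_{\mathcal D_n}s_{ij}-G_\ell$, and the full-sample orders from Theorem~\ref{thm:clt}. Your degenerate case is the paper's union-bound argument with $p$-th instead of fourth moments. There you need $p\ge 4$, not merely $p>2$, for $(nL)^{1/2}n^{1/p}=o(n)$; this holds because $\zeta_{ij}$ has $4(\lambda+\delta)>4$ moments. Your handling of the $\gamma$ and $\zeta$ parts of $G_\ell$ in the nondegenerate case is also sound.

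Neither route you offer for $\xi$ closes it:
\begin{itemize}
\item \emph{Block second moments plus union bound.} If $E\|\sum_{A_\ell}\xi_{ij}\|^2=O(nL\cdot n^{1-\eta})$, summing over the $n-L+1$ blocks gives order $n^{3-\eta}L$. That is $o(n^3)$ only when $L=o(n^{\eta})$. With $\lambda=2$, $\delta=1$ one has $\eta_\xi=1/3$, while Assumption~\ref{ass:bandwidth} allows $L=n^{1/2}/\log n$. So ``the union over $n$ blocks stays $o(n^3)$'' is false. The block-restricted version of Yoshihara's Lemma~2 would also still need a proof.
\item \emph{Row-sum fallback.} Let $T_r=\sum_{(i,j)\in\mathcal D_n:\,r\in\{i,j\}}\xi_{ij}$ with $\operatorname{Var}(T_r)=O(n^{2-\epsilon})$. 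At best $E\|\sum_{r\in B_\ell}T_r\|^2\le L^2n^{2-\epsilon}$, hence $O_p(Ln^{3/2-\epsilon/2})$ after the union bound. This is not $O_p(n^{3/2})$ unless $L=O(n^{\epsilon/2})$. Higher moments would need $p$-th moment bounds for degenerate row sums under $\beta$-mixing, which Yoshihara's second-moment lemma does not supply. Even then they help only if $\epsilon\ge 2/p$.
\end{itemize}

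\textbf{How the paper avoids it, and a repair.} The paper never decomposes $G_\ell$ in the nondegenerate case. It bounds $\max_\ell\|\sum_{A_\ell}s_{ij}\|$ by $O_p(nL)$ directly from $|A_\ell|=O(nL)$ and the moment condition, then uses $nL=o(n^{3/2})$. So the slack $L=o(n^{1/2})$, not cancellation in $\xi$, carries the argument. That bound is stated without detail; a naive union bound over rows gives $O_p(nL\,n^{1/p})$, so uniformity in $\ell$ needs some care there too.

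If you keep your decomposition, treat $\xi$ as you treated $\gamma$: replace the union over blocks by a maximal inequality for partial sums. Write
\[
\sum_{(i,j)\in A_\ell}\xi_{ij}=\bigl(U_{\ell+L-1}-U_{\ell-1}\bigr)-\sum_{(i,j)\in\mathcal D_n:\,i,j\in B_\ell}\xi_{ij},
\qquad U_k=\sum_{r\le k}T_r .
\]
The last sum is bounded uniformly in $\ell$ by $\sum_{(i,j)\in\mathcal D_n:\,j-i<L}\|\xi_{ij}\|=O_p(nL)=o_p(n^{3/2})$. Also $\max_\ell\|U_{\ell+L-1}-U_{\ell-1}\|\le 2\max_{k\le n}\|U_k\|$. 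Suppose you verify a Yoshihara-type bound $E\|U_k-U_m\|^2\le C(k-m)n^{2-\eta}$ for some $\eta>0$, by redoing his covariance count with one index of each dyad confined to $(m,k]$. Then M\'oricz's Rademacher--Menshov-type maximal inequality gives $E\max_{k\le n}\|U_k\|^2=O(n^{3-\eta}\log^2 n)=o(n^3)$. You pay a factor $\log^2 n$ instead of the factor $n$ lost in the union bound.
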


\begin{proof}
In the event that the deleted-block design matrices are uniformly full
rank, which has a probability approaching one by
Lemma~\ref{lem:uniform_deleted_block_lln}, we have
\[
        \widetilde\beta_{(-\ell)}-\beta
        =
        (X_{-\ell}'X_{-\ell})^{-1}
        \sum_{(i,j)\in\mathcal D_n\setminus A_\ell}s_{ij}.
\]
The same uniform deleted-block LLN and the positive definiteness of \(Q\)
imply that
\[
        \max_{1\le \ell\le n-L+1}
        \left\|
        (X_{-\ell}'X_{-\ell})^{-1}
        \right\|
        =
        O_p(M_{n,-\ell}^{-1})
        =
        O_p(n^{-2}).
\]
Hence
\[
        \max_{\ell}
        \left\|
        \widetilde\beta_{(-\ell)}-\beta
        \right\|
        \le
        O_p(n^{-2})
        \left(
        \left\|\sum_{(i,j)\in\mathcal D_n}s_{ij}\right\|
        +
        \max_{\ell}
        \left\|\sum_{(i,j)\in A_\ell}s_{ij}\right\|
        \right).
\]

In the nondegenerate case, Theorem~\ref{thm:clt} gives
\(
        \left\|\sum_{(i,j)\in\mathcal D_n}s_{ij}\right\|
        =
        O_p(n^{3/2}).
\)
Moreover, each \(A_\ell\) contains \(O(nL)\) dyads, so by the moment
condition,
\(
        \max_{\ell}
        \left\|\sum_{(i,j)\in A_\ell}s_{ij}\right\|
        =
        O_p(nL).
\)
Because \(L^2/n\to0\), we have \(L=o(n^{1/2})\), and therefore
\(nL=o(n^{3/2})\). Thus
\[
        \max_{\ell}
        \left\|
        \widetilde\beta_{(-\ell)}-\beta
        \right\|
        =
        O_p(n^{-2})O_p(n^{3/2})
        =
        O_p(n^{-1/2}).
\]

In the degenerate case, Theorem~\ref{thm:clt} gives
\(
        \left\|\sum_{(i,j)\in\mathcal D_n}s_{ij}\right\|
        =
        O_p(n).
\)
For the block score, the first-order node projection is absent. Conditional
on the node variables, the remaining dyad-level component has zero
cross-covariance across distinct dyads. Hence, by the moment condition,
\(
        E\left\|
        \sum_{(i,j)\in A_\ell}s_{ij}
        \right\|^4
        \le
        C(nL)^2
\)
uniformly in \(\ell\). Therefore, for any \(C_n\to\infty\),
\[
\begin{aligned}
        P\left(
        \max_{\ell}
        \left\|
        \sum_{(i,j)\in A_\ell}s_{ij}
        \right\|
        >
        C_n n
        \right)
        &\le
        \sum_{\ell=1}^{n-L+1}
        \frac{
        E\left\|
        \sum_{(i,j)\in A_\ell}s_{ij}
        \right\|^4
        }{C_n^4n^4}  \le
        \frac{C n(nL)^2}{C_n^4 n^4}
        =
        \frac{C L^2}{C_n^4 n}
        \to 0,
\end{aligned}
\]
because \(L^2/n\to0\). Thus
\(
        \max_{\ell}
        \left\|
        \sum_{(i,j)\in A_\ell}s_{ij}
        \right\|
        =
        O_p(n).
\)
Consequently,
\[
        \max_{\ell}
        \left\|
        \widetilde\beta_{(-\ell)}-\beta
        \right\|
        =
        O_p(n^{-2})O_p(n)
        =
        O_p(n^{-1}).
\]
\end{proof}

\section{Implementation details and Additional Simulation Results}

\label{app:implementation}

The simulation implementation uses the following key steps.

\paragraph{Node aggregation.}

Given residual scores $S(q,:)=X(q,:)\widehat{u}(q)$ and endpoints
$id1(q),id2(q)$, construct 
\[
\widehat{G}_{r}=\sum_{q}\ind\{id1(q)=r\text{ or }id2(q)=r\}S(q,:).
\]

\paragraph{Moving-block deletion.}

For $a=1,\ldots,n-L+1$, set 
\[
B_{a}=\{a,\ldots,a+L-1\},
\]
and keep only dyads satisfying $id1(q)\notin B_{a}$ and $id2(q)\notin B_{a}$.
The MATLAB implementation is 
\begin{verbatim}
num_blocks = n - block_len + 1;
Bdel = zeros(num_blocks,K);

for a = 1:num_blocks
    B = a:(a + block_len - 1);
    touch = ismember(id1,B) | ismember(id2,B);
    keep = ~touch;
    Xk = X(keep,:);
    yk = y(keep,:);
    Bdel(a,:) = (pinv(Xk'*Xk) * (Xk'*yk))';
end

D = Bdel - bhat';
V_jk = (D' * D) / block_len;
\end{verbatim}
This code deliberately uses ordinary overlapping moving blocks.

\paragraph{Bandwidth choice.}
 First center the node scores:
\[
    \widetilde{G}_{r}
    =
    \widehat{G}_{r}
    -
    \frac{1}{n}\sum_{\ell=1}^{n}\widehat{G}_{\ell}, \qquad r=1,\ldots,n. 
\]
For each lag $h\geq 1$, compute the componentwise sample autocorrelation
\[
    \widehat{\rho}_{k}(h)
    =
    \frac{
    \sum_{r=1}^{n-h}\widetilde{G}_{r,k}\widetilde{G}_{r+h,k}
    }{
    \left(
    \sum_{r=1}^{n-h}\widetilde{G}_{r,k}^{2}
    \right)^{1/2}
    \left(
    \sum_{r=1}^{n-h}\widetilde{G}_{r+h,k}^{2}
    \right)^{1/2}
    },
    \qquad k=1,\ldots,K,
\]
with the convention that the ratio is set to zero if the denominator is
zero. Let
\(
    \widehat{\rho}_{\max}(h)
    =
    \max_{1\leq k\leq K}|\widehat{\rho}_{k}(h)|.
\)
{The threshold rule searches over
\[
h=1,\ldots,h_{\max}-4,
\]
where
\(
h_{\max}
=
\left\lfloor n^{2/5}\right\rfloor,
\)
so that the consecutive-lag requirement below is always well defined.} 

 The selected lag is the first $h$ such that
\[
    \widehat{\rho}_{\max}(h)<c_n,\quad
    \widehat{\rho}_{\max}(h+1)<c_n, \quad\ldots,\quad
    \widehat{\rho}_{\max}(h+4)<c_n,
\]
{with the threshold
\(
    c_n=\sqrt{\frac{\log n}{n}}.
\)}
The bandwidth is then set to
\(
    L=h,
\)
and is truncated to lie in the interval
\(
    1\leq L\leq \left\lfloor n^{2/5}\right\rfloor .
\)

The rule selects \(L\) as a data-driven estimate of the effective dependence range along the ordered node index. The centered node scores \(\widetilde G_r\) summarize the dyadic scores attached to node \(r\). If ordered-node dependence is present, \(\widetilde G_r\) and \(\widetilde G_{r+h}\) should remain correlated for small \(h\), but become nearly uncorrelated once \(h\) exceeds the local dependence range. The threshold \(c_n\) treats autocorrelations of sampling-noise order as insignificant, and the consecutive-lag requirement avoids reacting to isolated noisy lags.

This bandwidth choice is adaptive because it directly answers the empirical question of up to which lag the ordered-node dependence remains important. It selects a small bandwidth in approximately i.i.d. settings and a larger bandwidth when dependence persists over longer lags, without imposing a parametric model for the decay of dependence. Thus, it is general and useful when the strength and range of ordered-node dependence are unknown.

In the simulation study, we also examine the sensitivity of the procedure to the bandwidth choice by setting \(\widetilde L=\sigma_L L\), where \(L\) is the data-driven bandwidth selected above. Figure \ref{fig:main_bandwidth} shows that, across different ordered-node dependence designs, the best performance of DN-Dyadic and JK-DN-Dyadic is generally obtained when \(\sigma_L\) is close to one. This suggests that the original bandwidth selector is reasonably robust and adapts well to different dependence settings.

\begin{figure}[t]
\centering
\begin{subfigure}{0.48\textwidth}
    \caption{Weak ordered-node dependence $\rho=0.30$}
    \label{fig:rho30_omega}
    \centering
    \includegraphics[width=\textwidth]{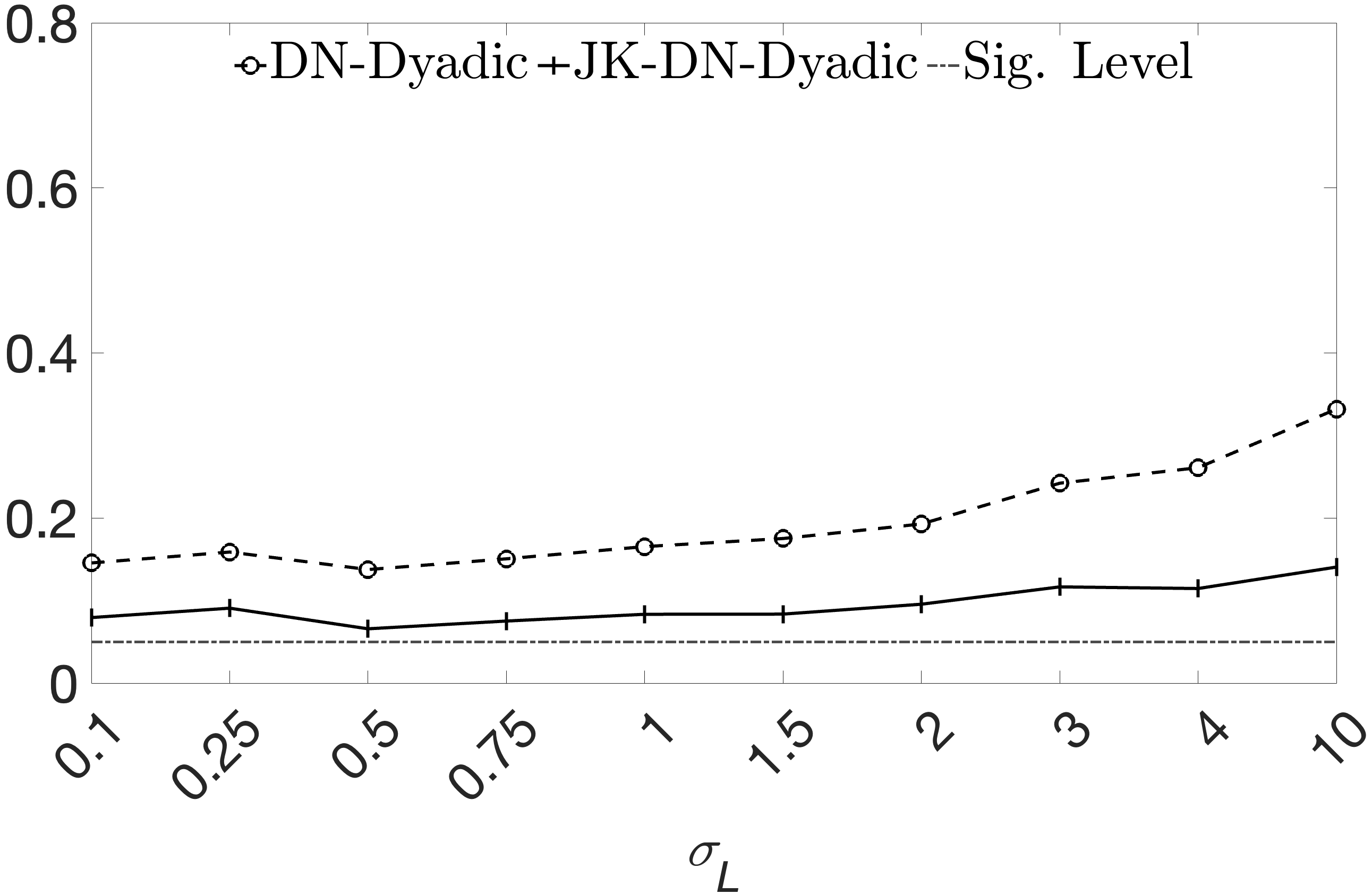}
\end{subfigure}

\begin{subfigure}{0.48\textwidth}
    \caption{Moderate ordered-node dependence $\rho=0.50$}
    \label{fig:rho30_K}
    \centering
    \includegraphics[width=\textwidth]{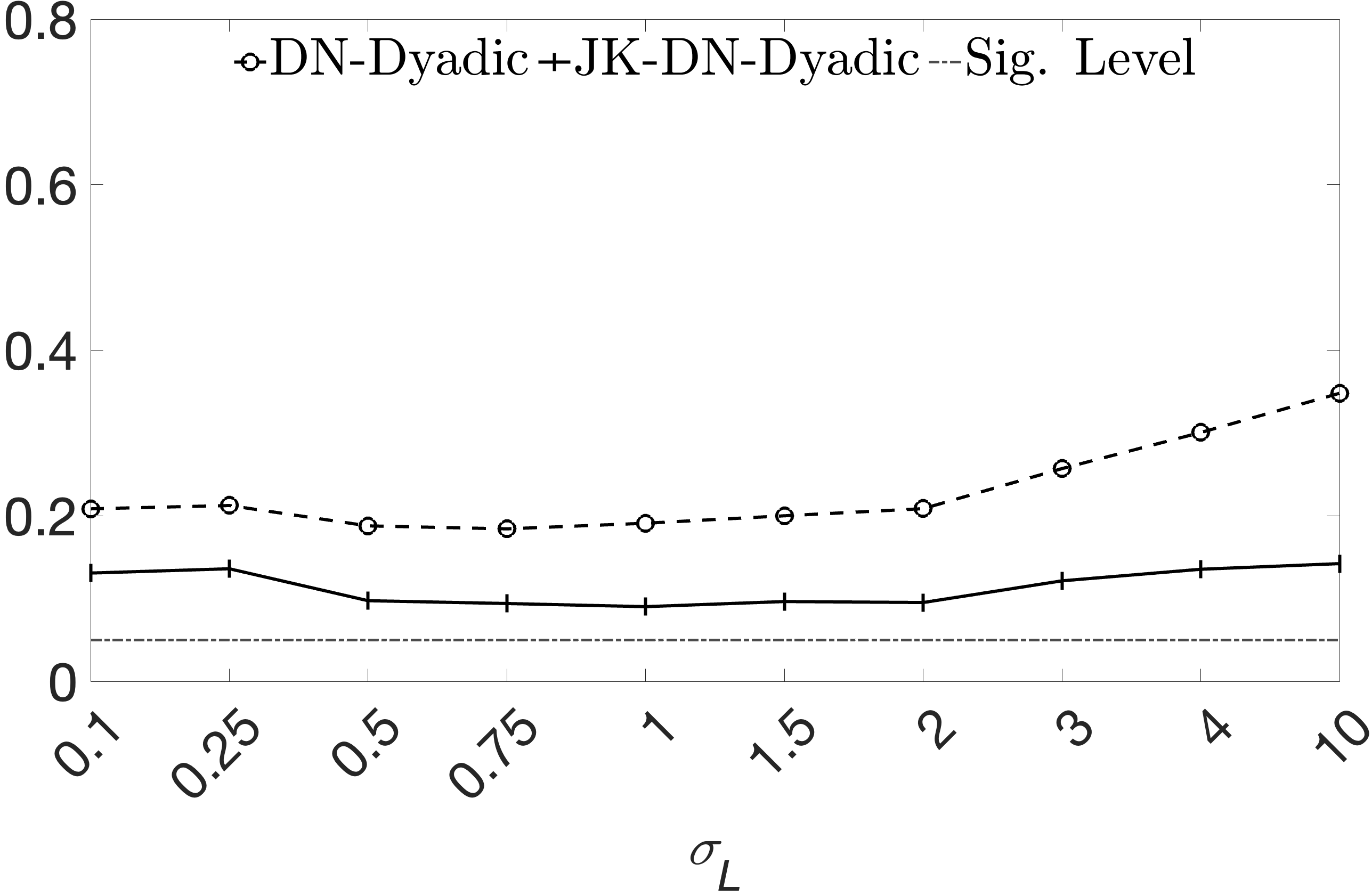}
\end{subfigure}
\begin{subfigure}{0.48\textwidth}
    \caption{Strong ordered-node dependence $\rho=0.70$}
    \label{fig:rho30_gamma}
    \centering
    \includegraphics[width=\textwidth]{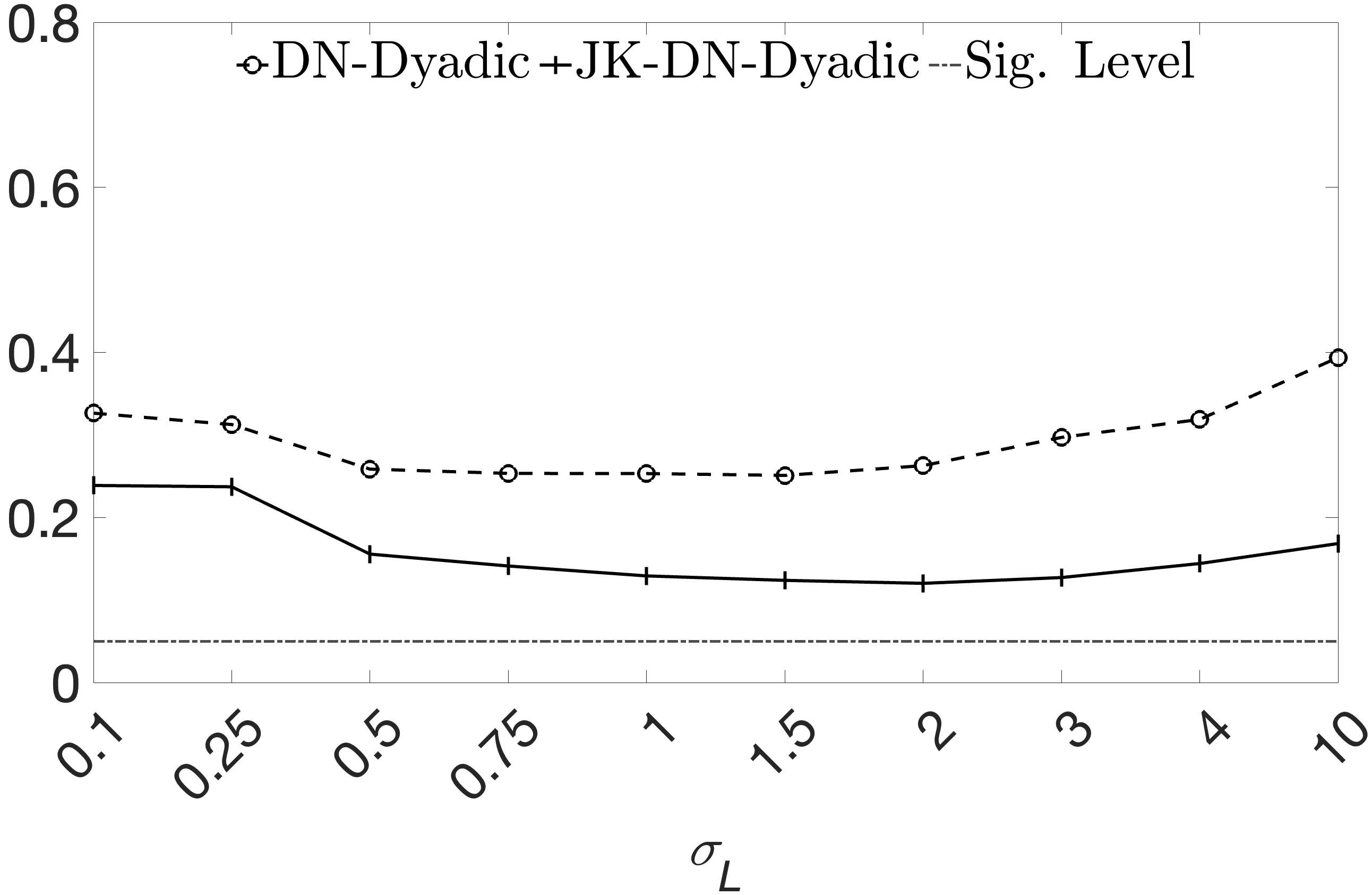}
\end{subfigure}
\caption{Rejection frequencies for dyadic inference methods under varying bandwidth $\widetilde{L}=\sigma_L L$. The nominal significance level is $5\%$.}
\label{fig:main_bandwidth}
\end{figure}

\paragraph{Detailed Simulation Results.} 
Tables \ref{tab: addition 1} and \ref{tab: addition 2} report the detailed additional simulation results. In addition to the main procedures discussed in the text, we include the homoskedastic iid variance estimator, one-way clustering, the procedure of \citet{jochmans2026two}, and the jackknife estimator without the double-counting correction, denoted by JK-DN-Dyadic no DC. Since the two one-way clustering estimators perform similarly, we report only one of them. Similarly, the non-studentized bootstrap performs almost identically to the HAC implementation, and hence only one set of results is reported.

The results confirm that, when the dependence is strong, incorporating more dependence structure in the variance estimation generally improves finite-sample performance. In particular, when both shared-node dependence and ordered-node dependence are present, DN-Dyadic no DC performs better than DN-Dyadic, but remains substantially less accurate than JK-DN-Dyadic. Similarly, JK-DN-Dyadic no DC can perform better than JK-DN-Dyadic in some strongly dependent designs. However, this improvement comes at a cost. When \(\omega\) is small and node dependence is weak, the two procedures without the double-counting correction become overly conservative, leading to rejection frequencies below the nominal level.

\begin{table}[t!]
{\centering \resizebox{\columnwidth}{!}{%
\begin{tabular}{lccccccccccc}
\hline \hline
\multicolumn{11}{c}{Panel A: Varying $\rho$}\tabularnewline
$\rho$  & 0.0  & 0.1  & 0.2  & 0.3  & 0.4  & 0.5  & 0.6 & 0.7 & 0.8 &0.9 \tabularnewline
IID &0.656 &0.661 &0.675 &0.695 &0.694 &0.708 &0.729 &0.748 &0.773 &0.802 \\
White &0.553 &0.562 &0.587 &0.600 &0.606 &0.623 &0.652 &0.686 &0.718 &0.754 \\
One-way CRVE &0.312 &0.309 &0.331 &0.348 &0.368 &0.415 &0.458 &0.521 &0.576 &0.661 \\
Two-way CRVE &0.181 &0.185 &0.202 &0.216 &0.242 &0.286 &0.332 &0.402 &0.474 &0.589 \\
Dyadic &0.113 &0.115 &0.137 &0.146 &0.170 &0.212 &0.255 &0.325 &0.410 &0.540 \\
DN-Dyadic &0.143 &0.142 &0.154 &0.159 &0.171 &0.192 &0.213 &0.245 &0.310 &0.404 \\
Jochmans/DN-Dyadic no DC&0.116 &0.113 &0.127 &0.132 &0.143 &0.163 &0.180 &0.218 &0.276 &0.365 \\
JK-DN-Dyadic &0.072 &0.071 &0.078 &0.079 &0.083 &0.090 &0.097 &0.127 &0.173 &0.279 \\
JK-DN-Dyadic no DC &0.057 &0.057 &0.064 &0.067 &0.069 &0.075 &0.082 &0.109 &0.157 &0.256 \\
\hline\hline
\end{tabular}} \caption{\textbf{Rejection frequencies for different methods.} Varying level of ordered-node dependence $\rho$. The nominal significance level is $5\%$.}
\label{tab: addition 1}} 
\end{table}

\begin{table}[t!]
{\centering \resizebox{\columnwidth}{!}{%
\begin{tabular}{lccccccccccc}
\hline \hline
\multicolumn{11}{c}{Panel B: Varying $\omega$}\tabularnewline
$\omega$  & 0.0  & 0.2  & 0.4  & 0.6  & 0.8  & 1.0  & 1.2 & 1.4 & 1.6 &1.8 & 2.0 \tabularnewline
IID &0.120 &0.162 &0.369 &0.539 &0.660 &0.700 &0.734 &0.763 &0.781 &0.790 &0.805 \\
White &0.051 &0.076 &0.251 &0.437 &0.571 &0.616 &0.655 &0.694 &0.707 &0.721 &0.738 \\
One-way id1 &0.064 &0.083 &0.208 &0.318 &0.387 &0.397 &0.416 &0.438 &0.440 &0.446 &0.467 \\
One-way id2 &0.061 &0.085 &0.210 &0.312 &0.389 &0.401 &0.417 &0.440 &0.434 &0.447 &0.462 \\
Two-way &0.072 &0.092 &0.174 &0.233 &0.268 &0.271 &0.290 &0.305 &0.291 &0.306 &0.322 \\
Dyadic &0.088 &0.104 &0.147 &0.182 &0.200 &0.193 &0.211 &0.225 &0.213 &0.220 &0.233 \\
DN-Dyadic &0.132 &0.146 &0.163 &0.175 &0.189 &0.173 &0.188 &0.199 &0.195 &0.207 &0.211 \\
Jochmans/DN-Dyadic no DC&0.017 &0.023 &0.067 &0.120 &0.147 &0.149 &0.164 &0.180 &0.179 &0.187 &0.197 \\
JK-DN-Dyadic &0.083 &0.087 &0.094 &0.094 &0.094 &0.085 &0.087 &0.091 &0.086 &0.088 &0.092 \\
JK-DN-Dyadic no DC &0.010 &0.012 &0.037 &0.064 &0.076 &0.072 &0.076 &0.081 &0.081 &0.082 &0.084 \\
\hline 
\multicolumn{11}{c}{Panel C: Varying $n$}\tabularnewline
$n$  & 0.0  & 0.1  & 0.2  & 0.3  & 0.4  & 0.5  & 0.6 & 0.7 & 0.8 &0.9 \tabularnewline
IID &0.293 &0.346 &0.408 &0.499 &0.577 &0.609 &0.684 &0.708 &0.761 &0.808 \\
White &0.312 &0.335 &0.352 &0.423 &0.491 &0.514 &0.598 &0.628 &0.686 &0.745 \\
One-way CRVE &0.360 &0.365 &0.359 &0.384 &0.406 &0.405 &0.417 &0.417 &0.394 &0.406 \\
Two-way CRVE &0.368 &0.370 &0.356 &0.332 &0.334 &0.318 &0.303 &0.284 &0.258 &0.259 \\
Dyadic &0.355 &0.355 &0.345 &0.303 &0.290 &0.265 &0.231 &0.209 &0.176 &0.176 \\
DN-Dyadic &0.320 &0.330 &0.345 &0.322 &0.305 &0.258 &0.223 &0.198 &0.147 &0.129 \\
Jochmans/DN-Dyadic no DC&0.245 &0.233 &0.206 &0.205 &0.211 &0.192 &0.177 &0.168 &0.131 &0.119 \\
JK-DN-Dyadic &0.044 &0.072 &0.082 &0.090 &0.107 &0.100 &0.096 &0.092 &0.085 &0.074 \\
JK-DN-Dyadic no DC &0.025 &0.045 &0.046 &0.062 &0.077 &0.075 &0.076 &0.078 &0.077 &0.067 \\
\hline 
\multicolumn{11}{c}{Panel D: Varying $K$}\tabularnewline
$K$  & 0.0  & 0.1  & 0.2  & 0.3  & 0.4  & 0.5  & 0.6 & 0.7 & 0.8 &0.9 \tabularnewline
IID &0.697 &0.708 &0.698 &0.715 &0.711 &0.701 &0.703 &0.708 &0.699 &0.700 \\
White &0.615 &0.631 &0.607 &0.625 &0.622 &0.617 &0.621 &0.622 &0.621 &0.625 \\
One-way CRVE &0.404 &0.413 &0.395 &0.410 &0.404 &0.405 &0.407 &0.419 &0.394 &0.400 \\
Two-way CRVE &0.278 &0.289 &0.271 &0.286 &0.282 &0.280 &0.278 &0.286 &0.270 &0.279 \\
Dyadic &0.204 &0.215 &0.200 &0.213 &0.204 &0.202 &0.204 &0.213 &0.199 &0.206 \\
DN-Dyadic &0.187 &0.196 &0.180 &0.194 &0.186 &0.184 &0.187 &0.196 &0.182 &0.185 \\
Jochmans/DN-Dyadic no DC&0.157 &0.164 &0.151 &0.165 &0.159 &0.156 &0.159 &0.168 &0.154 &0.157 \\
JK-DN-Dyadic &0.090 &0.093 &0.083 &0.092 &0.091 &0.090 &0.087 &0.094 &0.083 &0.089 \\
JK-DN-Dyadic no DC &0.077 &0.080 &0.067 &0.076 &0.075 &0.074 &0.074 &0.080 &0.071 &0.079 \\
\hline 
\multicolumn{11}{c}{Panel E: Varying $\gamma$}\tabularnewline
$\gamma$  & 0.0  & 0.2  & 0.4  & 0.6  & 0.8  & 1.0  & 1.2 & 1.4 & 1.6 &1.8 & 2.0 \tabularnewline
IID &0.613 &0.672 &0.701 &0.723 &0.719 &0.722 &0.716 &0.736 &0.736 &0.744 &0.736 \\
White &0.618 &0.618 &0.630 &0.635 &0.619 &0.621 &0.604 &0.628 &0.621 &0.623 &0.616 \\
One-way id1 &0.403 &0.396 &0.425 &0.413 &0.402 &0.404 &0.398 &0.411 &0.401 &0.407 &0.395 \\
One-way id2 &0.401 &0.395 &0.419 &0.413 &0.404 &0.398 &0.401 &0.411 &0.399 &0.406 &0.394 \\
Two-way &0.279 &0.276 &0.288 &0.288 &0.272 &0.279 &0.273 &0.278 &0.272 &0.276 &0.267 \\
Dyadic &0.198 &0.197 &0.213 &0.209 &0.198 &0.204 &0.198 &0.200 &0.199 &0.197 &0.197 \\
DN-Dyadic &0.191 &0.190 &0.198 &0.193 &0.178 &0.186 &0.171 &0.174 &0.176 &0.171 &0.168 \\
Jochmans/DN-Dyadic no DC&0.160 &0.158 &0.165 &0.161 &0.153 &0.154 &0.144 &0.147 &0.144 &0.146 &0.140 \\
JK-DN-Dyadic &0.071 &0.089 &0.092 &0.091 &0.088 &0.090 &0.085 &0.088 &0.087 &0.088 &0.080 \\
JK-DN-Dyadic no DC &0.061 &0.077 &0.079 &0.076 &0.078 &0.076 &0.072 &0.076 &0.074 &0.078 &0.069 \\
\hline\hline
\end{tabular}} \caption{\textbf{Rejection frequencies for different methods under moderate ordered-node dependence, $\rho=0.50$.} The nominal significance level is $5\%$.}
\label{tab: addition 2}} 
\end{table}

\bibliographystyle{chicago}
\bibliography{multiway_clustering}

\end{document}